\def\E{\mathbf{E}}
\def\e{\mathrm{e}}
\def\pib{\boldsymbol{\pi}}
\def\Tb{\mathbb{T}}
\def\Tc{\mathcal{T}}
\def\tb{\mathbf{t}}
\def\xb{\mathbf{x}}
\def\yb{\mathbf{y}}
\def\boxcore{\mathsf{box}\mbox{-}\mathsf{core}}
\def\norm#1{\left\|#1\right\|}
\def\dist#1#2{{\mathrm{dist}(#1,#2)}}
\newtheorem{theorem}{Theorem}
\newtheorem{lemma}[theorem]{Lemma}
\newtheorem{corollary}[theorem]{Corollary}
\newtheorem{definition}[theorem]{Definition}
\newcommand{\poly}{\mathrm{poly}}
\newcommand{\eps}{\epsilon}
\newcommand{\pb}{\mathbf{p}}
\newcommand{\Tmix}{T_{\mathrm{mix}}}
\newcommand{\Gc}{\mathcal{G}}
\def\diam{\mathrm{diam}}
\newcommand{\integers}{\mathbb{Z}}
\newcommand{\partialv}{\partial_{\mathtt{v}}}
\newtheorem*{lemarbexp}{Lemma~\ref{lem:arbexp}}
\newtheorem*{lemnumcon}{Lemma~\ref{lem:numcon}}
\newtheorem*{lemconrltwo}{Lemma~\ref{lem:conrltwo}}
\newtheorem*{lemarbexpreqtwo}{Lemma~\ref{lem:arbexpreqtwo}}
\newtheorem*{lemavgdegtwo}{Lemma~\ref{lem:avgdegtwo}}
\newtheorem*{lemconreqtwo}{Lemma~\ref{lem:conreqtwo}}
\newtheorem*{lembigexpansiontwo}{Lemma~\ref{lem:bigexpansiontwo}}
\begin{document}

\title{Random Walks on Small World Networks}

\author{Martin E. Dyer\thanks{University of Leeds, UK. Email: m.e.dyer@leeds.ac.uk.  Supported by EPSRC research grant EP/M004953/1. }  
\and Andreas Galanis\thanks{
University of Oxford, UK.  The research leading to these results has received funding from the European Research Council under 
  the European Union's Seventh Framework Programme (FP7/2007-2013) ERC grant agreement no.\ 334828. The paper 
  reflects only the authors' views and not the views of the ERC or the European Commission. 
  The European Union is not liable for any use that may be made of the information contained therein.} 
  \and Leslie Ann Goldberg$^\dag$
\and Mark Jerrum\thanks{Queen Mary, University of London, UK. Supported by EPSRC grant EP/N004221/1 and EPSRC grant EP/I011935/1.} \and Eric Vigoda\thanks{Georgia Institute of Technology, USA.  Email: ericvigoda@gmail.com. Research supported in part by NSF grants CCF-1617306 and CCF-1563838.
}
}

\date{February 26, 2020}

\maketitle

\begin{abstract}
We study the mixing time of random walks on small-world networks modelled as follows: starting with the 2-dimensional periodic grid, each pair of vertices $\{u,v\}$ with distance $d>1$ is added as a ``long-range" edge with probability proportional to $d^{-r}$, where $r\geq 0$ is a parameter of the model. Kleinberg studied a close variant of this network model and proved  that the (decentralised) routing time is $O((\log{n})^2)$ when $r=2$
and $n^{\Omega(1)}$ when $r\neq 2$.  Here, we prove that the random walk also undergoes a phase transition at $r=2$, but in this case the phase transition is  of a different form. We establish that the mixing time is $\Theta(\log n)$ for $r<2$, $O((\log n)^4)$ for $r=2$ and $n^{\Omega(1)}$ for $r>2$. 
\end{abstract}

\thispagestyle{empty}

\newpage

\setcounter{page}{1}

\section{Introduction}

A small-world network is a graph with small average degree and short diameter.
Such networks were originally designed to model the social phenomenon known
as six degrees of separation, which was popularized by Milgram~\cite{Milgram1,Milgram2} in
his experiment of routing letters.

Kleinberg~\cite{Kleinberg} introduced an intriguing model of a small-world network.
Starting with the $2$-dimensional grid, each vertex $v$ adds one, possibly long-range, directed edge $(v,w)$.
The probability that the edge from~$v$ is $(v,w)$ is
 $\dist{v}{w}^{-r}/Z$ where $r\geq 0$ is a
parameter of the model, $\dist{v}{w}$ is the grid distance between~$v$ and~$w$ 
and $Z = \sum_{y\in V} \dist{v}{y}^{-r}$ is the appropriate normalising factor.  

Kleinberg proved that the routing problem on this network has
an  interesting phase transition  at $r=2$: when $r=2$ 
there is a decentralised routing algorithm with expected delivery
time $O((\log n)^2)$, whereas for all $r\neq 2$ 
every decentralised routing algorithm has expected delivery time $\Omega(n^c)$ where
$c$ is a constant that grows with $|r-2|$.

Our goal in this paper is to analyse the behaviour of the random walk
on such a small-world network with respect to the parameter $r$.
Routing algorithms imply bounds on the 
diameter of the underlying graph but do not yield significant insight into further
macroscopic properties.  In contrast, 
the mixing properties of the random walk yields insights into the macroscopic
connectivity of the underlying graph, such as the conductance~\cite{JS,LP},
and have connections with gossip-style protocols~\cite{Gossip1,Gossip2}.

We focus our attention on a natural variant of Kleinberg's model.  Our model
is similar to the model of Abraham et al.~\cite{ACKS} who study the critical case $r=2$,
and is closely related to long-range percolation in the case $r>2$ as described below. 
 Instead of adding \emph{exactly} one \emph{directed} long-range edge from each vertex, in our model edges are \emph{undirected} and added independently, so that the \emph{expected} number of long-range edges incident to each vertex is one. 
Also, we consider the underlying graph to be a torus (periodic grid) so
that the underlying graph is vertex transitive and the normalising factor $Z$ is identical for all vertices. The fact that edges are undirected is technically convenient, as is the fact that  the presence/absence of an edge is mutually independent for each pair of vertices.
Studying the random walk on the directed version of the model is a challenging open
question (see Section~\ref{sec:discussion}) since
the directed model is non-reversible, and hence 
even understanding the stationary distribution is difficult.

The details of our model are as follows.
Following Kleinberg, we have a parameter $r\geq 0$. 
The model $\Gc_{n,r}$ is obtained by taking the 
$2$-dimensional\footnote{For simplicity we focus on the 2-dimensional case
where the underlying graph is a   (periodic) 2-dimensional   grid.
Kleinberg generalised his result to the case where the underlying grid is $d$-dimensional and
our results can be generalised similarly --  see Section~\ref{sec:discussion}.}
 torus (periodic grid graph) with side length $(2n+1)$ centered at the origin.
Independently, 
for every pair of vertices $v,w$ which are not connected by a torus edge, we add the (undirected) edge $\{v,w\}$
with probability $\dist{v}{w}^{-r}/Z$, where the normalising 
factor~$Z$ is given by $Z = \sum_{y\in V; y\neq v} \dist{v}{y}^{-r}$
and the distance $\dist{v}{w}$ is the graph distance between $v$ and $w$ in the
original torus.  
The normalising factor~$Z$ is important --- this is what ensures that the expected number of long-range edges
adjacent to a vertex~$v$ is~$\Theta(1)$, as in Kleinberg's model.
We refer to  the random graph $\Gc_{n,r}$ as the ``small-world network''.

When $r>2$ our model is closely related to long-range percolation (LRP).
In LRP there are two parameters $r,\beta>0$.
Starting with the infinite $d$-dimensional lattice~$\integers^d$, 
for every pair of vertices $v,w$,
the (undirected) edge $\{v,w\}$ is added with probability $1-\exp(-\beta \,\dist{v}{w}^{-r})$
which is asymptotically $\beta \,\dist{v}{w}^{-r}$.
The main difference between LRP and our small-world model is the absence of the normalising
factor~$Z$ in the edge-addition probability. The absence of this normalising factor
  manifests itself in various ways --- perhaps most strikingly 
  by varying
 the average number of long-range neighbours of a vertex (as $r$ changes).
 Unlike our small-world model where the average degree is always $\Theta(1)$,
 in LRP it is proportional to~$Z$, and hence (as we will see in Lemma~\ref{lem:factor}
 for the asymptotics of $Z$)
the average degree in LRP is 
      $n^{\Omega(1)}$ for $r<d$, $\Theta(\log n)$ for $r=d$ and $O(1)$ for $r>d$.
      Thus, in 2-dimensions for $r>2$, our model is quite similar to LRP, but
      the models are quite different for $r\leq 2$.

To further understand the properties of small-world networks and how they
vary with~$r$, 
we study the mixing time of the lazy random walk on~$\Gc_{n,r}$. 
Let $X_t$ denote the vertex 
that the walk visits  at time $t$. 
With probability $1/2$, we set $X_{t+1}=X_t$; otherwise,   $X_{t+1}$ is
chosen  to be a neighbour of $X_t$, selected uniformly at random from the set
of all neighbours.  The lazy random walk is an 
ergodic Markov chain; we let $\pi$ denote its unique stationary distribution.
The mixing time~$\Tmix$ is the minimum number of steps $T$, from the worst initial state $X_0$, to guarantee
that the distribution of $X_T$ is within total variation distance $\leq 1/4$ of the stationary
distribution $\pi$, see Section~\ref{sec:lazyrandomwalk} for more thorough definitions.
We prove that the mixing time of the lazy random walk on the
small-world network undergoes a phase transition at $r=2$. 

\begin{theorem}\label{thm:main}
Let $r\geq 0$ and $n$ be a positive integer. With probability $1-O(1/n)$ over the choice of
the small-world network~$\Gc_{n,r}$, the lazy random walk 
on this network satisfies:
\begin{equation*}
\Tmix = \begin{cases}
\Theta(\log{n}) & \mbox{ if } r<  2 \\
O((\log{n})^4) & \mbox{ if } r=  2 \\
n^{\Omega(1)} & \mbox{ if } r>2. \\
\end{cases}
\end{equation*}
\end{theorem}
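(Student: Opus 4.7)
The three regimes of $r$ call for rather different techniques, so the plan is to treat them separately and combine. The lower bound $\Tmix = \Omega(\log n)$, which is needed when $r\le 2$, follows from a standard diameter/degree argument: since $\Gc_{n,r}$ has $\Theta(n^2)$ vertices and, because each vertex is an independent sum of Bernoullis with total mean $\Theta(1)$, the maximum degree is $O(\log n)$ with high probability, any walk of length $o(\log n)$ can reach only $n^{o(1)}$ vertices and is therefore far from~$\pi$ in total variation.

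For the regime $r>2$, the plan is to exhibit a sparse cut. Split the torus into two halves $S,S^c$ by an axis-parallel line. The torus itself contributes $\Theta(n)$ cut edges. For the long-range contribution, note that $r>2$ gives $Z=\Theta(1)$, and the expected number of long-range edges from a vertex at $\ell_1$-distance $k$ from the cutting line to the opposite side scales like $k^{2-r}$. Summing over the $\Theta(n)$ vertices in each row at distance $k$ and over $k\le n$ gives an expected cut size of $O(n^{\max(1,4-r)})$, which is at most $n^{2-\delta}$ for some $\delta=\delta(r)>0$; this holds with high probability by Markov. Since the average degree is $\Theta(1)$, $\pi(S)=\Theta(1)$ and the total degree of $S$ is $\Theta(n^2)$, so the conductance of this cut is $O(n^{-\delta})$. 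The standard conductance-based lower bound on mixing then delivers $\Tmix = n^{\Omega(1)}$.

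For the upper bounds when $r\le 2$ the plan is to prove conductance estimates of the form $\min_{S:\pi(S)\le 1/2}\Phi(S) = \Omega(1)$ when $r<2$ and $\Omega(1/(\log n)^{c})$ when $r=2$, and feed them into Jerrum--Sinclair. The key structural input is that long-range edges produce expansion at every scale. For $r<2$ the expected mass of long-range edges from a vertex to points at distance $\ge D$ is $\Omega(1)$ uniformly in $D\le n$, so any set $S$ of total degree $v$ has $\Omega(v)$ expected long-range edges to $S^c$, and one obtains $\Tmix=O(\log n)$. For $r=2$ the normaliser $Z=\Theta(\log n)$ divides all edge probabilities by an extra $\log n$, and one must combine edges across all $\Theta(\log n)$ dyadic distance scales to obtain expansion $\Omega(1/\log n)$; plugging this into the Cheeger-type inequality yields $\Tmix = O((\log n)^4)$, with the extra factor of two from the spectral-gap-squared step and further logs absorbed by initial-state and average-degree corrections.

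The main obstacle will be carrying out these expansion estimates \emph{robustly} — with high probability over $\Gc_{n,r}$ and uniformly over all candidate sets $S$ — rather than merely in expectation. The plan is a multi-scale union bound: for each dyadic scale decompose the torus into blocks, use independence of long-range edges together with Chernoff/Azuma concentration to control, at each scale, the number of long edges leaving each block into the corresponding far annulus, and then combine the scales. The $r=2$ case is the most delicate, because the margin in $\Phi$ is only polylogarithmic and each of the $\Theta(\log n)$ scales must be balanced against the $1/\log n$ suppression coming from $Z$ before the scales are aggregated — this is where the $(\log n)^4$ in the final bound comes from and where the careful bookkeeping in the subsequent lemmas will be required.
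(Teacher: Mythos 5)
Your overall architecture (diameter lower bound for $r\le 2$, a sparse cut for $r>2$, conductance upper bounds for $r\le 2$) matches the paper, and your $r>2$ argument is essentially sound: the paper cuts along a ball of radius $\tfrac{9}{10}n$ rather than a half-torus, but the computation $|\partial S|=O(n^{\max\{1,4-r\}})$ and the Jerrum--Sinclair lower bound $\Tmix=\Omega(1/\Phi)$ are the same (use Chernoff rather than Markov if you want failure probability $O(1/n)$). However, there are two genuine gaps. First, your $\Omega(\log n)$ lower bound does not work as stated: with maximum degree $\Theta(\log n)$, a walk of length $t$ can reach $(\log n)^{O(t)}$ vertices, so ruling out mixing only gives $t=\Omega(\log n/\log\log n)$. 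The correct argument bounds the \emph{expected} size of the ball of radius $k$ by $5^{k+1}$ (expected degree $5$) and applies Markov, which is what the paper does.

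Second, and more seriously, the claim $\min_{S:\pi(S)\le 1/2}\Phi(S)=\Omega(1)$ for $r<2$ is false: with high probability there is a box of $c\log n$ vertices incident to no long-range edge, and its conductance is $\Theta(1/\sqrt{\log n})$. Feeding the true worst-case conductance into Jerrum--Sinclair yields only $\Tmix=O((\log n)^2)$, so your route cannot produce the claimed $\Theta(\log n)$. The paper escapes this by using the Lov\'asz--Kannan/Fountoulakis--Reed ``average conductance'' bound, which (i) restricts attention to \emph{connected} sets and (ii) sums $(1/\widetilde{\Phi}(2^{-j}))^2$ over scales, so that the bad $1/\sqrt{|S|}$ conductance of small sets contributes only a convergent geometric series. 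Relatedly, your ``multi-scale union bound'' does not explain how to beat the entropy of candidate sets: the expected number of long-range edges leaving $S$ is only about $|S|$, so Chernoff gives failure probability $e^{-c|S|}$ with $c$ close to $1$, which loses against both $\binom{N}{k}$ arbitrary sets and even the $\approx 20^k$ connected sets. The paper's mechanism is to partition the torus into boxes of side $\ell$ ($\ell$ a large constant for $r<2$, $\ell=\Theta(\sqrt{\log n})$ for $r=2$), show that non-box-like sets already expand in the torus by isoperimetry, and count connected \emph{box-like} sets by a Galton--Watson domination, getting $(40\ell^2)^q$ against a failure probability $e^{-c\ell^2 q}$; choosing $\ell$ large closes the union bound. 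Without some such device your expansion estimates hold only in expectation, not uniformly over $S$, which you correctly identify as the main obstacle but do not resolve.
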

 
Note that the phase transition in Theorem~\ref{thm:main} is different from the phase transition obtained by Kleinberg for the greedy routing algorithm. When $r<2$, the greedy routing algorithm is slow whereas the lazy random walk is as fast as possible, with mixing time $O(\log n)$.

Our main technical challenge is getting sharp upper bounds on the mixing time for the
case $r<2$.  We believe that $\Theta((\log{n})^2)$ for $r=2$ is the correct answer and
hence obtaining an upper bound that is asymptotically smaller for $r<2$ is 
especially interesting.  A challenging open problem is to show a separation
for the cases $r<2$ vs. $r=2$, however we do not have a good approach for
proving a non-trivial lower bound, see
Section~\ref{sec:discussion} for more discussion.

\subsection{Related works}\label{sec:related}

The mixing time of random walks in sparse random graphs is a well-studied subject.  Our proof of Theorem~\ref{thm:main} adapts methods that have been developed for analysing random walks in 
the Erd\"{o}s-R\'{e}nyi random graph $G(n,p)$.  
Fountoulakis and Reed \cite{FRFR} proved that, for any $d>1$,
the mixing time of the lazy random walk on the largest component of  $G(n,d/n)$ 
is $\Theta((\log n)^2)$. 
See \cite{BKW} for an alternative proof of this result,
 and \cite{DLP,NP} for mixing-time analysis 
 when the parameter~$p$ is chosen so that $G(n,p)$ is in,  or close to, the critical window for the emergence of the giant component.
The proof technique of Fountoulakis and Reed,
based on analysing  the conductance of connected sets,   
is the basis of 
the approach that we use to obtain the upper bounds in Theorem~\ref{thm:main} (see Section~\ref{sec:reerg}). 
 
The random graph $G(n,p)$ is constructed by starting with
the most straightforward underlying graph --- the empty graph with $n$ vertices --- and 
turning each non-edge into an edge independently with probability~$p$.
Of course, it is also possible to start with a different
underlying graph.
A well-known example of a
random graph model of this type is the small-world model of Newman and Watts~\cite{NW}. 
In its simplest form, the
Newman-Watts random graph is the random graph obtained by starting from a cycle of length $n$ and then, independently, adding each non-edge with
probability $c/n$ for some constant $c>0$. 
Addario-Berry and Lei \cite{AL} study the mixing time of the lazy random walk in the Newman-Watts random graph  
and prove that, for any constant $c>0$, the mixing time is $O((\log{n})^2)$, matching
the lower bound of $\Omega((\log{n})^2)$ of Durrett \cite{Durrett}. 
The proof technique of Addario-Berry and Lei  is based on 
bounding the number of connected sets, and we 
adapt the bounding technique of \cite{AL}  to our setting in 
Section~\ref{sec:connectedsets}.  
Similar random graph models with other underlying graphs are considered in \cite{flaxman,flaxmanfrieze,KRS}. 
For example, Krivelevich, Reichman and Samotij~\cite{KRS} 
study the mixing time of the lazy random walk on a random graph
which is formed by adding edges independently to an arbitrary connected graph --- they show upper bounds of $O((\log n)^2)$.
 
The models discussed in the previous paragraph have the property that each non-underlying edge
is added with the same probability, $p$.
There are also random graph models in which the probability that a pair $\{u,v\}$ of
vertices is added as an edge depends on  
the underlying graph distance $\dist{u}{v}$.
Our small-world model 
is of this type and so
is   Kleinberg's small-world model \cite{Kleinberg, kleinberg2006}, which  we have already discussed.  
Nguyen and Martel \cite{NMdiamb, NMdiam} study random graphs in Kleinberg's model 
and show that the diameter undergoes the following phase transitions in $d$-dimensions: $\Theta(\log n)$ for $r\leq d$, $(\log n)^{\Theta(1)}$ for $d<r<2d$ and $n^{\Omega(1)}$ for $r>2d$. 

As mentioned earlier, long-range percolation (LRP) is another random graph model in which an
underlying graph is augmented by adding each new edge $\{u,v\}$ with probability 
 approximately   $\beta \,\dist{u}{v}^{-r}$. 
There is an an extensive body of work on the long-range percolation model --- see \cite{ben2008,CS1} for the mixing time, \cite{Aizenman,Schul83,newman1986, aizenman1986rf} for percolation, and \cite{BB,benkest04, CGS,NBerger,biskup2004, biskup2011,dingsly2013, bisjin2017} for diameter. 
For the 1-dimensional model, Benjamini, Berger, and Yadin \cite{ben2008} 
showed in LRP that the relaxation time of the simple random walk on the infinite component
undergoes a phase transition at $r=2$: 
it is $n^{r-1}(\log n)^{\Theta(1)}$ for $1<r<2$ and $\Omega(n^2)$ for $r>2$.
More generally, for $d\geq 1$ dimensions, Crawford and Sly~\cite{CS1} 
proved that the relaxation 
time for $d<r<\min\{d+2,2d\}$ is $n^{r-d}(\log n)^{O(1)}$.  For the 2-dimensional case,
straightforward lower bounds match these
results when $2<r<3$ (up to logarithmic factors), see Theorem~\ref{thm:main2} in Section~\ref{sec:lowout} for a more detailed discussion.

Finally, 
our small-world network   has been studied by \cite{JKRS}  in the special case $r=1$.
However, the results  of \cite{JKRS} are about bootstrap percolation, rather than
being about the mixing time of the lazy random walk, which is our concern here.

\subsection{Outline}\label{sec:outline}

To obtain Theorem~\ref{thm:main}, one of our main tasks is to bound the edge-expansion ratio $|\partial S|/\min\{|S|,|V\backslash S|\}$ for all subsets $\emptyset \subset S \subset V$, where $\partial S$ denotes the set of edges with exactly one endpoint in $S$.\footnote{Technically, we need to consider the edge expansion of subsets $S$ whose complement $V\backslash S$ has non-trivial density, i.e.,   $|V\backslash S|\geq \delta |V|$ where $\delta\in(0,1)$ is a small constant bounded away from zero ($\delta=1/100$ is sufficient).} In our setting where every vertex has bounded-degree in expectation, lower bounds on the edge-expansion ratio can be used to lower bound the so-called conductance of the random walk (see Section~\ref{sec:reerg}) and therefore obtain upper bounds on the mixing time (with some extra work);  similarly,  upper bounds  on the edge-expansion ratio can be used to obtain lower bounds on the mixing time.

The key idea is that for $r<2$ the long-range edges mostly connect far away vertices in the torus (i.e., vertices at distance $\Omega(n)$) and the edge-expansion of a set $S$ in the small-world graph is likely to be bounded by a constant. More precisely, there exists a constant $c>0$ such that for all sets with $S\subseteq V$ (and $|S|\leq 99|V|/100$) it holds that 
\begin{equation}\label{eq:4g4g45g}
\mbox{$\Pr_{\Gc_{n,r}}$}(|\partial S|\geq c |S|)\geq 1-\exp(-c|S|).
\end{equation}
A somewhat similar inequality holds for $r=2$, but now sets $S$ with $\Omega(n^2)$ vertices have worse edge-expansion ratio of order $1/\log n$, see Lemma~\ref{lem:arbexpreqtwo} for the precise statement. In contrast, for $r>2$, the long-range edges are more likely to connect vertices which are at constant distance in the torus and the expansion properties of the small-world graph become qualitatively similar to the torus. More precisely, one can find large sets $S$ with $\Omega(n^2)$ vertices whose expansion  is at most $n^{-\Omega(1)}$, see Lemma~\ref{lem:torpidconductance}. These sets with bad edge-expansion ratio give the lower bound on the mixing time in Theorem~\ref{thm:main} for $r>2$, see also Theorem~\ref{thm:main2} in Section~\ref{sec:lowout} for details (there, we also show $\Omega(\log n)$ lower bounds on the mixing time for $r\leq 2$ using a simple diameter lower bound).

Obtaining upper bounds on the mixing time for $r< 2$ on the basis of \eqref{eq:4g4g45g} is the key obstacle to obtaining Theorem~\ref{thm:main} (and similarly for $r=2$). Namely, to obtain lower bounds on the edge-expansion ratio, we would ideally want to combine the probabilistic estimates given in \eqref{eq:4g4g45g} for all sets $S\subseteq V$. The trouble is that we cannot easily combine the probabilistic estimates given in \eqref{eq:4g4g45g}  for all sets $S\subseteq V$ since the straightforward union bound fails miserably. For example, for $k=n^{2-\Omega(1)}$, there are roughly $\binom{n^2}{k}=\e^{\Omega(k\log n)}$ sets $S$ with $|S|=k$, but the event $|\partial S|\geq c |S|$ for a set $S$ with $|S|=k$ fails to hold with probability as large as $\e^{-\Omega(k)}$. To overcome the failure of the union bound, we need to reduce the number of sets $S$ under consideration.

The first idea to perform a refined union bound is to use a theorem by Fountoulakis and Reed~\cite{FR}, see Theorem~\ref{thm:FR}, which allows us to consider only sets $S$ which are connected in $G$ (i.e., the induced subgraph on $S$ is connected). The idea is that the graph $G$ has bounded average degree and therefore the number of connected sets with $|S|=k$ containing a specific vertex should be roughly $(c')^{k}$ for some constant $c'>1$ (which is a significant improvement over the roughly $\binom{n^2}{k}$ possible sets  with $|S|=k$). Indeed, we show that this is the case by adapting techniques of Addario-Berry and Lei \cite{AL} to our setting (see Lemma~\ref{lem:numcon} in Section~\ref{sec:connectedsets}). Unfortunately, the constant $c$ in \eqref{eq:4g4g45g} turns out to be roughly equal to 1, the expected number of long-range edges incident to a vertex, while the best bound we can hope to get on the constant $c'$ turns out to be roughly 20 (four times the average degree). So, we need to reduce the number of sets $S$ further.

The second idea is that we can reduce the sets $S$ under consideration by utilising the edge-expansion of the
underlying graph and, in particular, the torus (similar type of arguments have been used in \cite{AL,KRS}). The rough intuition is that the sets in the torus that have low edge-expansion are ``box-like'', i.e., unions of boxes where a box refers to a square subgraph of the torus (see Section~\ref{sec:part}). In contrast, sets in the torus that are more spread out (for example, unions of paths on alternate layers) have constant edge expansion. By considering a partition of the torus into boxes of side length roughly equal to $\ell$ (for some large enough constant $\ell$), we can reduce the number of box-like sets we need to consider. In fact,  it turns out that it is enough to consider box-like sets $S$ which are connected in $G$, whose number we can control in a manner analogous to the one we discussed for general connected sets $S$. Eventually, we are able to control the interplay of the constant $c$ in \eqref{eq:4g4g45g} with the constant controlling the logarithm of the number of connected box-like sets $S$ by adjusting the length $\ell$ of the boxes in the torus (we show that taking $\ell$ to be a sufficiently large constant suffices). 

We should also mention that we cannot use \eqref{eq:4g4g45g} to account for the expansion of sets $S$ with small cardinality ($O(\log n)$ vertices); such small sets may  have no long-range edges incident to them and their edge expansion can be as low as $1/\sqrt{\log n}$ (see Lemma~\ref{lem:conrltwo}); using the standard conductance techniques this would lead to a mixing time bound of $O((\log n)^2)$. Nevertheless, we can get the $O(\log n)$ bound in Theorem~\ref{thm:main} for $r<2$ by employing the ``average conductance" technique of Lov\'{a}sz and Kannan \cite{LKfaster}, as refined by Fountoulakis and Reed \cite{FR} (see Theorem~\ref{thm:FR} in this paper). The full argument for $r<2$ can be found in Section~\ref{sec:rl2}.

The upper bound for $r=2$ builds on similar arguments, though some modifications are needed to deal with the worse probabilistic estimates for the edge expansion of large sets $S$ (see Lemma~\ref{lem:arbexpreqtwo}). To account for these, we rely on the expansion properties of the torus more significantly by taking the side length $\ell$ of the boxes in the partition of the torus to depend moderately on $n$; we show that taking $\ell$ to be $O((\log n)^{1/2})$ suffices. This allows us to bound the edge expansion of  box-like sets accurately, but yields rougher bounds for more spread-out sets, e.g., a large box-like set together with its long-range neighbours. Instead of bounding the edge expansion of box-like sets,  we therefore bound their vertex expansion, which has the benefit of yielding  more accurate bounds on the edge expansion of spread-out sets $S$ (this shaves off a couple of $\log n$ factors in the final mixing time upper bound). The detailed proof of  the upper bound for $r=2$ can be found in Section~\ref{sec:req2} (and the lower bound in Section~\ref{sec:lowout}).

\section{Preliminaries}\label{sec:prelim}

\subsection{Definitions}
The small-world network model $\Gc_{n,r}$ is parameterised by a positive integer $n$ and a real  $r\geq 0$. Roughly, the model $\Gc_{n,r}$ is obtained by the $(2n+1)\times (2n+1)$ two-dimensional torus (periodic grid) by adding random edges independently, where the probability of adding an edge $(u,v)$ is given by a power law with parameter $r$ in the torus distance between $u$ and $v$.  

More formally, we will denote the torus by $T=(B_{n},E_n)$ where $B_n:=\{-n,\hdots,n\}^2$ and, for two vertices $u=(x_1,y_1),v=(x_2,y_2)\in B_n$, the edge $(u,v)$ belongs to $E_n$ if, either $x_1=x_2$ and $|y_1-y_2|=1$ or $2n$, or $y_1=y_2$ and  $|x_1-x_2|=1$ or $2n$. For vertices $u,v\in B_n$, let $\dist{u}{v}$ be the length of the shortest path between $u$ and $v$ in the torus. Note that the torus is a vertex transitive graph where every vertex has degree 4.  Independently, 
for every pair of distinct vertices $u,v\in B_n$ which are not adjacent in the torus, we add the \emph{long-range} edge $\{u,v\}$
with probability $\dist{u}{v}^{-r}/Z$ where the normalising factor $Z$ is such that every vertex has in expectation one long-range edge incident to it. Note that if we denote by $\rho$ the origin of the torus, then $Z$ is given by the expression 
\begin{equation}\label{eq:defZ}
Z = \sum_{w\in V;\ \dist{\rho}{w}\geq 2} \dist{\rho}{w}^{-r}.
\end{equation}

We will use $G=(V,E)\sim \Gc_{n,r}$ to indicate that $G$ is a small-world graph with parameters $n,r$ and denote the number of vertices in $G$ by $N=(2n+1)^2$.  Using Chernoff bounds, it can be proved easily that with high probability over the choice of the graph $G$ is holds that $|E|=5N/2+o(N)$, since every vertex has average degree five. Note that exactly $2N$ of these edges come from the torus; the remaining edges are long-range (random) edges. 

For a set $S\subseteq V$, we will use $\partial S$ to  denote the subset of edges in $G$ with exactly one endpoint in $S$. We will use $\partial^* S$ to  denote the subset of edges in the torus $T$ with exactly one endpoint in $S$. Note in particular that $\partial^* S\subseteq \partial S$. 

\subsection{The lazy random walk}\label{sec:lazyrandomwalk}

Let $G=(V,E)\sim \Gc_{n,r}$. We study the Markov chain corresponding to the lazy random walk on $G$. Formally, the transition matrix $P$ of the walk is defined as  follows (where $d_v$ denotes the degree of the vertex $v$ in $G$):
\begin{equation*}  P(v,w) = \begin{cases}  \frac{1}{2} & \mbox{if } w=v\\\frac{1}{2d_v} & \mbox{if } (v,w)\in E \\
 0 & \mbox{otherwise.}
 \end{cases}
\end{equation*}
Since $G$ is connected, we have that $P$ is irreducible. Due to the self-loops, we also have that $P$ is aperiodic. It follows that the lazy random walk converges to a stationary distribution. Since $G$ is undirected, we have that $P$ is reversible with respect to the stationary distribution $\pib=(\pi_v)_{v\in V}$ given by $\pi_v=d_v/(2|E|)$.  For $S\subseteq V$, we use $\pi(S)$ to denote $\sum_{v\in S}\pi_v$. 

The mixing time is the number of steps that we need to run the chain from the worst starting state to ensure that we are within total variation distance $\leq 1/4$ from the stationary distribution. Formally, for a vertex $v\in V$, observe that the vector $P^t(v,\cdot)$ gives the distribution of the random walk starting from $v$ after $t$ steps; the total variation distance between $P^t(v,\cdot)$ and $\pib$ is given by $\norm{P^t(v,\cdot)-\pib}_{\mathrm{TV}}=\frac{1}{2}\norm{P^t(v,\cdot)-\pib}_1$. The mixing time is then defined as
\[\Tmix=\min\Big\{t\in \mathbb{Z}\, \big|\, \max_{v\in V}\norm{P^t(v,\cdot)-\pib}_{\mathrm{TV}}\leq \frac{1}{4}\Big\}.\]
It is well-known (see, e.g., \cite[Section 4.5]{LP}) that after $k\Tmix$ steps, the total variation distance between the random walk and its stationary distribution is at most $(1/2)^k$.

\subsection{Bounding the mixing time using conductance}\label{sec:reerg}
To bound the mixing time of the random walk on $G=(V,E)\sim \Gc_{n,r}$, we will bound its conductance. For the lazy random walk, for a set $\emptyset \subset S\subset V$, the normalised conductance $\Phi(S)$ is given by 
\begin{equation}\label{eq:normcond}
\Phi(S)=\frac{|\partial S|}{\frac{1}{2|E|}\big(\sum_{v\in S}d_v\big)\big(\sum_{v\notin S}d_v\big)},
\end{equation}
where recall that $\partial S$ denotes the subset of edges in $G$ with exactly one endpoint in $S$.  The conductance $\Phi$ of the chain is then given by
\[\Phi=\min_{S\neq \emptyset,V} \Phi(S).\]
\begin{theorem}[{\cite{JS}}]\label{lem:mixingtime}
There exist constants $C,C'>0$ (independent of the chain) such that the mixing time $\Tmix$ of the lazy random walk on an undirected connected graph satisfies
\[\frac{C}{\Phi}\leq \Tmix\leq \frac{C'}{\Phi^2}\log(1/\pi_{\mathrm{min}}),\]
where $\pi_{\mathrm{min}}:=\min_{v\in V}\pi_v$.
\end{theorem}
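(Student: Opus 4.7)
The plan is to derive both bounds from the spectral gap $\gamma := 1 - \lambda_2$ of the transition matrix $P$, where $\lambda_2$ denotes its second-largest eigenvalue. Laziness ensures that all eigenvalues of $P$ lie in $[0,1]$, so $\gamma$ is the absolute spectral gap and governs $L^2$-convergence to stationarity. Diagonalising $P$ in a $\pib$-orthonormal eigenbasis gives the standard pointwise estimate $|P^t(v,w)/\pi_w - 1| \leq (1-\gamma)^t/\sqrt{\pi_v\pi_w}$, which after summing and converting to total-variation distance yields $\Tmix \leq (1/\gamma)\log(C''/\pi_{\min})$ for an absolute constant $C''$. Conversely, testing against a suitable $\lambda_2$-eigenfunction shows $\Tmix \geq c/\gamma$.

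The core of the argument is therefore the two-sided Cheeger-type relation between $\gamma$ and $\Phi$. For the easy direction $\gamma \leq \Phi/2$, I would take the set $S^*$ attaining $\Phi$ and test the variational formula $\gamma = \inf_f \Ec(f,f)/\Var_\pi(f)$ with $f = \mathbf{1}_{S^*}/\pi(S^*) - \mathbf{1}_{V\backslash S^*}/\pi(V\backslash S^*)$. Letting $Q(S,V\backslash S) := \sum_{x\in S,\, y\notin S} \pi_x P(x,y)$, a direct computation gives $\Var_\pi(f) = 1/(\pi(S^*)\pi(V\backslash S^*))$ and $\Ec(f,f) = Q(S^*,V\backslash S^*)/(\pi(S^*)\pi(V\backslash S^*))^2$, so the ratio equals $\Phi(S^*)/2 = \Phi/2$; combined with $\Tmix \geq c/\gamma$ this gives $\Tmix \geq C/\Phi$. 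For the harder Cheeger direction $\gamma \geq c\Phi^2$, I would use the classical Lawler--Sokal / Jerrum--Sinclair coarea argument: given $f$ almost attaining the Rayleigh-quotient infimum, order its values and rewrite the Dirichlet form as an integral over level sets $S_t = \{v : f(v) \geq t\}$, then apply the isoperimetric bound $Q(S_t, V\backslash S_t) \geq (\Phi/2)\pi(S_t)\pi(V\backslash S_t)$ together with Cauchy--Schwarz. Feeding $\gamma \geq c\Phi^2$ into the spectral mixing-time bound above yields $\Tmix \leq (C'/\Phi^2)\log(1/\pi_{\min})$.

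The one mild wrinkle is that the paper's $\Phi(S)$ is the symmetric product form $2Q(S,V\backslash S)/(\pi(S)\pi(V\backslash S))$, whereas the original Jerrum--Sinclair statement uses the asymmetric ratio $Q(S,V\backslash S)/\min\{\pi(S),\pi(V\backslash S)\}$. Since $\min\{\pi(S),\pi(V\backslash S)\} \leq 2\pi(S)\pi(V\backslash S)$, the two forms differ by at most a factor of~$2$, so the Cheeger estimates and both directions above apply equally, with the discrepancy absorbed into the universal constants $C, C'$. As this is a classical result, no single step is a true obstacle --- the main work is bookkeeping the constants carefully through the chain of spectral and isoperimetric inequalities.
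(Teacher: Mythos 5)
The paper does not prove this statement at all --- it is quoted verbatim from the literature with a citation to Jerrum and Sinclair, so there is no in-paper argument to compare against. Your proposal is the standard (and correct) proof of that cited result: the spectral bounds $\Tmix \geq c/\gamma$ and $\Tmix \leq \gamma^{-1}\log(C''/\pi_{\min})$ combined with the two-sided Cheeger inequality $c\Phi^2 \leq \gamma \leq \Phi/2$, and you correctly observe that the paper's product normalisation $2Q(S,V\backslash S)/(\pi(S)\pi(V\backslash S))$ agrees with the usual $\min$-normalisation up to a factor of $2$, which is absorbed into the constants.
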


A refinement of Theorem~\ref{lem:mixingtime} was given by Fountoulakis and Reed \cite{FR} (building upon work of Lov\'{a}sz and Kannan \cite{LKfaster}), which often gives a more precise upper bound on the mixing time when the conductance of small sets is small relatively to the conductance of big sets. Following \cite{FR}, we say that a set $S\subseteq V$ is connected if the graph induced by $S$ is connected. For $0\leq p\leq 1$, let 
\[\widetilde{\Phi}(p):=\min\{\Phi(S) \mid S\mbox{ is connected},\, p/2\leq \pi(S)\leq p\},\] 
and set $\widetilde{\Phi}(p)=1$ if the minimization is over an empty set. The key feature in the definition of  $\widetilde{\Phi}(p)$ is that it only considers connected sets $S$, which vastly reduces  the sets to be considered for sparse graphs.
\begin{theorem}[{\cite{FR}}]\label{thm:FR}
There exists a constant $C>0$ (independent of the chain) such that the mixing time $\Tmix$ of the lazy random walk on an undirected connected graph satisfies
\begin{equation}\label{eq:phij12}
\Tmix\leq C \sum^{\left\lceil \log_2(\pi_{\mathrm{min}}^{-1})\right\rceil}_{j=1}\bigg(\frac{1}{\widetilde{\Phi}(2^{-j})}\bigg)^2.
\end{equation}
\end{theorem}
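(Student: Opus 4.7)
The plan is to combine the Lov\'{a}sz--Kannan~\cite{LKfaster} average-conductance framework with a decomposition argument that restricts the relevant minimisation to connected sets. First I would re-derive a Lov\'{a}sz--Kannan-style bound of the form $\Tmix\le C\sum_j \bigl(1/\Phi(2^{-j})\bigr)^2$, where $\Phi(p):=\min\{\Phi(S):\pi(S)\le p\}$ is the ordinary scale-$p$ conductance, and then sharpen it by replacing $\Phi$ with $\widetilde{\Phi}$, using the observation that the level sets arising in the Lov\'{a}sz--Kannan analysis admit a decomposition into connected pieces with no loss of expansion.

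For the Lov\'{a}sz--Kannan step, fix a starting vertex $v_0$ and let $h_t(v):=P^t(v_0,v)/\pi_v$ denote the density relative to stationarity. Consider the potential $I_t(x):=\sum_v(h_t(v)-x)_+\pi_v$ as a function of the threshold $x\ge 0$. By reversibility and laziness of $P$, this potential is non-increasing in $t$, and a Cheeger-type calculation bounds the per-step decrement $I_t(x)-I_{t+1}(x)$ below by a quantity of order $|\partial S_{t,x}|^2/\mathrm{vol}(V)^2$, where $S_{t,x}:=\{v:h_t(v)\ge x\}$. Binning the range of $\pi(S_{t,x})$ dyadically into $[2^{-j-1},2^{-j})$ and integrating the resulting inequality in $x$ would yield the average-conductance bound with $\Phi$.

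The refinement to $\widetilde{\Phi}$ then rests on the following graph-theoretic observation: if $S_{t,x}$ has connected components $C_1,\dots,C_k$ in $G$, then $|\partial S_{t,x}|=\sum_i|\partial C_i|$ and $\mathrm{vol}(S_{t,x})=\sum_i\mathrm{vol}(C_i)$, since no edge of $G$ crosses between distinct components. By an averaging argument some component $C_{i^*}$ satisfies $|\partial C_{i^*}|/\mathrm{vol}(C_{i^*})\le|\partial S_{t,x}|/\mathrm{vol}(S_{t,x})$, and because $\pi(S_{t,x})\le 1/2$ this translates into $\Phi(C_{i^*})\le O(\Phi(S_{t,x}))$. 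Since $C_{i^*}$ is connected with $\pi(C_{i^*})\in[2^{-j'-1},2^{-j'}]$ for some $j'\ge j$, it witnesses a matching bound on $\widetilde{\Phi}(2^{-j'})$.

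The hard part will be the re-indexing $j\mapsto j'$: a naive substitution could overcharge some dyadic scale if many level sets steer their best component toward the same $j'$. I would handle this by integrating the $I_t$-decrement directly against $\pi(C_{i^*})$ rather than $\pi(S_{t,x})$, so that each scale $j'$ is charged only over the $x$-range in which the selected component sits at that scale. The telescoping structure of $I_t$ combined with the additivity of $|\partial\cdot|$ across components should then keep the total loss bounded by a universal constant, producing the stated inequality with $\widetilde{\Phi}$ in place of $\Phi$.
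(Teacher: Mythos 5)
First, a point of order: the paper does not prove Theorem~\ref{thm:FR} at all --- it is quoted from Fountoulakis and Reed \cite{FR} and used as a black box, so there is no in-paper proof to compare yours against. Judged on its own terms, your sketch follows the natural route (and, in outline, the route of \cite{FR}): start from the Lov\'{a}sz--Kannan average-conductance analysis and then restrict to connected sets by decomposing a level set into its connected components. The component step itself is sound: since no edge of $G$ joins two distinct components of the induced subgraph, both $|\partial \cdot|$ and the volume are additive over components, the averaging argument produces a component $C_{i^*}$ with $|\partial C_{i^*}|/\mathrm{vol}(C_{i^*})\leq |\partial S_{t,x}|/\mathrm{vol}(S_{t,x})$, and for sets of stationary measure at most $1/2$ this ratio is equivalent to the normalised conductance up to a factor of $2$.

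The genuine gap is exactly the one you flag and then wave away. The component $C_{i^*}$ may sit at a dyadic scale $j'$ much larger than the scale $j$ of the level set, so the per-step progress you extract is controlled by $\widetilde{\Phi}(2^{-j'})$ rather than $\widetilde{\Phi}(2^{-j})$; to recover the stated sum you must show that the total progress charged to any fixed scale $j'$, over all times $t$ and thresholds $x$, is $O\big(\widetilde{\Phi}(2^{-j'})^{-2}\big)$. Your proposed remedy --- integrate the decrement against $\pi(C_{i^*})$ and let ``the telescoping structure'' absorb the loss --- is a plan, not an argument: the level sets $S_{t,x}$ are nested in $x$, but the selected components $C_{i^*}$ need not vary monotonically with $x$ or $t$, so there is no telescoping identity available without substantial further work; this accounting is precisely the technical content of \cite{FR} beyond Lov\'{a}sz--Kannan. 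A secondary imprecision: the claimed pointwise decrement $I_t(x)-I_{t+1}(x)=\Omega\big(|\partial S_{t,x}|^2/\mathrm{vol}(V)^2\big)$ is not what the Cheeger-type calculation gives for a fixed threshold $x$ (the decrement depends on the variation of $h_t$ across the boundary, not just on the boundary size); the usable inequality only emerges after integrating over $x$ and applying Cauchy--Schwarz, with $\mathrm{vol}(S_{t,x})$ rather than $\mathrm{vol}(V)$ in the denominator. As it stands, the proposal is a plausible reconstruction of the strategy of \cite{FR} with its hardest step left open.
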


\subsection{Normalising factor for random edges}
The following lemma gives some basic intuition for the neighbourhood structure of $G\sim \Gc_{n,r}$ by considering the asymptotics of the normalising factor $Z$ given in \eqref{eq:defZ}. 

\begin{lemma}[\cite{Kleinberg}]\label{lem:factor}
Let $r\geq 0$. Then, for the torus $T=(B_n,E_n)$, it holds that
\[Z=\begin{cases}
\Theta(n^{2-r})  & \mbox{if } r<2,\\
\Theta(\log{n}) & \mbox{if } r=2,\\
  \Theta(1) & \mbox{if } r>2.
\end{cases}\]
\end{lemma}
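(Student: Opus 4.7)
The plan is to evaluate $Z = \sum_{w \in V,\, \dist{\rho}{w} \geq 2} \dist{\rho}{w}^{-r}$ by grouping vertices by their torus-distance from the origin $\rho$ and summing the resulting geometric series in three ranges of $r$. The first step is a counting step: I would show that the number $N_d$ of vertices $w \in B_n$ with $\dist{\rho}{w} = d$ satisfies $N_d = 4d$ for every $1 \leq d \leq n$, and $N_d = \Theta(n - (d-n))$ for $n < d \leq 2n$, where the latter reflects the wrap-around of the torus. The standard way to see $N_d = 4d$ for $d \leq n$ is to observe that $\{w : \dist{\rho}{w} = d\}$ is exactly the $\ell_1$-diamond of radius $d$, which contains $4d$ lattice points; once $d > n$ the diamond begins to "fold" because of periodicity. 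In any event, the key coarse bound is $N_d = O(d)$ uniformly in $d$, which is all that is needed for upper bounds.

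Given the count, I would then write
\[
Z \;=\; \sum_{d=2}^{2n} N_d \cdot d^{-r} \;=\; \Theta\!\left( \sum_{d=2}^{n} d \cdot d^{-r} \right) + \sum_{d=n+1}^{2n} N_d \cdot d^{-r},
\]
and observe that the second sum, where each $d = \Theta(n)$ and $N_d = O(n)$, contributes at most $O(n^2) \cdot n^{-r} = O(n^{2-r})$, which never dominates. So the asymptotics are governed by $\sum_{d=2}^{n} d^{1-r}$.

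From here the three cases are standard estimates. If $r < 2$, the sum $\sum_{d=2}^n d^{1-r}$ is dominated by its top end and comparable to the integral $\int_1^n x^{1-r}\, dx = \Theta(n^{2-r})$; one should separately check that the second-sum contribution $O(n^{2-r})$ does not overwhelm the constants in the lower bound, which is immediate because the $d \leq n$ piece already has a matching $\Theta(n^{2-r})$ lower bound coming from the $d \in [n/2, n]$ terms. If $r = 2$, the sum becomes $\sum_{d=2}^n d^{-1} = \Theta(\log n)$ by the standard harmonic estimate, and again the tail from $d > n$ contributes only $O(1)$. If $r > 2$, the sum $\sum_{d=2}^n d^{1-r}$ is bounded above by the convergent series $\sum_{d=2}^\infty d^{1-r} = \Theta(1)$, and bounded below by the single term $d=2$, so $Z = \Theta(1)$.

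There is no real obstacle here: the only mildly delicate point is the torus wrap-around in the count $N_d$ for $d > n$, but since that regime contributes at most $O(n^{2-r}) + O(n \cdot n^{1-r}) = O(n^{2-r})$ it is always subsumed by, or matches, the main term coming from $d \leq n$. The whole argument is therefore a direct calculation once the level-set count $N_d = \Theta(d)$ (for $d \leq n$) is in hand.
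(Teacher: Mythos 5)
Your proof is correct and follows essentially the same route as the paper: both count the level sets $|S_\ell|=4\min\{\ell,2n+1-\ell\}$, reduce $Z$ to $\sum_{\ell} \ell^{1-r}$ plus a wrap-around tail, and finish with the standard integral comparison in the three ranges of $r$. The only cosmetic difference is how the $d>n$ tail is controlled (you bound it by $O(n^2)\cdot n^{-r}$, the paper uses $2n+1-\ell\leq\ell$ to fold it into $H_{2n}$), and both handle it adequately.
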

\begin{proof}
Observe that all the vertices of the torus are within distance $2n$ from the origin $\rho$. For $\ell=1,\hdots,2n$, let $S_\ell$ denote the vertices at distance $\ell$ from the origin. Then, 
\[|S_\ell|=4\min\{\ell,2n+1-\ell\}.\]
It follows that 
\begin{equation}
\label{eq:Z-bound}
 Z  = \sum_{\ell=2}^{n} \frac{4\ell}{\ell^r}+\sum_{\ell=n+1}^{2n} \frac{4(2n+1-\ell)}{\ell^r}.
\end{equation}
For a positive integer $m$, let $H_m:=4\sum_{\ell=1}^{m}1/\ell^{r-1}$. We then have the bounds  
\[H_n-4\leq Z\leq H_{2n},\]
where the ``$-4$'' in the first inequality is to account for the term corresponding to  $\ell=1$ and the latter inequality follows by using the inequality $2n+1-\ell\leq \ell$ to bound the terms in the second sum in \eqref{eq:Z-bound}. 

Note that for any $x>0$, we have that $\sum^n_{i=1}i^x=\Theta(n^{x+1})$, so, for $0\leq r<1$, it holds that $H_n=\Omega(n^{2-r})$ and $H_{2n}=O(n^{2-r})$, so that $Z=\Theta(n^{2-r})$, as wanted. For $r\geq 1$ and $\ell=1,\hdots,2n$, we have that  $1/(\ell+1)^{r-1}\leq \int^{\ell+1}_{\ell}\frac{1}{x^{r-1}}\,dx\leq 1/\ell^{r-1}$, and hence we have  the bounds
\begin{equation}\label{eq:Hnboundrew}
4\int^{n+1}_{1}\frac{1}{x^{r-1}}\,dx\leq H_n, \quad H_{2n}\leq 4\Big(1+\int^{2n+1}_{1}\frac{1}{x^{r-1}}\,dx\Big),
\end{equation} 
from where it follows that $Z=\Theta(n^{2-r})$ if $r<2$, $Z=\Theta(\log n)$ if $r=2$, and $Z=\Theta(1)$ if $r>2$.
\end{proof}

\subsection{Concentration bounds}

We will use the following version of the well-known Chernoff/Hoeffding inequality.
\begin{lemma}[see, e.g., {\cite[Theorem 21.6 \& Corollary 21.9]{FK}}]\label{lem:chernoff}
Suppose that $S_n=X_1+\cdots+X_n$, where $\{X_i\}_{i\in[n]}$ is a collection of independent random variables such that $0\leq X_i\leq 1$ and $\E[X_i]=\mu_i$ for $i=1,\hdots,n$. Let $\mu=\mu_1+\cdots+\mu_n$. Then, 
\begin{align*}
\Pr(S_n\geq \mu+t)&\leq \exp\Big(-\frac{t^2}{2(\mu+t/3)}\Big) \mbox{ for any } t>0,\\
\Pr(S_n\leq \mu-t)&\leq \exp\Big(-\frac{t^2}{2(\mu-t/3)}\Big) \mbox{ for any } t\leq \mu.
\end{align*}
Further, for any $c>1$, 
\[\Pr(S_n\geq c\mu)\leq \exp\big(-\mu(c\log(c/\e)+1)\big).\]
\end{lemma}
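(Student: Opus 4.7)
The plan is to derive all three bounds by the standard exponential moment (Chernoff) method, optimising the free parameter $\lambda$ differently for each inequality. The main work is analytic rather than probabilistic: the probability estimates follow quickly from Markov's inequality applied to $\exp(\lambda S_n)$, and the bulk of the effort is in massaging the resulting expressions into the stated forms.

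First, for any $\lambda > 0$, Markov gives
\[
\Pr(S_n \ge \mu + t) \le e^{-\lambda(\mu+t)} \E[e^{\lambda S_n}] = e^{-\lambda(\mu+t)} \prod_{i=1}^n \E[e^{\lambda X_i}],
\]
using the independence of the $X_i$. Since $0 \le X_i \le 1$, convexity of $x \mapsto e^{\lambda x}$ on $[0,1]$ yields $e^{\lambda X_i} \le 1 + X_i (e^{\lambda}-1)$, so
\[
\E[e^{\lambda X_i}] \le 1 + \mu_i(e^{\lambda}-1) \le \exp\bigl(\mu_i (e^{\lambda}-1)\bigr),
\]
and therefore $\E[e^{\lambda S_n}] \le \exp\bigl(\mu(e^{\lambda}-1)\bigr)$. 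This gives the master inequality
\[
\Pr(S_n \ge \mu + t) \le \exp\bigl(\mu(e^{\lambda}-1) - \lambda(\mu+t)\bigr).
\]

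For the upper-tail Bernstein-type bound I would optimise by choosing $\lambda = \log(1+t/\mu)$, yielding the cleaner expression $\exp\!\bigl(t - (\mu+t)\log(1+t/\mu)\bigr)$; the remaining task is the analytic lemma
\[
(\mu+t)\log\!\Bigl(1+\tfrac{t}{\mu}\Bigr) - t \;\ge\; \frac{t^2}{2(\mu+t/3)} \qquad (t > 0),
\]
which I would prove by considering $h(t)$ equal to the left side minus the right side and checking $h(0)=0$, $h'(0)=0$ and that $h''(t) \ge 0$ for $t>0$ (a routine calculus exercise, though a little tedious in keeping track of signs). The lower-tail bound is obtained analogously using $\lambda < 0$ in Markov's inequality applied to $e^{-\lambda S_n}$, followed by the symmetric analytic inequality $(\mu - t)\log(1 - t/\mu) + t \ge \tfrac{t^2}{2(\mu - t/3)}$ for $0 < t \le \mu$.

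For the third bound, instead of optimising $\lambda$ in terms of $t$, I would substitute $t = (c-1)\mu$ into the master inequality and pick $\lambda = \log c$ directly. This gives
\[
\Pr(S_n \ge c\mu) \le \exp\bigl(\mu(e^{\lambda}-1) - \lambda c \mu\bigr)\bigg|_{\lambda = \log c} = \exp\bigl(-\mu(c\log c - c + 1)\bigr) = \exp\bigl(-\mu(c\log(c/\e)+1)\bigr),
\]
which is precisely the stated bound. The main obstacle, as indicated, is not the probabilistic argument (which is one application of Markov and independence) but the verification of the two analytic inequalities relating $(\mu \pm t)\log(1 \pm t/\mu)$ to the Bernstein denominators $2(\mu \pm t/3)$; these are standard but need care with the sign conventions. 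Because the lemma is cited verbatim from \cite{FK}, one could of course shortcut the whole proof by simply invoking that reference.
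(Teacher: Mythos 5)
Your proposal is correct. Note, however, that the paper does not prove this lemma at all --- it is imported verbatim from Frieze and Karo\'{n}ski \cite{FK} --- so there is no internal proof to compare against; what you have written is a complete standalone derivation by the standard exponential-moment method, and every step checks out: the convexity bound $\E[e^{\lambda X_i}]\le 1+\mu_i(e^{\lambda}-1)\le e^{\mu_i(e^{\lambda}-1)}$ is exactly what is needed to handle non-Bernoulli $X_i\in[0,1]$, the optimal choices $\lambda=\log(1+t/\mu)$ and $\lambda=\log c$ give the Bennett-type exponent and the third bound respectively, and the identity $c\log c-c+1=c\log(c/\e)+1$ closes the last part. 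The only place deserving extra care is the lower tail: the stated denominator $2(\mu-t/3)$ is \emph{stronger} than the more commonly quoted $2\mu$ or $2(\mu+t/3)$ forms, so the analytic inequality $(\mu-t)\log(1-t/\mu)+t\ge \frac{t^2}{2(\mu-t/3)}$ is not one you can wave away as folklore. It does hold, and your suggested strategy works: writing $v=t/\mu\in(0,1)$, the second derivative of the difference reduces to $\frac{1}{1-v}-\frac{27}{(3-v)^3}\ge 0$, equivalently $(3-v)^3\ge 27(1-v)$, which follows since the function $(3-v)^3-27(1-v)$ vanishes to second order at $v=0$ and has positive second derivative $6(3-v)$. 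The analogous computation for the upper tail reduces to $(3+u)^3\ge 27(1+u)$. With these two calculus facts filled in, your proof is complete.
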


As a preliminary application of Lemma~\ref{lem:chernoff}, we prove the following simple fact for the number of edges of a small-world graph $G\sim \Gc_{n,r}$ (recall that $N=(2n+1)^2$ is the number of vertices in $G$).
\begin{lemma}\label{lem:numedges}
Let $r\geq 0$. Then, with probability $1-\exp(-\Omega(n))$ over the choice of $G=(V,E)\sim \Gc_{n,r}$, it holds that $|E|=5N/2+\Theta(N^{3/4})$.
\end{lemma}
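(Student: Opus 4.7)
The edges of $G$ split into two disjoint groups. The torus edges are deterministic: since the torus is $4$-regular, there are exactly $4N/2 = 2N$ of them. The long-range edges form a random set of size $X = \sum_{\{u,v\}:\, \dist{u}{v}\geq 2} X_{\{u,v\}}$, where the $X_{\{u,v\}}$ are independent Bernoulli random variables with $\Pr(X_{\{u,v\}}=1) = \dist{u}{v}^{-r}/Z$. Since $|E| = 2N + X$, it suffices to show that $|X - N/2| = O(N^{3/4})$ with probability $1-\exp(-\Omega(n))$.

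First, I will compute $\mathbb{E}[X] = N/2$ directly from the defining identity~\eqref{eq:defZ} of $Z$. By the vertex-transitivity of the torus, the expected number of long-range edges incident to any fixed vertex $v$ equals $\sum_{w \neq v,\, \dist{v}{w}\geq 2}\dist{v}{w}^{-r}/Z = 1$; summing over all $N$ vertices and dividing by $2$ to correct for double-counting yields $\mathbb{E}[X] = N/2$.

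Second, I will apply the Chernoff bound of Lemma~\ref{lem:chernoff} with $\mu = N/2$ and $t = N^{3/4}$. Both the upper- and lower-tail inequalities there give
\[
\Pr\bigl(|X - N/2| \geq N^{3/4}\bigr) \leq 2\exp\!\Big(\!-\tfrac{N^{3/2}}{2(N/2 + N^{3/4}/3)}\Big) = \exp\bigl(-\Omega(N^{1/2})\bigr) = \exp\bigl(-\Omega(n)\bigr),
\]
since $N = \Theta(n^2)$. Adding the $2N$ deterministic torus edges on this high-probability event then gives $|E| = 5N/2 + O(N^{3/4})$, matching the claim of the lemma.

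The argument presents no real obstacle; it is a textbook Chernoff-type concentration estimate on a sum of independent Bernoullis whose mean is pinned down exactly by the definition of $Z$. The only design choice is the threshold $t$, and $t = N^{3/4}$ is calibrated precisely so that the resulting exponent $t^2/\mu \asymp N^{1/2}$ reaches the desired failure probability of $\exp(-\Omega(n))$; any smaller $t$ would fall short of the required tail, and any larger $t$ would give a weaker deviation bound than claimed.
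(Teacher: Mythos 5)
Your proof is correct and follows essentially the same route as the paper: decompose $|E|$ into the $2N$ deterministic torus edges plus the sum of independent long-range edge indicators, note that the normalisation of $Z$ forces the mean of that sum to be $N/2$, and apply Lemma~\ref{lem:chernoff}. The paper leaves the choice of $t$ implicit, whereas you make the calibration $t=N^{3/4}$ explicit; nothing further is needed.
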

\begin{proof}
Let $\binom{V}{2}:=\big\{\{u,w\}\mid u,w\in V, u\neq w\big\}$ denote the set of all unordered pairs of vertices. For $\{u,w\}\in \binom{V}{2}$, let $Y_{u,w}$ be the indicator r.v. that there is a long-range edge between $u,w$. Moreover, let
\[Y=\sum_{\{u,w\}\in \binom{V}{2}}Y_{u,w}, \mbox{ and note that } |E|=2N+Y.\]
By the definition of the model $\Gc_{n,r}$, every vertex adds in expectation one long-range edge and hence $\E_{\Gc_{n,r}}[Y]=N/2$.
The lemma follows by applying Lemma~\ref{lem:chernoff}.
\end{proof}

\subsection{Edge isoperimetric inequality on the torus}
We will use the following edge isoperimetric inequality on the torus to lower bound the conductance of small sets $S$ (i.e., sets $S$ with $|S|=O(\log n)$). 

\begin{theorem}[{\cite{BLtorus}}]\label{thm:torusexpansion}
Let $n$ be a positive integer and consider the torus $T=(B_n,E_n)$. For every nonempty set $S\subseteq B_n$ with $|S|\leq N/2$, it holds that 
\[|\partial^* S|\geq \min\{2n+1,2\sqrt{|S|}\}.\]
\end{theorem}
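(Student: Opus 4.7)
The plan is to bound $|\partial^* S|$ by projecting $S$ onto the two coordinate axes and exploiting the one-dimensional fact that any nonempty proper subset of the cycle $C_{2n+1}$ has edge boundary at least $2$. Let $a$ denote the number of distinct $x$-coordinates appearing in $S$, and $b$ the number of distinct $y$-coordinates; writing $A$ and $B$ for the two projections, $S \subseteq A\times B$ gives $|S|\leq ab$.

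First I would dispose of the generic case $a,b \leq 2n$. For each $y\in B$, the slice $R_y=\{x:(x,y)\in S\}$ is a nonempty \emph{proper} subset of the $x$-cycle (proper because $R_y\subseteq A$ has size $\leq a<2n+1$), and hence contributes at least $2$ horizontal torus-edges to $\partial^* S$. Summing over the $b$ nonempty rows gives $\geq 2b$ horizontal edges, and symmetrically $\geq 2a$ vertical edges, so by AM--GM
\[|\partial^* S|\;\geq\; 2(a+b)\;\geq\; 4\sqrt{ab}\;\geq\; 4\sqrt{|S|},\]
which dominates $\min\{2n+1,\,2\sqrt{|S|}\}$ in both regimes: $4\sqrt{|S|}\geq 2\sqrt{|S|}$ trivially, while if $2\sqrt{|S|}>2n+1$ then $4\sqrt{|S|}>2(2n+1)>2n+1$.

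Next I would handle the boundary case where a projection is full. By the symmetry of the torus we may assume $a=2n+1$. If additionally $b\leq 2n$, the same argument applied column-wise (each of the $2n+1$ occupied columns is now a nonempty proper subset of the $y$-cycle) yields $|\partial^* S|\geq 2(2n+1)\geq 2n+1$, as required. The remaining subcase $a=b=2n+1$ is the one I expect to be the main obstacle, because the ``partial slice'' argument breaks down when every slice could in principle be full. I would overcome this by a pigeonhole count: let $N_p, N_f$ be the numbers of rows that are respectively partial and full in $S$, so that $N_p+N_f=2n+1$ (no row is empty since $b=2n+1$), and define $M_p,M_f$ analogously for columns. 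As before, $|\partial^* S|\geq 2N_p+2M_p$. If either $N_p\geq n+1$ or $M_p\geq n+1$ we are done. Otherwise $N_f\geq n+1$ and $M_f\geq n+1$, and merely counting vertices in the full rows yields
\[|S|\;\geq\; N_f\,(2n+1)\;\geq\;(n+1)(2n+1)\;>\;\tfrac{(2n+1)^2}{2}\;=\;\tfrac{N}{2},\]
contradicting the hypothesis $|S|\leq N/2$.

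In summary, the bulk of the proof is the one-dimensional observation applied slice by slice, combined at the end with AM--GM; the density assumption $|S|\leq N/2$ is used only in the final pigeonhole step to rule out the degenerate configuration where both projections fill the entire torus and a majority of slices in each direction are full.
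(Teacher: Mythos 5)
Your proof is correct. Note, however, that the paper does not actually prove Theorem~\ref{thm:torusexpansion}: it is imported as a black box from Bollob\'as and Leader \cite{BLtorus}, whose argument establishes a sharper edge-isoperimetric inequality on the discrete torus in every dimension via compression operators. Your slice-and-project argument is therefore a genuinely different, and far more elementary, route to the two-dimensional special case needed here. The one-dimensional fact (a nonempty proper subset of the cycle $C_{2n+1}$ has edge boundary at least $2$), applied row by row and column by column and combined with AM--GM, in fact yields the slightly stronger conclusion $|\partial^* S|\geq\min\{2n+2,\,4\sqrt{|S|}\}$, which implies the stated bound; the hypothesis $|S|\leq N/2$ enters only in your final pigeonhole step, where $(n+1)(2n+1)>(2n+1)^2/2$ rules out the degenerate configuration in which both projections are full and at least $n+1$ rows and $n+1$ columns lie entirely inside $S$. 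All three cases check out: boundary edges counted in distinct rows (respectively columns) are distinct, horizontal and vertical edges never coincide, and each partial slice is a nonempty proper subset of its cycle. What the citation buys the authors is brevity and the optimal constant in general dimension; what your argument buys is a short self-contained proof with a weaker but entirely sufficient constant.
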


\section{Partitioning the torus, box-like sets and  the box-core}\label{sec:part}
In this section, we partition the torus appropriately and formalise the notion of box-like sets. We also introduce the box-core of a set which will be crucial to do the refined union bound described in Section~\ref{sec:outline}.  Note that all these notions are with respect to the torus (i.e., the random small-world graph is irrelevant in this section).

Namely, to capture the trade-off between the low edge-expansion of sets which are union of boxes and the high edge-expansion of spread-out sets, we partition the torus on square boxes of small side length $\ell$ and study how the set $S$ intersects these boxes. In the case where $\ell$ divides $2n+1$ (the side length of the torus), we can obviously choose all the boxes to be squares.  To handle integrality issues, we will allow the boxes to be rectangles with only slightly unbalanced sides.

For  integers $\ell_1,\ell_2\geq 1$ and integers $m_1,m_2$, we will refer to the set 
\[\{m_1,\hdots,m_1+\ell_1-1\}\times \{m_2,\hdots,m_2+\ell_2-1\}\] 
as a  box with side lengths $\ell_1,\ell_2$.
\begin{definition}\label{def:lpartition}
Let $n,\ell$ be positive integers with $n\geq \ell$.  An $\ell$-partition  of the vertex set $B_n=\{-n,\hdots,n\}^2$ of the torus is a partition of $B_n$ into boxes with side lengths $\ell_1,\ell_2$ which satisfy
\begin{equation}\label{eq:ell1ell2}
\ell\leq \ell_1,\ell_2\leq 2\ell.
\end{equation} 
Note that the box sides $\ell_1,\ell_2$ need not be the same for every box in the partition, but  they must satisfy \eqref{eq:ell1ell2} for every box (and hence such a partition exists by the natural construction\footnote{Namely, starting from the upper left corner of $B_n$, use $\ell\times \ell$ square boxes to cover the largest possible (contiguous) square of $B_n$. Then, enlarge the end boxes of this square to go all the way to the boundaries of $B_n$, this increases only one dimension of these boxes by a factor of at most 2; it now only remains to cover a square at the lower right corner whose side length is at most $2\ell$.}).
\end{definition}

For the rest of this section, we will fix $n,\ell$ and an arbitrary $\ell$-partition of the vertex set of the torus, which we denote by $\mathcal{U}=\{U_{1},\hdots, U_Q\}$ (i.e., $\bigcup_{i\in[Q]} U_i=B_n$ and the sets $U_{i},\, i\in[Q]$ are pairwise disjoint boxes with side lengths satisfying \eqref{eq:ell1ell2}). 
\begin{definition}\label{def:boxlike}
We say that the set $S\subseteq B_n=\{-n,\hdots,n\}^2$ is \emph{box-like}  if for every $U\in \mathcal{U}$ it holds that 
\begin{equation}\label{eq:boxlikecon}
\mbox{ either } S\cap U=\emptyset, \mbox{ or } S\cap U=U.
\end{equation}
\end{definition}
Thus, a box-like set is a union of boxes of the $\ell$-partition of the torus.

\begin{definition}\label{def:boxcore}
The \emph{box-core} of a set $S\subseteq B_n=\{-n,\hdots,n\}^2$, denoted by $\boxcore(S)$, is the largest subset of $S$ which is box-like.
\end{definition}
Note that the box-core of a set $S$ can be found simply by just going over the boxes $U$ in the $\ell$-partition and checking which boxes satisfy \eqref{eq:boxlikecon}. With these definitions, we are now ready to state the main edge-expansion property of the torus which we are going to utilise for $r\leq 2$.
\begin{lemma}\label{lem:torusdecom2}
Let $n,\ell\geq 1$ be integers with $n\geq \ell$. The following holds for any $\ell$-partition of the torus $T=(B_n,E_n)$ and any $\eta\in (0,1)$.

Every set $S\subseteq B_n$ satisfies either that $\big|\boxcore(S)\big|\geq (1-\eta)|S|$,  or else, $|\partial^* S|\geq \eta|S|/(4\ell^2)$.
\end{lemma}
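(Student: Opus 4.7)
The plan is to reduce the dichotomy to a simple vertex/edge counting argument based on the \emph{partially filled} boxes of the $\ell$-partition. Given $S\subseteq B_n$, call a box $U\in\mathcal{U}$ \emph{partially filled} if $\emptyset\subsetneq S\cap U\subsetneq U$, and let $\mathcal{B}\subseteq \mathcal{U}$ collect these boxes. By Definition~\ref{def:boxcore}, $\boxcore(S)$ is the union of those boxes $U\in\mathcal{U}$ with $U\subseteq S$, so the ``missing mass'' $S\setminus \boxcore(S)$ is exactly the union $\bigcup_{U\in\mathcal{B}}(S\cap U)$. Hence the proof will hinge on controlling $|\mathcal{B}|$ in two ways: from above by $|\partial^* S|$, and from below via $|S\setminus\boxcore(S)|$.

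First I would upper-bound $|S\setminus \boxcore(S)|$ in terms of $|\mathcal{B}|$: since each $U\in\mathcal{U}$ has side lengths at most $2\ell$ by \eqref{eq:ell1ell2}, we get $|S\cap U|\leq |U|\leq 4\ell^2$ for each partially filled box, and therefore $|S\setminus\boxcore(S)|\leq 4\ell^2 |\mathcal{B}|$. This is the step that pays the $4\ell^2$ factor appearing in the lemma.

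Next I would bound $|\mathcal{B}|\leq |\partial^* S|$ using the connectivity of each box as an induced subgraph of the torus. For every $U\in\mathcal{B}$, the grid on $U$ is a connected graph (each box is a contiguous rectangle of side length at most $2\ell\leq 2n$, so the periodic wrap-around of the torus plays no role inside $U$). Since $S\cap U$ is a nonempty proper subset of $U$, connectivity guarantees at least one torus edge of $U$ with one endpoint in $S\cap U$ and the other in $U\setminus S$; such an edge lies in $\partial^* S$. As the boxes $U\in\mathcal{U}$ are pairwise disjoint, the edges obtained across different partially filled boxes are distinct, yielding $|\mathcal{B}|\leq |\partial^* S|$.

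To conclude, assume the first alternative fails, i.e.\ $|\boxcore(S)|<(1-\eta)|S|$, so that $|S\setminus\boxcore(S)|>\eta|S|$. Chaining the two bounds gives $\eta|S|<4\ell^2|\partial^* S|$, which rearranges to the desired $|\partial^* S|\geq \eta|S|/(4\ell^2)$. I do not expect any serious obstacle here: the argument is essentially a pigeonhole on the partition, and the only substantive geometric input is the (obvious) fact that each box is a connected subgraph of the torus, so that any nontrivial split inside a box must cross at least one torus edge.
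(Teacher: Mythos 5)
Your proof is correct and follows essentially the same route as the paper: the paper's version packages the same two facts (each partially filled box has at most $4\ell^2$ vertices of $S$ and contributes at least one internal torus edge to $\partial^* S$, by connectivity of the box) as the single per-box inequality $|F_i\cap\partial^* S_i|\geq |S_i|/(4\ell^2)$ and then sums over boxes. No gaps.
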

\begin{proof}
 We will use $G_1=(U_1,F_1),\hdots, G_Q=(U_Q,F_Q)$ to denote the subgraphs of the torus induced on the vertex sets $U_1,\hdots,U_Q$, respectively. Note that, for all $i\in[Q]$, for any set $S_i\subseteq U_i$ with $|S_i|<|U_i|$, it holds that 
\begin{equation}\label{eq:SiFi2}
|F_i\cap \partial^* S_i |\geq |S_i|/(4\ell^2),
\end{equation}
since for every $S_i\neq \emptyset$ there is at least one edge in $F_i\cap \partial^* S_i$ (if $S_i=\emptyset$, then \eqref{eq:SiFi2} holds trivially).

Now, consider an arbitrary set $S\subseteq B_n$, and set $S'=\boxcore(S)$ for convenience. Suppose that $|S'|< (1-\eta)|S|$, we will show that $|\partial^* S|\geq \eta|S|/(4\ell^2)$. Let
\[\mathcal{I}:=\big\{i\in [Q]\,\big|\, |S\cap U_i|< |U_i|\big\}.\]
By the assumption $|S'|< (1-\eta)|S|$, we have that 
\begin{equation}\label{eq:34ffg4g2}
\sum_{i\in \mathcal{I}} |S\cap U_i|\geq \eta|S|.
\end{equation}
Applying \eqref{eq:SiFi2} for $i\in \mathcal{I}$ with $S_i=S\cap U_i$, we obtain
\begin{equation}\label{eq:partialSFi2}
|F_i\cap \partial^* S|= |F_i\cap \partial^* S_i|\geq |S\cap U_i|/(4\ell^2).
\end{equation}
Combining \eqref{eq:34ffg4g2} and \eqref{eq:partialSFi2}, we obtain
\[|\partial^* S|\geq \sum_{i\in \mathcal{I}}|\partial^* S\cap F_i|\geq \sum_{i\in \mathcal{I}} |S\cap U_i|/(4\ell^2)\geq \eta|S|/(4\ell^2),\]
as wanted. This concludes the proof of the lemma.
\end{proof}

\section{The number of connected sets in the small-world network}\label{sec:connectedsets}

Addario-Berry and Lei \cite{AL} analysed the mixing time of the lazy random walk on the Newmann-Watts random graph by using a technique of Fountoulakis and Reed~\cite{FR} --- analysing the conductance of connected sets. They bounded the number of connected sets by examining a related Galton-Watson tree. We use a similar approach but the details are more complicated because the connection probabilities differ between different nodes. The main lemma we will prove in this section is the following.

\newcommand{\statelemnumcon}{Let $r\geq 0$ and $n,\ell$ be positive integers with $n\geq \ell$. Let $\mathcal{U}=\{U_{1},\hdots, U_Q\}$ be an arbitrary $\ell$-partition of the torus $T=(B_n,E_n)$.

For $G=(V,E)\sim \Gc_{n,r}$, let $W_q$ denote the number of box-like sets $S$ which are connected in $G$ and are the union of exactly $q$ boxes in $\mathcal{U}$. Then, $\E_{\Gc_{n,r}}[W_q]\leq n^2 (40\ell^2)^{q}$.
}
\begin{lemma}\label{lem:numcon}
\statelemnumcon
\end{lemma}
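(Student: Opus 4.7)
The plan is to pass to an auxiliary graph $H$ on the box set $\mathcal{U}$ and bound the expected number of connected $q$-subsets of $H$ via a rooted-plane-tree enumeration, with the expected $H$-degree of a box playing the role of mean offspring.

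First I would define $H$ on vertex set $\mathcal{U}$, placing an edge between $U,U'$ iff $G$ contains at least one edge with one endpoint in $U$ and the other in $U'$. Two facts are crucial: (i) the events $\{U,U'\}\in H$ for distinct unordered box-pairs are mutually independent, since they depend on disjoint sets of potential $G$-edges; and (ii) a connected box-like set which is the union of exactly $q$ boxes corresponds exactly to a connected $q$-subset of $H$. Each connected $q$-subset has $q$ choices of root, admits at least one rooted spanning tree at each root, and each rooted (unordered) tree has at least one plane realisation, so
\[ q\cdot W_q \;\leq\; \sum_{U_0\in \mathcal{U}}\#\bigl\{\text{rooted plane trees of size $q$ in $H$ rooted at $U_0$}\bigr\}.\]

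Next I would take expectations. For each $U_0$, I enumerate over the $C_{q-1}$ plane-tree shapes and then over labellings of the non-root nodes by boxes; since $H$-edges are independent, the probability that a labelled plane tree lies in $H$ factors as $\prod_{\text{edges}} p_{U,U'}$, where $p_{U,U'}:=\Pr[\{U,U'\}\in H]$. A short induction on subtree size (dropping the injectivity constraint on labellings for an upper bound) shows this probability-sum is at most $\mu^{q-1}$ for any shape, where $\mu := \max_{U\in \mathcal{U}}\sum_{U'\neq U}p_{U,U'}$. Combined with $C_{q-1}\leq 4^{q-1}/q$ and $Q:=|\mathcal{U}|\leq 9n^2/\ell^2$, this gives
\[\E[W_q]\;\leq\; \frac{Q}{q^2}(4\mu)^{q-1}\;\leq\; \frac{9n^2}{q^2\ell^2}(4\mu)^{q-1}.\]

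The concluding numerical step is to show $\mu\leq 8\ell^2$. Fix $U\in\mathcal{U}$: at most four torus-adjacent boxes contribute $1$ each to $\sum_{U'\neq U}p_{U,U'}$; for any non-torus-adjacent pair $(U,U')$, a union bound on long-range edges gives $p_{U,U'}\leq \sum_{v\in U,v'\in U'}\dist{v}{v'}^{-r}/Z$, and swapping sums produces a total contribution of at most $\sum_{v\in U}\sum_{v'\neq v}\dist{v}{v'}^{-r}/Z=|U|$ by the definition of $Z$. Hence $\mu\leq 4+|U|\leq 4+4\ell^2\leq 8\ell^2$, and substituting yields $\E[W_q]\leq 9n^2(32\ell^2)^{q-1}/(q^2\ell^2)\leq n^2(40\ell^2)^q$, where the last step reduces to $9/(40q^2\ell^4)\leq (5/4)^{q-1}$, which holds for all $q,\ell\geq 1$. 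I expect the main obstacle to be this constant-tuning: the Catalan factor contributes $4^{q-1}$, so the target base $40\ell^2$ requires both the sharper $C_n\leq 4^n/(n+1)$ estimate (to absorb a factor of $1/q$ into the $n^2$ prefactor) and the careful ``$4+|U|$'' decomposition of the expected $H$-degree, rather than the cruder bound $5|U|$ one would obtain by simply bounding the expected $G$-degree of each vertex in $U$.
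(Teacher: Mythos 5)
Your proof is correct, and its first half coincides with the paper's: both pass to the auxiliary box graph on $\mathcal{U}$, whose edges are mutually independent because distinct box pairs are decided by disjoint collections of long-range indicators. Where you diverge is in counting the connected $q$-subsets of that graph. The paper routes through its general machinery: Lemma~\ref{lem:conngalton} bounds the expected number of connected sets through a given box by the expected number of rooted subtrees of the tree process $\Tc^{\,i}_{Q,\pb}$, which is then stochastically dominated by a Galton--Watson tree with offspring $Y=4+\sum_{j=1}^{4\ell^2}X_j$ (so $\E[Y]=4+4\ell^2$), and Corollary~\ref{cor:bintrees} (the profile lemma adapted from Addario-Berry and Lei) yields $(4\E[Y])^{q-1}$. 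You instead perform the first-moment computation directly: enumerate rooted plane tree shapes ($C_{q-1}\le 4^{q-1}/q$ of them), factorise the edge probabilities by independence, and sum over labellings by a leaf-up induction to get $\mu^{q-1}$ with $\mu=\max_U\sum_{U'\neq U}p_{U,U'}\le 4+4\ell^2$. The two routes produce the same base $4\times(4+4\ell^2)\le 40\ell^2$; yours is more self-contained (no domination coupling, no profiles) and handles the inhomogeneity of the $p_{U,U'}$ transparently through the single quantity $\mu$, while the paper's pays the overhead once and reuses Lemma~\ref{lem:conngalton} and Corollary~\ref{cor:bintrees} later (e.g.\ with $\ell=1$ in the proof of Lemma~\ref{lem:avgdegtwo}). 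One cosmetic slip: the identity $\sum_{v\in U}\sum_{v'\ne v}\dist{v}{v'}^{-r}/Z=|U|$ is not literally true, since $Z$ in \eqref{eq:defZ} omits the four distance-$1$ terms; but long-range edges are only placed between vertices at torus distance at least $2$, so the union bound should be (and correctly can be) restricted to such pairs, where it gives exactly $|U|$, and your separate ``$+4$'' already accounts for the torus-adjacent boxes.
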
 

\subsection{The number of subtrees of a Galton-Watson Tree with given size}

Let $X$ be a random variable supported on the non-negative integers. Recall that the Galton-Watson branching process with offspring distribution $X$ is a random rooted tree $\mathcal{T}$ where for every vertex of the tree the number of offspring vertices is independent with distribution $X$. For integer $j\geq 0$, let $p_j=\Pr[X=j]$ and 
\[q_j=\sum_{i\geq j}p_i \frac{i!}{(i-j)!}.\]
Note that $q_j$ is the expected number of ways to choose and order exactly $j$ children of the root. We will refer to the sequence $\mathbf{q}:=\{q_j\}_{j\geq 0}$ as the \emph{profile} of the random variable $X$.

\begin{lemma}[{\cite[Lemma 2]{AL}}]\label{lem:galton}
Let $\mathcal{T}$ be a Galton-Watson process with offspring distribution $X$. Suppose that there exists $C>0$ such that the profile $\mathbf{q}=\{q_j\}_{j\geq 0}$ of $X$ satisfies $q_j\leq C^j$ for all integer $j\geq 0$. 

Then, for all integer $k\geq 1$, the expected number of subtrees of $\mathcal{T}$ which contain the root and have exactly $k$ vertices is at most $(4C)^{k-1}$.
\end{lemma}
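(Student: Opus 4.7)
The plan is to recast the problem as counting connected subsets in an auxiliary ``box graph'' and then to bound this count via a Galton-Watson tree so that Lemma~\ref{lem:galton} applies. Let $H$ be the graph on vertex set $[Q]$ with $\{i,j\}\in E(H)$ iff some edge of $G$ has one endpoint in $U_i$ and the other in $U_j$. Since each $U_i$ is internally connected via torus edges, a box-like set $S=\bigcup_{i\in I}U_i$ is connected in $G$ iff $I$ is connected in $H$, so $W_q$ equals the number of connected $q$-subsets of $[Q]$ in $H$. The payoff of this reformulation is that the events $\{\{i,j\}\in E(H)\}$, across distinct unordered pairs $\{i,j\}$, are mutually independent, because they are determined by disjoint collections of torus and (independent) long-range edges of $G$. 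Write $p_{ij}:=\Pr[\{i,j\}\in E(H)]$.

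For fixed $i\in[Q]$, let $f_i$ be the number of connected $q$-subsets of $H$ containing $i$, so that $\sum_{i}f_i=qW_q$. Every such subset carries at least one spanning tree rooted at $i$, and by edge-independence in $H$,
\[
\E_{\Gc_{n,r}}[f_i]\;\leq\;\sum_{T}\prod_{e\in E(T)}p_e,
\]
where $T$ ranges over (unordered) rooted labeled trees of size $q$ with root $i$ and pairwise distinct labels in $[Q]$. I will dominate this sum by the expected number of size-$q$ subtrees of the following (multi-type) Galton-Watson tree $\mathcal{T}_i$ with root labeled $i$: at each existing node labeled $j$, flip an independent coin of success probability $p_{jk}$ for every $k\in[Q]\setminus\{j\}$, and add a child labeled $k$ for each success. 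A linearity-of-expectation argument (using that children of any node of $\mathcal{T}_i$ have pairwise distinct labels) shows that $\E[\text{\# size-}q\text{ subtrees of }\mathcal{T}_i]=\sum_{T^*}\prod_e p_e$, where $T^*$ ranges over ordered rooted labeled trees of size $q$ with root $i$ whose labels are distinct only among siblings; every unordered distinct-label $T$ above arises, with the same edge-weight product, from at least one such $T^*$, so the GW subtree count dominates $\E[f_i]$.

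To invoke Lemma~\ref{lem:galton}, I need a uniform profile bound on the offspring distribution. At a node of $\mathcal{T}_i$ labeled $j$, the number of children is a sum of independent Bernoullis with total mean $D_j:=\sum_{k\neq j}p_{jk}$, so its $t$-th factorial moment is at most $D_j^t$. I bound $D_j$ as follows: at most four boxes are torus-adjacent to $U_j$, contributing at most $4$ to $D_j$; for every other $k$, $p_{jk}\leq \sum_{u\in U_j,\,v\in U_k}\dist{u}{v}^{-r}/Z$, and summing gives $\sum_{u\in U_j,\,v\notin U_j}\dist{u}{v}^{-r}/Z\leq |U_j|\leq 4\ell^2$, since every torus vertex has expected long-range degree exactly $1$. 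Hence $D_j\leq 4+4\ell^2\leq 8\ell^2$ uniformly in $j$, and the offspring profile satisfies $q_t\leq (8\ell^2)^t$. Taking $C=8\ell^2$ in Lemma~\ref{lem:galton} (whose proof uses only the uniform profile bound and hence adapts to this multi-type setting) yields $\E[f_i]\leq (32\ell^2)^{q-1}$. Summing over $i$ and using $Q\leq (2n+1)^2/\ell^2\leq 9n^2/\ell^2$,
\[
\E_{\Gc_{n,r}}[W_q]\;=\;\tfrac{1}{q}\sum_{i\in[Q]}\E[f_i]\;\leq\;\tfrac{Q}{q}(32\ell^2)^{q-1}\;\leq\;\tfrac{9n^2\,(32\ell^2)^{q-1}}{q\,\ell^2}\;\leq\;n^2(40\ell^2)^q,
\]
the last inequality reducing to the routine check $9(4/5)^q/(32q)\leq 1$ for every integer $q\geq 1$.

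The main technical obstacle is the domination step linking $\E[f_i]$ (an expectation on the small-world probability space) to the expected size-$q$ subtree count of $\mathcal{T}_i$ (an expectation over an independent family of fresh coins). The two sides live on different probability spaces, but each can be written as a weighted sum $\sum\prod_e p_e$ over a family of rooted labeled trees; one must verify carefully that the GW family (ordered, labels only required distinct among siblings) is at least as rich as the spanning-tree family (unordered, labels globally distinct), so that weighted sums compare correctly. A secondary technical point is that $\mathcal{T}_i$ is multi-type, which requires either noting that the proof of Lemma~\ref{lem:galton} carries over verbatim whenever the profile bound $q_t\leq C^t$ is uniform across types, or explicitly reducing to a single-type dominating Galton-Watson tree with the same uniform profile bound.
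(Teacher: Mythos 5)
You have not proved the statement you were given. The statement is Lemma~\ref{lem:galton} itself, i.e.\ the Galton--Watson subtree-counting bound (which the paper quotes from Addario-Berry and Lei rather than reproving): if the profile of the offspring distribution satisfies $q_j\le C^j$ for all $j$, then the expected number of $k$-vertex subtrees containing the root is at most $(4C)^{k-1}$. What you wrote is instead a proof of Lemma~\ref{lem:numcon} (the bound $\E_{\Gc_{n,r}}[W_q]\le n^2(40\ell^2)^q$ for connected box-like sets), and it runs broadly parallel to the paper's Section~\ref{sec:connectedsets}: pass to the box graph, dominate the number of connected $q$-sets through a fixed root box by subtrees of a branching process whose mean offspring is at most $4+4\ell^2$, and then apply the subtree-counting bound. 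At the decisive step you write ``Taking $C=8\ell^2$ in Lemma~\ref{lem:galton} \dots yields $\E[f_i]\le(32\ell^2)^{q-1}$'', i.e.\ you invoke the very lemma you were asked to prove as a black box. Relative to the assigned statement the argument is therefore circular, and the bound $(4C)^{k-1}$ is never derived anywhere in your write-up.

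The missing content is the whole of the lemma: one must show that the profile hypothesis alone forces the expected number of $k$-vertex root-containing subtrees to be at most $(4C)^{k-1}$. A standard route (essentially that of Addario-Berry and Lei) is to classify subtrees by their plane shape: the number of rooted plane trees on $k$ vertices is the Catalan number $\frac{1}{k}\binom{2(k-1)}{k-1}\le 4^{k-1}$, and for a fixed shape with child counts $c(v)$ summing to $k-1$, the expected number of realisations of that shape inside $\mathcal{T}$ is controlled node by node (using independence of offspring across nodes, made rigorous by induction on depth or a recursion on the subtree size) by $\prod_v q_{c(v)}\le C^{k-1}$; multiplying the two factors gives $(4C)^{k-1}$. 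Your parenthetical remark that the proof ``uses only the uniform profile bound and hence adapts to the multi-type setting'' presupposes exactly the argument you were asked to supply, and the multi-type extension you rely on also needs that argument spelled out. Had the target been Lemma~\ref{lem:numcon}, your route would be reasonable modulo this same black box (you work with spanning trees and a multi-type branching process where the paper uses the tree process $\Tc^{\,i}_{n,\pb}$ via Lemma~\ref{lem:conngalton}, and your offspring estimate $4+4\ell^2$ matches the paper's), but that is not the statement in question.
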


For us, the relevant offspring distribution will be a sum of independent binomial random variables. The following lemmas are implicit in \cite{AL}, we prove them for completeness.
\begin{lemma}\label{lem:binqj}
Let $n$ be a positive integer and $0< p\leq 1$. Let $Y=\mathrm{Bin}(n,p)$. Then, the profile $\mathbf{q}=\{q_j\}_{j\geq 0}$ of $Y$ satisfies $q_j\leq (n p)^{j}$. 
\end{lemma}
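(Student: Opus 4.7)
The plan is to recognise that $q_j$ as defined in the paper is precisely the $j$-th factorial moment of $Y$, namely $\E[Y(Y-1)\cdots(Y-j+1)]$. Indeed, if we write $p_i = \Pr[Y=i]$, then
\[
q_j = \sum_{i \geq j} p_i \frac{i!}{(i-j)!} = \sum_{i \geq j} p_i \cdot i(i-1)\cdots(i-j+1) = \E[Y^{\underline{j}}],
\]
where $Y^{\underline{j}} := Y(Y-1)\cdots(Y-j+1)$ denotes the falling factorial. So the whole task reduces to computing (or bounding) the factorial moment of a binomial.

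For $Y = \mathrm{Bin}(n,p)$, I would compute this moment directly from the binomial pmf. Substituting $p_i = \binom{n}{i} p^i (1-p)^{n-i}$ and simplifying,
\[
q_j = \sum_{i=j}^{n} \frac{n!}{(n-i)!(i-j)!} p^i (1-p)^{n-i} = n^{\underline{j}}\, p^j \sum_{k=0}^{n-j} \binom{n-j}{k} p^k (1-p)^{n-j-k},
\]
where I set $k = i-j$ and pulled out the factor $n!/(n-j)! = n^{\underline{j}}$. The remaining sum is the total mass of a $\mathrm{Bin}(n-j,p)$ distribution, hence equals $1$. Therefore $q_j = n^{\underline{j}}\, p^j$. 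Since $n^{\underline{j}} = n(n-1)\cdots(n-j+1) \leq n^j$, this gives $q_j \leq (np)^j$, as claimed.

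There is no real obstacle here: the only thing to be slightly careful about is the combinatorial manipulation that converts the sum into a binomial mass, which is a standard identity. (An entirely equivalent route would be to write $Y = X_1 + \cdots + X_n$ as a sum of i.i.d.\ Bernoulli($p$) variables and expand $\E[Y^{\underline{j}}] = \sum_{(i_1,\ldots,i_j)} \E[X_{i_1} \cdots X_{i_j}]$ over ordered tuples of distinct indices, giving $n^{\underline{j}} p^j$ directly; but the direct pmf calculation is shortest.)
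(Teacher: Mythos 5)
Your proof is correct and is essentially the same calculation as in the paper: both expand $q_j$ using the binomial pmf, factor out $n!/(n-j)!$ (which you keep exactly as $n^{\underline{j}}$ while the paper bounds it by $n^j$ immediately), and recognise the remaining sum as the total mass of a $\mathrm{Bin}(n-j,p)$ distribution. The only cosmetic difference is that you obtain the exact identity $q_j = n^{\underline{j}}p^j$ before bounding, and the paper also notes the trivial case $j>n$ explicitly.
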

\begin{proof}[Proof of Lemma~\ref{lem:binqj}]
Let $j\geq 0$ be an arbitrary integer. We may assume that $j\leq n$, otherwise $q_j=0$ and the inequality $q_j\leq (n p)^{j}$ holds trivially. We have
\begin{align*}
q_j&=\sum^{n}_{i= j}\binom{n}{i}p^i(1-p)^{n-i}\frac{i!}{(i-j)!}=\sum^{n}_{i= j}\frac{n!}{(n-i)!(i-j)!}p^i(1-p)^{n-i}\\
&\leq n^j p^j\sum^{n}_{i= j}\frac{(n-j)!}{(n-i)!(i-j)!}p^{i-j}(1-p)^{n-i}=(np)^j.
\end{align*}
This completes the proof.
\end{proof}
\begin{lemma}\label{lem:sumindqj}
Let $Y^{(1)}, Y^{(2)},\hdots, Y^{(n)}$ be independent random variables supported on the non-negative integers, and let $Y=\sum^n_{t=1}Y^{(t)}$. Let $\mathbf{q}=\{q_j\}_{j\geq 0}$ denote the profile of $Y$ and, for $t=1,\hdots,n$, let $\mathbf{q}^{(t)}=\{q_j^{(t)}\}_{j\geq 0}$ denote the profile of $Y^{(t)}$.

Then, for all integer $j\geq 0$, it holds that 
\[q_{j}=\sum_{\substack{j_1,\hdots,j_n\geq 0;\\ j_1+\cdots+j_n=j}} \binom{j}{j_1,\hdots,j_n}\prod^n_{t=1}q^{(t)}_{j_t}.\]
Further, assuming that $C_t$, $t=1,\hdots,n$, are such that $q^{(t)}_j\leq (C_t)^{j}$ for all nonnegative integers $j$, then it holds that $q_j\leq C^{j}$ for all nonnegative integers $j$, where $C=C_1+\cdots+C_t$.
\end{lemma}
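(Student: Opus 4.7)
The plan is to recognise $q_j$ as the $j$-th falling factorial moment of the corresponding random variable. Indeed, unwinding the definition, $q_j = \sum_{i\geq j} \Pr[Y=i]\, (i)_j = \mathbf{E}[(Y)_j]$, where $(i)_j := i(i-1)\cdots(i-j+1)$ denotes the falling factorial; similarly $q_j^{(t)} = \mathbf{E}[(Y^{(t)})_j]$. Combinatorially, $(Y)_j$ counts ordered $j$-tuples of distinct elements drawn from a set of size $Y$. The first step is then to prove the pointwise identity
\[
(y_1 + \cdots + y_n)_j \;=\; \sum_{\substack{j_1,\ldots,j_n\geq 0\\ j_1+\cdots+j_n=j}} \binom{j}{j_1,\ldots,j_n}\prod_{t=1}^{n}(y_t)_{j_t}
\]
for all nonnegative integers $y_1,\ldots,y_n$ (this is the multinomial form of the Chu--Vandermonde identity). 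The combinatorial proof is immediate: to specify an ordered $j$-tuple of distinct elements from a disjoint union of $n$ sets of sizes $y_1,\ldots,y_n$, first choose how many elements $j_t$ come from set $t$ (with $\sum_t j_t = j$), then choose which of the $j$ positions are assigned to each set in $\binom{j}{j_1,\ldots,j_n}$ ways, and then pick an ordered $j_t$-tuple of distinct elements from set $t$ in $(y_t)_{j_t}$ ways.

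Applying this identity pointwise with $y_t = Y^{(t)}$ gives $(Y)_j = \sum_{j_1+\cdots+j_n=j} \binom{j}{j_1,\ldots,j_n}\prod_t (Y^{(t)})_{j_t}$. Taking expectations and using the mutual independence of $Y^{(1)},\ldots,Y^{(n)}$ (which lets the expectation of the product become the product of expectations) yields the claimed formula
\[
q_j \;=\; \sum_{\substack{j_1,\ldots,j_n\geq 0\\ j_1+\cdots+j_n=j}} \binom{j}{j_1,\ldots,j_n}\prod_{t=1}^{n} q_{j_t}^{(t)}.
\]

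For the second part of the lemma, substitute the hypothesis $q_{j_t}^{(t)}\leq C_t^{j_t}$ into the formula and apply the multinomial theorem:
\[
q_j \;\leq\; \sum_{\substack{j_1,\ldots,j_n\geq 0\\ j_1+\cdots+j_n=j}} \binom{j}{j_1,\ldots,j_n}\prod_{t=1}^{n} C_t^{j_t} \;=\; (C_1+\cdots+C_n)^j \;=\; C^j,
\]
as required. I expect no real obstacle in the argument; the only subtle ingredient is the falling-factorial Vandermonde identity, and the rest is bookkeeping (independence to separate the expectation, and the multinomial theorem to collapse the sum).
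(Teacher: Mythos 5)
Your proof is correct, and it is in fact the route the paper explicitly mentions and then declines to take: the authors remark that ``one can do this probabilistically using the independence of the random variables and the combinatorial interpretation of the profile'' before giving what they call a simple analytic proof instead. Their argument stays at the level of probability mass functions: they expand $p_j$ as the convolution of the $p^{(t)}_{i_t}$, insert the binomial form of the Vandermonde identity $\binom{i_1+\cdots+i_n}{j}=\sum\prod_t\binom{i_t}{j_t}$, and then swap the order of summation to reassemble the inner sums into the $q^{(t)}_{j_t}$. You instead identify $q_j$ as the $j$-th factorial moment $\E[(Y)_j]$, prove the pointwise falling-factorial identity $(y_1+\cdots+y_n)_j=\sum\binom{j}{j_1,\ldots,j_n}\prod_t(y_t)_{j_t}$ by counting injections, and then take expectations using independence. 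The two proofs rest on the same combinatorial identity (Vandermonde, in equivalent binomial and falling-factorial guises) and the same use of independence; yours packages it more conceptually, with the bookkeeping of the summation swap replaced by linearity of expectation, while the paper's version keeps everything as explicit series manipulations. Both interchanges are justified since all terms are nonnegative. Your handling of the second part via the multinomial theorem is identical to the paper's (and you correctly read the statement's $C=C_1+\cdots+C_t$ as the obvious typo for $C_1+\cdots+C_n$).
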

\begin{proof}[Proof of Lemma~\ref{lem:sumindqj}]
The first part follows probabilistically, using the   independence of the random variables  $Y^{(1)},\hdots,Y^{(n)}$ and the combinatorial interpretation of the profile. The second part of the lemma follows immediately from the first part using the multinomial theorem.
\end{proof}
Combining Lemmas~\ref{lem:galton}, \ref{lem:binqj} and~\ref{lem:sumindqj}, we obtain the following corollary.
\begin{corollary}\label{cor:bintrees}
Let $\mathcal{T}$ be a Galton-Watson process with offspring distribution $X$, where $X$ is a sum of independent binomial random variables. 

Then, for all integer $k\geq 1$, the expected number of subtrees of $\mathcal{T}$ which contain the root and have exactly $k$ vertices is at most $(4E[X])^{k-1}$.
\end{corollary}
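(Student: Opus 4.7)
The plan is to pass from $G$ to an auxiliary ``box graph'' $H$ on vertex set $\mathcal{U}$ and bound the number of connected $q$-subsets of $H$ via a Galton-Watson coupling in the spirit of~\cite{AL}. Place an edge between $U_i$ and $U_j$ in $H$ whenever some edge of $G$ connects a vertex of $U_i$ to a vertex of $U_j$. Since each box is a rectangular grid patch, and hence internally connected via torus edges, a box-like set $S=\bigcup_{i\in I}U_i$ is connected in $G$ iff $\{U_i:i\in I\}$ is connected in $H$; in particular $W_q$ equals the number of connected $q$-vertex subsets of $H$. Moreover, distinct box-edges of $H$ are mutually independent, since the event $\{U_a,U_b\}\in H$ is determined by the long-range indicators on $U_a\times U_b$ alone and these sets are pairwise disjoint across different unordered box pairs.

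Next I would control the $H$-degree of each box $U_j$ by the random variable $\hat{Z}_j:=4+X_j$, where the $4$ upper-bounds the torus-adjacent boxes and $X_j$ is the number of long-range edges incident to $U_j$ with other endpoint outside $U_j$; then $\deg_H(U_j)\leq\hat{Z}_j$. Now $X_j$ is a sum of independent Bernoullis of parameters $\dist{u}{v}^{-r}/Z$ (one per pair $u\in U_j$, $v\notin U_j$ with $\dist{u}{v}\geq 2$), so by the normalisation of $Z$ we obtain $\E[X_j]\leq |U_j|\leq 4\ell^2$ and hence $\E[\hat{Z}_j]\leq 8\ell^2$ uniformly in $j$. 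Combining Lemmas~\ref{lem:binqj} and~\ref{lem:sumindqj} yields the uniform profile bound $q_k^{(\hat{Z}_j)}\leq(8\ell^2)^k$.

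I would then fix a root box $U_{i_0}$. Every connected $q$-subset of $H$ containing $U_{i_0}$ admits a spanning tree rooted at $U_{i_0}$, so by the independence of box-edges the expected number of such subsets is at most
\[\sum_T \prod_{e\in T} \Pr\nolimits_{\Gc_{n,r}}[e\in H],\]
where $T$ ranges over rooted labelled trees on $q$ vertices with root $U_{i_0}$ and remaining vertices labelled by distinct boxes. This expression equals the expected number of size-$q$ subtrees containing the root of a (multitype) Galton-Watson tree $\mathcal{B}$ in which, at a vertex of type $U_j$, each potential child $U_i$ (with $i\neq j$) is born independently with probability $\Pr_{\Gc_{n,r}}[\{U_j,U_i\}\in H]$. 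Applying Corollary~\ref{cor:bintrees} with $C=8\ell^2$ (in its multitype form, see below) upper-bounds this count by $(4\cdot 8\ell^2)^{q-1}=(32\ell^2)^{q-1}$. Summing over the $Q\leq N/\ell^2\leq 9n^2/\ell^2$ choices of root yields
\[\E_{\Gc_{n,r}}[W_q]\leq \tfrac{9n^2}{\ell^2}(32\ell^2)^{q-1}\leq n^2(40\ell^2)^q,\]
the last inequality rearranging to $9\leq 40\ell^4(5/4)^{q-1}$, which is immediate for $\ell,q\geq 1$.

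The main technical point is that $\mathcal{B}$ is multitype, whereas Corollary~\ref{cor:bintrees} is stated for a single offspring distribution. However, the inductive proof of its underlying Lemma~\ref{lem:galton} (from~\cite{AL}) uses only the profile bound $q_j\leq C^j$, so the argument carries through unchanged under the uniform bound $q_k^{(\hat{Z}_j)}\leq(8\ell^2)^k$. Equivalently, one may couple $\mathcal{B}$ with a monotype Galton-Watson tree whose offspring stochastically dominates every $\hat{Z}_j$; since $x\mapsto x!/(x-k)!$ is non-decreasing on the non-negative integers, stochastic dominance preserves the profile bound and Corollary~\ref{cor:bintrees} applies to the dominating monotype tree, whose subtree count upper-bounds that of $\mathcal{B}$.
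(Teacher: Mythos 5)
Your proposal does not prove the statement it was asked to prove. The statement is Corollary~\ref{cor:bintrees}: for a (single-type) Galton--Watson tree whose offspring distribution $X$ is a sum of independent binomials, the expected number of size-$k$ subtrees containing the root is at most $(4\E[X])^{k-1}$. What you have written is instead a proof sketch of Lemma~\ref{lem:numcon} (the bound $\E_{\Gc_{n,r}}[W_q]\leq n^2(40\ell^2)^q$ on connected box-like sets), and in the middle of it you explicitly \emph{apply} Corollary~\ref{cor:bintrees} ``with $C=8\ell^2$'' as a black box. Relative to the assigned task this is circular: the corollary itself is never established, only invoked. The closest you come is the parenthetical remark that the inductive proof of Lemma~\ref{lem:galton} only needs the profile bound $q_j\leq C^j$, and the passing line ``Combining Lemmas~\ref{lem:binqj} and~\ref{lem:sumindqj} yields the uniform profile bound,'' but these are asides inside a different argument, not a proof of the general statement, and in your application the constant $8\ell^2$ is an upper bound on an expectation rather than the expectation itself, so even implicitly you have not derived the claimed constant $4\E[X]$.

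For the record, the paper's proof of Corollary~\ref{cor:bintrees} is a short combination of the three preceding lemmas: write $X=\sum_t Y^{(t)}$ with $Y^{(t)}=\mathrm{Bin}(n_t,p_t)$ independent; Lemma~\ref{lem:binqj} gives the profile bound $q^{(t)}_j\leq (n_tp_t)^j$ for each summand; Lemma~\ref{lem:sumindqj} then gives $q_j\leq C^j$ for $X$ with $C=\sum_t n_tp_t=\E[X]$; and Lemma~\ref{lem:galton} converts this profile bound into the stated bound $(4C)^{k-1}=(4\E[X])^{k-1}$ on the expected number of size-$k$ root subtrees. That two-line chain, with the observation that the constant produced by Lemma~\ref{lem:sumindqj} is exactly $\E[X]$, is what was required; the box-graph construction, the coupling with the tree process, and the multitype discussion in your write-up all belong to the proof of Lemma~\ref{lem:numcon}, which is a different (later) statement.
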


\subsection{Dominating the number of trees in random graphs by a branching process}
To use Corollary~\ref{cor:bintrees} for the proof of Lemma~\ref{lem:numcon}, we need to bound the number of connected (box-like) sets in $G\sim\Gc_{n,r}$ by the number of subtrees in an appropriately defined Galton-Watson tree. We do this in a rather general setup so that we can account for the graph distribution $\Gc_{n,r}$ and its relevant variants that we will need in the proof of Lemma~\ref{lem:numcon}. Our goal is to account for the fact that the graph is obtained in a non-uniform way in the sense that each pair of vertices are connected with probability that depends on the labelling of the pair.  

Throughout this section, we will let $n$ be a positive integer,  $[n]$ be the set $\{1,\hdots,n\}$ (the vertex set of the graph)  and $\pb=\{p_{ij}\}_{i,j\in [n]}$ be a symmetric matrix whose entries are in the interval $[0,1]$. 

\begin{definition}[The graph distribution $\Gc_{n,\pb}$]\label{def:Gnp}
The random graph $G=([n],E)\sim \Gc_{n,\pb}$ is obtained  by adding independently, for every pair of vertices $i,j\in [n]$,  the edge $\{i,j\}$ with probability $p_{ij}$.
\end{definition}

\begin{definition}[The tree process $\Tc^{\,i}_{n,\pb}$]\label{def:Tnp}
Let $i\in[n]$. The tree process $\Tc^{\,i}_{n,\pb}$ is a random tree rooted at $i$ whose nodes at distance $\ell\geq 1$ from the root are labelled with an element of $[n]^{\ell+1}$ (that is, a node at distance $\ell$ from the root will be labelled with a sequence of $\ell+1$ elements of $[n]$).

Initialise the process by setting $R_0=\{i\}$. For $\ell\geq 0$, suppose that we have constructed $R_\ell\subseteq [n]^{\ell+1}$. To construct $R_{\ell+1}$, for each node $\xb=(x_0,\hdots,x_\ell)\in R_\ell$, do the following:
\begin{equation}\label{eq:4r34}
\begin{aligned}
&\mbox{For each $j\in [n]$, toss independently a random coin with heads probability $p_{x_\ell,j}$.  If the coin }\\
&\mbox{comes up heads, then add $\yb:=(x_0,\hdots,x_\ell,j)$ in $R_{\ell+1}$ and connect $\xb$ and $\yb$ with an edge.}
\end{aligned}
\end{equation}
\end{definition}
Note that in \eqref{eq:4r34} the children of a node $\xb=(x_0,\hdots,x_\ell)$ are added with probabilities which depend only on $x_\ell$. We are now ready to show the following.
\begin{lemma}\label{lem:conngalton}
Let $i\in [n]$ and $k\geq 1$ be an integer. For $G\sim \Gc_{n,\pb}$, let $W_k(i)$ be the number of sets $S\subseteq[n]$ which are connected in $G$ and satisfy $|S|=k$ and $i\in S$. Also, for the tree process $\Tc^{\,i}_{n,\pb}$, let  $W_k'(i)$ be the number of subtrees with $k$ vertices  containing the root $i$. Then,
\[\E_{\Gc_{n,\pb}}[W_k(i)]\leq \E_{\Tc^{\,i}_{n,\pb}}[W_k'(i)].\]
\end{lemma}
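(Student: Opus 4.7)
The plan is to bound both $\E_{\Gc_{n,\pb}}[W_k(i)]$ from above and $\E_{\Tc^{\,i}_{n,\pb}}[W_k'(i)]$ from below by the same quantity, namely $\sum_T \prod_{\{u,v\}\in E(T)} p_{uv}$, where $T$ ranges over all rooted labelled trees on vertex sets $S\subseteq [n]$ with $i\in S$, $|S|=k$, rooted at $i$. Comparing the two bounds then yields the inequality.

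For the upper bound on $\E[W_k(i)]$, I will assign to each connected set $S\ni i$ with $|S|=k$ in $G$ a canonical spanning tree $\tau(S,G)$ of $G[S]$ rooted at $i$ (for concreteness, the BFS tree with lexicographic tie-breaking on vertex labels). Every connected such $S$ then contributes $\mathbf{1}[\tau(S,G)=T]$ for exactly one rooted tree $T$ on $S$, so
\[ W_k(i) \le \sum_{S,T}\mathbf{1}[\tau(S,G)=T]. \]
Since $\tau(S,G)=T$ forces $E(T)\subseteq E(G)$, taking expectations and using the independence of edges in $\Gc_{n,\pb}$ gives
\[ \E_{\Gc_{n,\pb}}[W_k(i)] \le \sum_T \Pr[T\subseteq G] = \sum_T \prod_{\{u,v\}\in E(T)} p_{uv}. \]

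For the lower bound on $\E[W_k'(i)]$, I will construct an injection $\phi$ sending each rooted tree $T$ (as above) to a $k$-vertex subtree of $\Tc^{\,i}_{n,\pb}$ containing the root $(i)$. Given such a $T$, send each vertex $v\in V(T)$ to the node of $\Tc^{\,i}_{n,\pb}$ whose sequence label is the path $(i=v_0,v_1,\ldots,v_d=v)$ from $i$ to $v$ in $T$; this is a valid label since the $v_j$'s are distinct vertices of $T$, and the parent-child structure is preserved by construction. The map is injective because the last coordinate of each sequence label recovers the corresponding vertex of $T$, and together with the tree structure this reconstructs $T$ from $\phi(T)$. By the independence of the coin flips in \eqref{eq:4r34}, the event $\phi(T)\subseteq \Tc^{\,i}_{n,\pb}$ is the intersection of independent Bernoulli events---one per edge of $T$---and so occurs with probability exactly $\prod_{\{u,v\}\in E(T)} p_{uv}$. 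Linearity of expectation then gives
\[ \E_{\Tc^{\,i}_{n,\pb}}[W_k'(i)] \ge \sum_T \Pr[\phi(T)\subseteq \Tc^{\,i}_{n,\pb}] = \sum_T \prod_{\{u,v\}\in E(T)} p_{uv}, \]
and combining with the earlier upper bound completes the argument.

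The main point that I expect to need to argue carefully is that the coin flips governing the edges of $\phi(T)$ are genuinely independent: this is where the choice of indexing nodes of $\Tc^{\,i}_{n,\pb}$ by full path sequences (rather than by single vertex labels) is essential, since every parent node carries its own independent family of offspring coins, so distinct edges of $\phi(T)$ correspond to distinct coin flips and cannot collide even when the underlying vertices of $T$ repeat across different branches. Beyond this, the calculation is a straightforward first-moment comparison via the common intermediate sum $\sum_T \prod_e p_e$.
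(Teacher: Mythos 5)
Your proposal is correct and follows essentially the same route as the paper: bound $W_k(i)$ above by a sum over labelled trees on $k$-subsets containing $i$, map each such tree injectively to a path-labelled subtree of $\Tc^{\,i}_{n,\pb}$, and observe that the probability of a tree appearing in $G$ equals the probability of its image appearing in the tree process (both being $\prod_e p_e$ by independence). The only cosmetic difference is your use of a canonical (BFS) spanning tree where the paper simply uses the inequality that every connected set contains at least one spanning tree; both yield the same first-moment bound.
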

\begin{proof}
Denote by $\Tb_k^{i}$  the set of all  labelled trees on the vertex set $[n]$  with exactly $k$ vertices which include the vertex $i$. For a tree $\tb\in \Tb_{k}^{i}$, let $1_\tb$ be the indicator r.v. that $\tb$ is a subgraph of $G$. Then, we have that $W_k(i)\leq \sum_{\tb\in \Tb_{k}^{i}}1_\tb$ and therefore
\[\E_{\Gc_{n,\pb}}[W_k(i)]\leq \sum_{\tb\in \Tb_{k}^{i}}\E_{\Gc_{n,\pb}}[1_\tb].\]

Let $\tb$ be a tree in $\Tb_{k}^{i}$. There is a natural way to map $\tb$ to an outcome of the tree process $\mathcal{T}^{\,i}_{n,\pb}$. In particular, let $M(\tb)$ be the tree which is isomorphic to $\tb$, where a node originally labelled $v$ in $\tb$ is relabelled by the path starting at the root $i$ and ending in $v$ (we view the path  as an ordered tuple). Let $1_{M(\tb)}$ be the indicator r.v. that $M(\tb)$ is a subgraph of $\mathcal{T}^{\,i}_{n,\pb}$. Then, we have $W_k'(i)\geq \sum_{\tb\in \Tb_{k}^{i}}1_{M(\tb)}$ and therefore
\[\E_{\Tc^{\,i}_{n,\pb}}[W_k'(i)]\geq \sum_{\tb\in \Tb_{k}^{i}}\E_{\Tc^{\,i}_{n,\pb}}[1_{M(\tb)}].\]
All that remains to observe is that, for an arbitrary tree $\tb\in \Tb^{i}_k$, it holds that 
\[\E_{\Gc_{n,\pb}}[1_\tb]=\E_{\Tc^{\,i}_{n,\pb}}[1_{M(\tb)}].\qedhere\]
\end{proof}

\subsection{Proof of  Lemma~\ref{lem:numcon}}
We are now ready to prove Lemma~\ref{lem:numcon}.
\begin{lemnumcon}
\statelemnumcon
\end{lemnumcon}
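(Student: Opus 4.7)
The plan is to reduce the problem to the branching-process framework of Section~\ref{sec:connectedsets}. I introduce an auxiliary \emph{box graph} $H$ on vertex set $\{U_1,\hdots,U_Q\}$, in which two boxes $U_i,U_j$ are declared adjacent iff $G$ contains at least one edge with one endpoint in $U_i$ and the other in $U_j$. Two observations drive this reduction: first, each box $U_i$ is internally connected via torus edges (which are always present in $G$), so a box-like set $S=\bigcup_{t=1}^q U_{i_t}$ is connected in $G$ iff $\{U_{i_1},\hdots,U_{i_q}\}$ is connected in $H$; second, the $G$-edges determining different unordered pairs $\{U_i,U_j\}$ involve disjoint sets of vertex pairs, hence the edges of $H$ are mutually independent and $H\sim \Gc_{Q,\pb}$ in the sense of Definition~\ref{def:Gnp}, with $p_{ij}:=\Pr_{\Gc_{n,r}}[\text{at least one }G\text{-edge between }U_i\text{ and }U_j]$.

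For a fixed root box $U_i$, Lemma~\ref{lem:conngalton} then bounds the expected number of connected $q$-box subsets of $H$ containing $U_i$ by the expected number of $q$-vertex subtrees of the tree process $\Tc^{\,i}_{Q,\pb}$ rooted at $i$. The offspring at every node of this process is a sum of independent Bernoullis, hence a sum of independent binomials as required by Corollary~\ref{cor:bintrees}; moreover, Lemmas~\ref{lem:binqj} and~\ref{lem:sumindqj} give a uniform profile bound $q_j\leq C^j$ whenever $C$ upper-bounds $\E[X_k]$ for every label $k$. A minor adaptation of Lemma~\ref{lem:galton} from a single offspring law to a uniform profile bound then yields at most $(4C)^{q-1}$ subtrees of size $q$ per root.

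The crucial quantitative step is a uniform upper bound on $\E[X_k]=\sum_{j\neq k}p_{kj}$. By the union bound $p_{kj}\leq \sum_{u\in U_k,\,v\in U_j}\Pr[\{u,v\}\in E]$, so $\E[X_k]\leq \E_{\Gc_{n,r}}[|\partial U_k|]$, which splits into two pieces. The torus perimeter of $U_k$ is exactly $2(\ell_1+\ell_2)\leq 8\ell$, and the expected number of long-range edges leaving $U_k$ is at most $|U_k|\leq 4\ell^2$, because the normalisation in $Z$ guarantees that each vertex has expected at most one incident long-range edge. Hence $\E[X_k]\leq 4\ell^2+8\ell$, and one may take $C$ of order $\ell^2$; in particular $4C\leq 40\ell^2$ (the small-$\ell$ cases can be absorbed into the $n^2$ prefactor thanks to the extra $1/\ell^2$ slack from the bound $Q\leq(2n+1)^2/\ell^2$).

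To conclude, summing over the $Q\leq (2n+1)^2/\ell^2\leq 9n^2/\ell^2$ possible choices of root (each $q$-box set is overcounted by a factor of $q$, which is discarded) yields
\[
\E_{\Gc_{n,r}}[W_q] \;\leq\; \frac{9n^2}{\ell^2}\,(40\ell^2)^{q-1} \;\leq\; n^2\,(40\ell^2)^q,
\]
which is the claimed bound. The main technical obstacle is cleanly establishing the bridging step: verifying that $G$-connectivity of a box-like set is equivalent to $H$-connectivity of its sequence of boxes (relying crucially on internal torus connectivity of each box), and that $H$ genuinely fits the $\Gc_{Q,\pb}$ framework with mutually independent edges. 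Once these are in place, the offspring estimate follows directly from the torus geometry and the definition of $Z$, and the final arithmetic is routine.
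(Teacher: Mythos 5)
Your reduction is essentially the paper's: the paper also forms the box graph $G_\mathcal{U}$, observes it follows $\Gc_{Q,\pb}$, invokes Lemma~\ref{lem:conngalton}, and then counts subtrees. The one structural difference is how you handle the fact that the offspring law in $\Tc^{\,i}_{Q,\pb}$ depends on the node's label: you propose to rework Lemma~\ref{lem:galton} so that it only requires a uniform profile bound $q_j\leq C^j$ at every node, whereas the paper sidesteps this by coupling $\Tc^{\,i}_{Q,\pb}$ with a genuine Galton--Watson tree whose offspring $Y=4+\sum_{j=1}^{4\ell^2}X_j$ stochastically dominates every node's offspring, and then applies Corollary~\ref{cor:bintrees} as stated. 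Your adaptation is true (the inductive proof of Lemma~\ref{lem:galton} only uses the profile bound node by node), but you assert it rather than prove it; the coupling route avoids touching that lemma at all.

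There is, however, a concrete quantitative gap. You bound the expected offspring of a box by its expected edge boundary, $\E[X_k]\leq 8\ell+4\ell^2$, using the torus perimeter $2(\ell_1+\ell_2)\leq 8\ell$. Then $4C=16\ell^2+32\ell$, which exceeds $40\ell^2$ when $\ell=1$ (it equals $48$ versus the required $40$). Your claim that small $\ell$ is ``absorbed into the $n^2$ prefactor'' is false: the discrepancy is $(48/40)^{q-1}$, exponential in $q$, and cannot be hidden in a factor independent of $q$. This matters because the paper explicitly applies Lemma~\ref{lem:numcon} with $\ell=1$ in the proof of Lemma~\ref{lem:avgdegtwo}. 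The fix is the one the paper uses: count torus-adjacent \emph{boxes} rather than torus boundary \emph{edges} --- each box has at most $4$ neighbouring boxes through torus edges --- giving $C=4+4\ell^2$ and $4C=16+16\ell^2\leq 40\ell^2$ for every $\ell\geq 1$. With that replacement (and your correct observations about independence of the box-graph edges, the $\E[X_u]\le 1$ normalisation, and the sum over roots), the argument goes through.
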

\begin{proof}[Proof of Lemma~\ref{lem:numcon}]
Let $G=(V,E)\sim \Gc_{n,r}$. Consider the box graph $G_\mathcal{U}=(\mathcal{U}, E_\mathcal{U})$ induced by the $\ell$-partition of the torus $\mathcal{U}=\{U_{1},\hdots, U_Q\}$, where two boxes  $U_i$ and $U_j$ are connected by an edge if there is an edge in $G$ between $U_i$ and $U_j$. Note that a box-like set $S$ which is connected in $G$ and is the union of $q$ boxes in $\mathcal{U}$ corresponds to a set $S'\subseteq \mathcal{U}$ with $|S'|=q$ which is connected in $G_\mathcal{U}$.

For a box $U_i\in \mathcal{U}$, let $W_{q}(i)$ denote the number of  sets $S'\subseteq \mathcal{U}$ with $|S'|=q$ which are connected in $G_\mathcal{U}$  such that $U_i\in S'$. Clearly, $W_q\leq \sum_{i\in [Q]}W_{q}(i)$ and hence, by linearity of expectation, to prove the lemma it suffices to show that $\E_{\Gc_{n,r}}[W_{q}(i)]\leq (40\ell^2)^{q-1}$ for all $i\in [Q]$ (using the crude bound $|\mathcal{U}|=Q\leq 10n^2$).

To bound $W_q(i)$, we will use Lemma~\ref{lem:conngalton}. In particular, for two boxes $U_j$ and $U_k$, let $p_{jk}$ be the probability that $U_j$ and $U_k$ are connected in the box graph $G_\mathcal{U}$ and let $\pb$ denote the matrix $\{p_{jk}\}_{j,k\in[Q]}$. Note that $G_\mathcal{U}$ follows the graph distribution $\Gc_{Q,\pb}$ (cf. Definition~\ref{def:Gnp}), so by Lemma~\ref{lem:conngalton} we have that 
\[\E_{\Gc_{Q,\pb}}[W_q(i)]\leq \E_{\mathcal{T}^{i}_{Q,\pb}}[W_q'(i)].\]
where $W_q'(i)$ denotes the number of subtrees with $q$ vertices containing the root $i$ in the tree process $\mathcal{T}^i_{Q,\pb}$ (cf. Definition~\ref{def:Tnp}).

Let $\Tc$ be a Galton-Watson tree with offspring distribution $Y:=4+\sum^{4\ell^2}_{j=1}X_j$, where $X_j$ are i.i.d. random variables distributed as
\[X=\sum^n_{\ell'=2}\mathrm{Bin}\big(4\ell',(\ell')^{-r}/Z\big)+\sum^{2n}_{\ell'=n+1}\mathrm{Bin}\big(4(2n+1-\ell'),(\ell')^{-r}/Z\big).\]
Note that $X$ has the same distribution as the number of long-range neighbours of an arbitrary vertex in $G$ and hence $\E[X]=1$ by the definition of the model $\Gc_{n,r}$. It follows that $\E[Y]=4+4\ell^2$ and hence by Corollary~\ref{cor:bintrees}, we obtain that the expected number of subtrees of $\Tc$ with $q$ vertices containing the root is at most $(4\E[Y])^{q-1}\leq (40\ell^2)^{q-1}$.

Therefore, to prove the lemma it suffices to couple the tree process $\Tc^{\,i}_{Q,\pb}$ with the Galton-Watson tree $\Tc$ so that $\Tc^{\,i}_{Q,\pb}$ is a subtree of $\Tc$ (when we view them as unlabelled graphs). Consider an arbitrary node in $\Tc^{\,i}_{Q,\pb}$; this corresponds to a tuple $(U_i,\hdots,U)$ for some box $U\in \mathcal{U}$ and therefore the number of the node's children in step \eqref{eq:4r34} is distributed as  the  number of  neighbours of $U$ in the box graph $G_\mathcal{U}$.  For a vertex $u\in U$, let $X_u$ be the number of long-range neighbours of $u$ in $V\backslash U$, so that the number of neighbours of the box $U$ is dominated above by $4+\sum_{u\in U} X_u$. Note that the variables $\{X_u\}_{u\in U}$ are independent and each $X_u$ is dominated above by the random variable $X$. It follows that the number of neighbours of an arbitrary box $U$ in the box graph $G_\mathcal{U}$ is dominated above by $Y$ and therefore the number of children of an arbitrary node in the tree process $\Tc^{\,i}_{Q,\pb}$  is also dominated above by $Y$. Hence, by revealing the processes $\Tc^{\,i}_{Q,\pb}$ and $\Tc$ in a breadth-first search manner, we can couple them so that $\Tc^{\,i}_{Q,\pb}$ is a subtree of $\Tc$.

This concludes the proof of Lemma~\ref{lem:numcon}.
\end{proof}

\section{Upper bound on the mixing time for $r<2$}\label{sec:rl2}

In this section we prove the upper bound $O(\log n)$ on the mixing time for the small-world network model $\Gc_{n,r}$ when $r<2$.

\subsection{Proof Outline}
In this section, for  $G=(V,E)\sim \Gc_{n,r}$, we will be interested in the expansion properties of a  set $S\subseteq V$; one particular quantity of interest will be the size of $\partial S$, i.e.,  the number of edges with exactly one endpoint in $S$, which will allow us to bound the conductance $\Phi(S)$.  However, we will need slightly more information in our later arguments, which is captured by the following definition. 
\begin{definition}\label{def:expanding}
Let $G=(V,E)\sim \Gc_{n,r}$ and $\epsilon,c>0$ be arbitrary real numbers. A set $S\subseteq V$ is called $(\epsilon,c)$-expanding in $G$ if for all subsets $S'\subseteq S$ with $|S'|\geq (1-\epsilon)|S|$, it holds that 
\[|\partial S\cap \partial S'|\geq c |S'|,\]
i.e., there are at least $c |S'|$ edges with one endpoint in $S'$ and one endpoint in $V\backslash S$.
\end{definition}
Note that if a set $S$ is $(\epsilon,c)$-expanding for some $\epsilon,c>0$, then trivially $|\partial S|\geq c |S|$. Intuitively, the fact that $S$ is $(\epsilon,c)$-expanding captures that these $|\partial S|$ edges are ``well-distributed" within the set $S$.

\newcommand{\statelemarbexp}{Let $r\in[0,2)$. Then, there exist constants $\epsilon,c>0$  such that, for all sufficiently large integers $n$,   for all sets $S\subseteq V$ with $|S|\leq 99 N/100$, it holds that 
\[\mbox{$\Pr_{\Gc_{n,r}}$}\big(\mbox{$S$ is $(\epsilon,c)$-expanding}\big)\geq 1-\exp(-c|S|).\]
}
\begin{lemma}\label{lem:arbexp}
\statelemarbexp
\end{lemma}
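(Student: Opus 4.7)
The plan is to show that for every vertex $u\in S$ the expected number of long-range edges from $u$ to $V\setminus S$ is bounded below by a positive constant $c_0$ depending only on $r$, and then to use independence across $u\in S$ together with a Chernoff bound to argue that a constant fraction of vertices in $S$ each individually have a long-range neighbour in $V\setminus S$. That will immediately give the $(\epsilon,c)$-expansion property after a union argument against the adversarial subset $S'$.

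First I would establish the uniform lower bound $\E[X_u]\geq c_0$ for every $u\in S$, where $X_u$ denotes the number of long-range edges from $u$ to $V\setminus S$. The key geometric input is that the number of torus vertices at distance at most $n/100$ from $u$ is at most $N/1000$ for $n$ sufficiently large; combined with $|V\setminus S|\geq N/100$, this gives at least $N/200$ vertices $v\in V\setminus S$ with $\dist{u}{v}> n/100$. Each such pair contributes probability $\dist{u}{v}^{-r}/Z\geq (2n)^{-r}/Z$ (using $\dist{u}{v}\leq 2n$ and $r\geq 0$), so
\[\E[X_u]\geq \frac{N}{200}\cdot\frac{(2n)^{-r}}{Z}.\]
Since $r<2$, Lemma~\ref{lem:factor} gives $Z=\Theta(n^{2-r})$, and the right-hand side is at least some constant $c_0>0$ independent of $u$ and $S$.

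Because $X_u$ is a sum of independent Bernoullis, $\Pr[X_u=0]\leq \e^{-\E[X_u]}\leq \e^{-c_0}$, so $\Pr[X_u\geq 1]\geq \alpha:=1-\e^{-c_0}>0$. Set $W_u:=\mathbf{1}\{X_u\geq 1\}$. The random variables $\{W_u\}_{u\in S}$ are mutually independent, because for distinct $u,u'\in S$ the long-range-edge indicators determining $W_u$ and $W_{u'}$ involve disjoint unordered vertex pairs (one endpoint is $u$ resp.\ $u'$ and the other lies in $V\setminus S$). Applying Lemma~\ref{lem:chernoff} to the sum $\sum_{u\in S}W_u$, whose mean is at least $\alpha|S|$, yields
\[\Pr\Big[\sum_{u\in S}W_u<\alpha|S|/2\Big]\leq \exp(-c_1|S|)\]
for some constant $c_1>0$ depending only on $\alpha$.

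Finally I would set $\epsilon=\alpha/4$ and $c=\min(\alpha/4,c_1)$. On the high-probability event $\sum_{u\in S}W_u\geq \alpha|S|/2$, any $S'\subseteq S$ with $|S'|\geq (1-\epsilon)|S|$ contains at least $\alpha|S|/2-\epsilon|S|=\alpha|S|/4\geq c|S'|$ vertices $u$ with $X_u\geq 1$; each such vertex contributes at least one edge to $\partial S\cap\partial S'$, giving $|\partial S\cap\partial S'|\geq c|S'|$ as required. The main technical point is the uniform lower bound $\E[X_u]\geq c_0$, which is exactly where the assumption $r<2$ is used via the asymptotic $Z=\Theta(n^{2-r})$ of Lemma~\ref{lem:factor}; for $r\geq 2$ the corresponding mean would decay with $n$ and this direct approach would break down.
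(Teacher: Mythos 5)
Your proof is correct, and it takes a genuinely different route from the paper's. Both arguments rest on the same core estimate --- that for $r<2$ the expected number of long-range edges from any $u\in S$ to $V\setminus S$ is $\Omega(1)$, via $Z=\Theta(n^{2-r})$, $\dist{u}{v}\le 2n$ and $|V\setminus S|\ge N/100$ --- but they convert it to $(\epsilon,c)$-expansion differently. The paper fixes each candidate subset $S'$, applies Chernoff to the total edge count $Y(S')=\sum_{u\in S',\,w\in V\setminus S}Y_{u,w}$, and then union-bounds over the exponentially many $S'$ with $|S'|\ge(1-\epsilon)|S|$; this forces a delicate choice of $\epsilon$ so that $\binom{|S|}{\epsilon|S|}\le(\e/\epsilon)^{\epsilon|S|}$ is dominated by the Chernoff exponent. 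You instead establish a single high-probability event --- at least an $\alpha/2$ fraction of the vertices of $S$ each have a long-range neighbour in $V\setminus S$, via one Chernoff application to the mutually independent indicators $W_u$ (independence holds exactly as you say, since the relevant unordered pairs are disjoint) --- and then the bound for \emph{every} large $S'$ follows deterministically by discarding at most $\epsilon|S|$ good vertices. This cleanly sidesteps the union bound over $S'$. The trade-offs: your argument counts only one boundary edge per good vertex (essentially a vertex-expansion bound), so it yields a possibly worse constant $c$, which is immaterial here; and it hinges on the per-vertex mean being $\Omega(1)$, which is exactly what fails at $r=2$ --- there the paper's companion Lemma~\ref{lem:arbexpreqtwo} must aggregate over vertices and accepts a $1/\log n$ loss. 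Two minor remarks: your restriction to vertices at distance greater than $n/100$ is not needed for the lower bound on $\E[X_u]$ (the uniform bound $\dist{u}{v}\le 2n$ applies to all of $V\setminus S$), though it harmlessly also excludes the $O(1)$ torus-adjacent pairs that carry no long-range edge; and when quoting Lemma~\ref{lem:chernoff} for the lower tail you should note that the mean of $\sum_u W_u$ is only bounded below by $\alpha|S|$, so the deviation $t$ must be measured from the true mean --- this still gives $\exp(-\Omega(\alpha|S|))$ but deserves a line.
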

The proof of Lemma~\ref{lem:arbexp} is given in Section~\ref{sec:lowboundexp}. 
To utilise Lemma~\ref{lem:arbexp}, we will need the following simple observation.
\begin{lemma}\label{lem:condSconn}
Let $r\geq 0$ and $n$ be a positive integer. Then, for all $\epsilon, c>0$,  for all sets $S\subseteq V$, it holds that 
\[\mbox{$\Pr_{\Gc_{n,r}}$}\big(\mbox{$S$ is $(\epsilon,c)$-expanding}\, \big|\, S\mbox{ is connected}\big)=\mbox{$\Pr_{\Gc_{n,r}}$}\big(\mbox{$S$ is $(\epsilon,c)$-expanding}\big).\]
\end{lemma}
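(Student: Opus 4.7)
The plan is to establish Lemma~\ref{lem:condSconn} by showing that the two events in question, ``$S$ is connected'' and ``$S$ is $(\epsilon,c)$-expanding'', are in fact independent under the distribution $\Gc_{n,r}$. Once independence is shown, the identity of conditional and unconditional probabilities is immediate. The key observation is that these events depend on disjoint collections of potential edges, combined with the fact that long-range edges in $\Gc_{n,r}$ are mutually independent across pairs.

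More concretely, I would first decompose the edge set $E$ of $G\sim\Gc_{n,r}$ into three parts according to a fixed $S\subseteq V$: the interior edges $E_{\mathrm{int}}(S)$ (both endpoints in $S$), the boundary edges $E_{\mathrm{bd}}(S)=\partial S$ (exactly one endpoint in $S$), and the outside edges $E_{\mathrm{out}}(S)$ (both endpoints in $V\setminus S$). Torus edges are present deterministically, and each non-torus pair is independently added as a long-range edge with its prescribed probability; since no pair contributes to two of the three groups simultaneously, the three random parts of $E_{\mathrm{int}}(S)$, $E_{\mathrm{bd}}(S)$, and $E_{\mathrm{out}}(S)$ are jointly independent by the product structure of the edge-placement Bernoullis.

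Next I would verify that the event ``$S$ is connected'' is a function only of $E_{\mathrm{int}}(S)$: connectivity of the induced subgraph $G[S]$ is determined entirely by which pairs inside $S$ are edges. Conversely, by Definition~\ref{def:expanding}, whether $S$ is $(\epsilon,c)$-expanding depends only on $|\partial S\cap \partial S'|$ for subsets $S'\subseteq S$, a quantity determined solely by $E_{\mathrm{bd}}(S)$. Consequently the two events are measurable with respect to disjoint (and hence independent) sub-$\sigma$-algebras of the edge probability space, so they are independent.

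The ``main obstacle'' here is really just a bookkeeping one, namely to handle the torus edges correctly: they are deterministic, but they too split cleanly between interior and boundary and so do not create any spurious dependence. After noting this, independence yields
\[
\mbox{$\Pr_{\Gc_{n,r}}$}\bigl(S\text{ is }(\epsilon,c)\text{-expanding} \ \big|\ S\text{ is connected}\bigr)
=\mbox{$\Pr_{\Gc_{n,r}}$}\bigl(S\text{ is }(\epsilon,c)\text{-expanding}\bigr),
\]
completing the proof. This decoupling is what makes Lemma~\ref{lem:arbexp} directly applicable to the connected-set union bound that underlies the mixing time argument via Theorem~\ref{thm:FR}.
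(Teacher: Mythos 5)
Your proposal is correct and is essentially identical to the paper's proof: both argue that the event ``$S$ is $(\epsilon,c)$-expanding'' is determined by the (independent) edge indicators for pairs with one endpoint in $S$ and one in $V\setminus S$, while ``$S$ is connected'' is determined by the indicators for pairs inside $S$, so the two events are independent. Your extra remark that the deterministic torus edges split cleanly between the two groups is a harmless (and correct) piece of bookkeeping that the paper leaves implicit.
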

\begin{proof}
For $u,w\in V$, let $Y_{u,w}$ be the indicator r.v. that there is a long-range edge between $u,w$. Observe that the event that $S$ is $(\epsilon,c)$-expanding is completely determined by the random variables $\{Y_{u,w}\}_{u\in S,w\in V\backslash S}$, while the event that $S$ is connected  is completely determined by the random variables $\{Y_{u,w}\}_{u\in S,w\in S}$. It follows that the two events are independent.
\end{proof}

Recall the definitions of an $\ell$-partition of the torus and box-like sets (cf. Section~\ref{sec:part}). We show the following lemma.
\begin{lemma}\label{lem:bigexpbigexp}
Let $r\in [0,2)$. There exist constants $\epsilon, c, \ell_0>0$ such that for all sufficiently large integers $n$ and every integer $\ell\in [\ell_0,n]$, the following holds for any $\ell$-partition of the torus $T=(B_n,E_n)$, with probability $1-O(1/n^2)$ over the choice of $G=(V,E)\sim \Gc_{n,r}$.

For every box-like set $S$ which is connected in $G$ and satisfies $100\ell^2\log n\leq |S|\leq 99N/100$, it holds that $S$ is $(\epsilon,c)$-expanding.
\end{lemma}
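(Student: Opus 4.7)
The plan is to bound, by Markov's inequality, the expected number of ``bad'' sets---connected box-like sets $S$ in the stated size range that fail to be $(\epsilon,c)$-expanding---and show this expectation is $O(1/n^2)$. I would take $\epsilon, c > 0$ to be the constants produced by Lemma~\ref{lem:arbexp}, and choose $\ell_0$ at the end depending only on $c$.

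Fix a box-like set $S$ that is the union of exactly $q$ boxes of the $\ell$-partition. By \eqref{eq:ell1ell2} each box has between $\ell^2$ and $4\ell^2$ vertices, so $\ell^2 q \leq |S| \leq 4\ell^2 q$. Lemma~\ref{lem:condSconn} says the events ``$S$ is connected in $G$'' and ``$S$ is $(\epsilon,c)$-expanding'' are independent (they are determined by disjoint families of long-range edges), and Lemma~\ref{lem:arbexp} bounds the expansion-failure probability by $\exp(-c|S|)\leq \exp(-c\ell^2 q)$. Hence
\[
\mbox{$\Pr_{\Gc_{n,r}}$}\big[S \text{ is connected and not $(\epsilon,c)$-expanding}\big] \;\leq\; \mbox{$\Pr_{\Gc_{n,r}}$}[S \text{ is connected}]\cdot \exp(-c\ell^2 q).
\]

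Summing the right-hand side over all box-like $S$ that are unions of $q$ boxes produces a factor $\E_{\Gc_{n,r}}[W_q]\leq n^2(40\ell^2)^q$ by Lemma~\ref{lem:numcon}. The constraint $|S|\geq 100\ell^2\log n$ together with $|S|\leq 4\ell^2 q$ forces $q\geq 25\log n$, so if $M$ denotes the number of bad sets,
\[
\mbox{$\E_{\Gc_{n,r}}$}[M]\;\leq\; n^2\sum_{q\geq 25\log n}\big(40\ell^2\, e^{-c\ell^2}\big)^q.
\]
Since $40x^2 e^{-cx^2}\to 0$ as $x\to\infty$, one can pick $\ell_0$ large enough (depending only on $c$) that $40\ell^2 e^{-c\ell^2}\leq 1/2$ for all $\ell\geq \ell_0$. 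The geometric tail is then at most $2\cdot 2^{-25\log n}=n^{-\Omega(1)}$, and even after multiplying by $n^2$ we are comfortably inside $O(1/n^2)$. A final application of Markov's inequality gives $\Pr[M\geq 1]=O(1/n^2)$, which is exactly the conclusion.

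The main obstacle---and what dictates both the choice of $\ell_0$ and the appearance of the $\log n$ factor in the size lower bound---is the tension between the counting growth rate $\log(40\ell^2)$ per box from Lemma~\ref{lem:numcon} and the expansion-failure decay rate of only $c\ell^2$ per box coming from $|S|\geq \ell^2 q$. Taking $\ell$ a sufficiently large constant makes the decay dominate the counting term, and requiring that $S$ spans at least $\Omega(\log n)$ boxes is what generates the polynomial saving needed to absorb the $n^2$ from Lemma~\ref{lem:numcon} and reach the claimed $O(1/n^2)$ failure probability.
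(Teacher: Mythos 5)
Your proposal is correct and follows essentially the same route as the paper: independence of connectivity and expansion (Lemma~\ref{lem:condSconn}), the failure bound $\exp(-c|S|)\leq \exp(-c\ell^2 q)$ from Lemma~\ref{lem:arbexp}, the count $\E[W_q]\leq n^2(40\ell^2)^q$ from Lemma~\ref{lem:numcon}, the constraint $q\geq 25\log n$, and a choice of $\ell_0$ making $40\ell^2 e^{-c\ell^2}$ small. The only cosmetic difference is that the paper bounds each event $\mathcal{E}_q$ separately and union-bounds over $q$, while you sum the expected number of bad sets over all $q$ at once and apply Markov; these are the same argument.
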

\begin{proof}
By Lemmas~\ref{lem:arbexp} and~\ref{lem:condSconn}, there exist constants $\epsilon,c>0$ such that all sets $S\subseteq V$ with $|S|\leq 99N/100$ satisfy
\begin{equation}\label{eq:rtgg}
\mbox{$\Pr_{\Gc_{n,r}}$}\big(\mbox{$S$ is  not  $(\epsilon,c)$-expanding}\, \big|\, S\mbox{ is connected}\big)\leq \exp(-c|S|).
\end{equation}
Let $\ell_0$ be a constant such that for all $\ell\geq \ell_0$ it holds that  $\exp(-c\ell^2)(40 \ell^2)\leq 1/\e$. For $\ell\geq \ell_0$, consider an arbitrary $\ell$-partition $\mathcal{U}=\{U_1,\hdots, U_Q\}$ of the torus and note that $Q\leq N/\ell^2$ (since every box  $U\in\mathcal{U}$ contains at least $\ell^2$ vertices).

For a box-like set $S$, denote by $q_S$ the number of boxes $U\in\mathcal{U}$ such that $S\cap U\neq \emptyset$. Since $S$ is box-like, for every $U\in\mathcal{U}$ such that $S\cap U\neq \emptyset$  we have that $|S\cap U|=|U|$ and hence $q_S\ell^2\leq |S|\leq 4q_S\ell^2$. The assumption that $100\ell^2\log n\leq |S|\leq 99N/100$ therefore translates into $25\log n\leq q_S\leq 99N/(100\ell^2)$.

Let $G=(V,E)\sim \Gc_{n,r}$. For an integer $q\geq 1$, let $\mathcal{E}_q$ be the event that there exists a box-like set $S$ with $q_S=q$ which is connected in $G$ and which is \emph{not} $(\eps,c)$-expanding. To prove the lemma, it therefore suffices to show that 
\begin{equation}\label{eq:ggreverv3v4}
\mbox{$\Pr_{\Gc_{n,r}}$}\bigg(\bigcup_{25\log n\leq q\leq 99N/(100\ell^2)} \mathcal{E}_q\bigg)\leq 1/n^2.
\end{equation}

By Lemma~\ref{lem:numcon}, we have that the number $W_q$ of box-like sets $S$ with $q_S=q$ that are connected in $G$ satisfies the bound
\begin{equation}\label{eq:wqell}
\E_{\Gc_{n,r}}[W_q]\leq n^2 (40\ell^2)^{q}.
\end{equation}
Consider an arbitrary integer $q$ such that $25\log n\leq q\leq 99N/(100\ell^2)$ and let $\mathcal{F}_q$ denote the set of box-like sets $S$ with $q_S=q$. Then, using \eqref{eq:rtgg}~and~\eqref{eq:wqell}, we have
\begin{align*}
\mbox{$\Pr_{\Gc_{n,r}}$}(\mathcal{E}_q)&\leq \sum_{S\in \mathcal{F}_q} \mbox{$\Pr_{\Gc_{n,r}}$}(\mbox{$S$ is connected})\, \mbox{$\Pr_{\Gc_{n,r}}$}\big(\mbox{$S$ is not $(\epsilon,c)$-expanding}\, \big|\, S\mbox{ is connected}\big)\\
&\leq \exp(-cq\ell^2)\sum_{S\in \mathcal{F}_q} \mbox{$\Pr_{\Gc_{n,r}}$}(\mbox{$S$ is connected})=\exp(-cq\ell^2)\, \E_{\Gc_{n,r}}[W_q]\\
&\leq  n^2 \exp(-cq\ell^2)(40\ell^2)^{q}\leq n^{-20},
\end{align*}
where the last inequality follows by the choice of $\ell$ and $\ell_0$. By a union bound over the possible values of $q$, we therefore obtain \eqref{eq:ggreverv3v4}, as wanted.
\end{proof}

Using these lemmas, we obtain the following conductance bounds (recall that $\pi$ denotes the stationary distribution of the lazy random walk, cf. Section~\ref{sec:lazyrandomwalk}).

\newcommand{\statelemconrltwo}{Let $r\in [0,2)$. There exist constants $\rho,\tau,\chi>0$ such that the following hold for all sufficiently large integers  $n$ with probability $1-O(1/n^2)$ over the choice of the graph $G\sim\Gc_{n,r}$. 

For every (nonempty) connected set $S$ in $G$ with $\pi(S)\leq 1/2$, the conductance $\Phi(S)$ satisfies 
\begin{equation*}
\Phi(S)\geq \begin{cases} \tau&\mbox{ if } |S|\geq \rho\log n\\ \tau/(\sqrt{|S|})&\mbox{ if } |S|\leq \rho\log n.\end{cases}
\end{equation*}
Further, for every connected set $S$ with $\Phi(S)<\tau$ and $\pi(S)\leq 1/2$, it holds that $\pi(S)\leq \chi |S|/N$.
}
\begin{lemma}\label{lem:conrltwo}
\statelemconrltwo
\end{lemma}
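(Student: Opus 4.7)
The plan is to work on a high-probability event over $G=(V,E)\sim\Gc_{n,r}$ combining structural expansion (Lemma~\ref{lem:bigexpbigexp}) with concentration estimates on long-range edge counts, and then split on $|S|$. Fix a small constant $\epsilon>0$ and let $\ell=\ell_0$ from Lemma~\ref{lem:bigexpbigexp}, together with an arbitrary $\ell$-partition $\mathcal{U}$ and $\eta=\epsilon/(4\ell^2)$. I define the good event $\mathcal{G}$ to consist of: (G1) the conclusion of Lemma~\ref{lem:bigexpbigexp}; (G2) every connected $S\subseteq V$ satisfies $W_S:=2L_S+L_\partial\leq C_2|S|$, where $L_S$ (resp.\ $L_\partial$) counts the long-range edges with both (resp.\ exactly one) endpoint in~$S$; (G3) every connected $S\subseteq V$ satisfies $L_S\leq C^*|S|$ for constants $C_2,C^*$ (the latter depending on $r$); and (G4) $|E|=(\tfrac{5}{2}+o(1))N$ from Lemma~\ref{lem:numedges}. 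Each sub-event holds with probability $1-O(1/n^2)$: (G2) and (G3) follow from Chernoff estimates (using $\E[W_S]\leq|S|$ and, via Lemma~\ref{lem:factor}, $\E[L_S]\leq|S|^2/(2Z)$ with $Z=\Theta(n^{2-r})$) combined with the bound $n^2\cdot 40^{|S|}$ on the expected number of connected subsets of size $|S|$ coming from Lemma~\ref{lem:numcon} applied with side length $\ell=1$. Note also that $\pi(S)\leq 1/2$ together with (G4) yields $|S|\leq \tfrac{5N}{8}+o(N)$, and $\Phi(S)\geq|\partial S|/\sum_{v\in S}d_v$ whenever $\pi(S)\leq 1/2$.

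For $|S|\geq\rho\log n$ with $\rho$ chosen large enough (at least $100\ell^2$), I invoke Lemma~\ref{lem:torusdecom2} with parameter~$\eta$. If $|\boxcore(S)|<(1-\eta)|S|$, then $|\partial^* S|\geq \eta|S|/(4\ell^2)$ and hence $|\partial S|\geq \eta|S|/(4\ell^2)$. Otherwise let $T$ be the \emph{box-hull} of $S$, defined as the union of all boxes of $\mathcal{U}$ that intersect $S$: then $T$ is box-like and contains $S$; $T$ is connected in $G$ because $S$ is connected in $G$ and each box is torus-connected internally; and since every box in $T\setminus\boxcore(S)$ contains a vertex of $S\setminus\boxcore(S)$ and has at most $4\ell^2$ vertices, $|T|\leq|\boxcore(S)|+4\ell^2|S\setminus\boxcore(S)|\leq(1+\epsilon)|S|$. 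Hence $|S|\geq(1-\epsilon)|T|$ and, for $\epsilon$ small, $|T|\in[100\ell^2\log n,\,99N/100]$, so (G1) applies: $T$ is $(\epsilon,c)$-expanding, and taking the test set $T':=S$ yields $|\partial T\cap\partial S|\geq c|S|$; since $\partial T\cap\partial S\subseteq\partial S$, we conclude $|\partial S|\geq c|S|$. In either subcase $|\partial S|\geq c_1|S|$ with $c_1:=\min\{\eta/(4\ell^2),c\}$, and combining with (G2) gives $\Phi(S)\geq c_1/(4+C_2)=:\tau_1$.

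For $|S|\leq\rho\log n$, Theorem~\ref{thm:torusexpansion} yields $|\partial^* S|\geq 2\sqrt{|S|}$. The function $b\mapsto(a+b)/(M+b)$ is increasing in $b$ whenever $M\geq a$, so with $a=|\partial^* S|\leq 4|S|$, $b=L_\partial$ and $M=4|S|+2L_S$, I obtain
\[
\Phi(S)\ \geq\ \frac{|\partial S|}{\sum_{v\in S}d_v}\ =\ \frac{|\partial^* S|+L_\partial}{4|S|+2L_S+L_\partial}\ \geq\ \frac{|\partial^*S|}{4|S|+2L_S}\ \geq\ \frac{2\sqrt{|S|}}{(4+2C^*)|S|}\ =\ \frac{\tau_2}{\sqrt{|S|}},
\]
using (G3), with $\tau_2:=1/(2+C^*)$. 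Setting $\tau:=\min\{\tau_1,\tau_2\}$ establishes both conductance bounds. For the ``further'' assertion, $\Phi(S)<\tau$ forces $|S|<\rho\log n$ by the previous case. If additionally $L_\partial>(4+2C^*)|S|$, then $\Phi(S)\geq L_\partial/((4+2C^*)|S|+L_\partial)\geq 1/2$, contradicting $\Phi(S)<\tau\leq\tfrac12$; thus $L_\partial\leq(4+2C^*)|S|$ and so $\sum_{v\in S}d_v\leq(8+4C^*)|S|$, whence by (G4), $\pi(S)\leq\chi|S|/N$ for an appropriate constant $\chi$ depending on $C^*$.

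The main obstacle is establishing (G2) and (G3) uniformly over all connected subsets, since a naive union bound over $\binom{N}{k}$ sets fails for constants $C_2,C^*$; the remedy is to restrict to connected subsets using Lemma~\ref{lem:numcon} at $\ell=1$ (giving $n^2\cdot 40^k$), after which the exponent $\log 40$ is absorbed by choosing $C_2,C^*$ large enough---with $C^*$ chosen in terms of $r$ so that the very strong $L_S$-tail from $Z=\Theta(n^{2-r})\to\infty$ dominates $\log 40$. A secondary delicate point is that the upper bound $|T|\leq 99N/100$ in the admissible range of (G1) needs $|S|\leq 5N/8+o(N)$ (from $\pi(S)\leq 1/2$ and (G4)) together with the blow-up factor $(1+\epsilon)$ from the box-hull step to stay below $99N/100$, which is why $\epsilon$ must be chosen sufficiently small.
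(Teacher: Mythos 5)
Your overall architecture --- the box-core/box-hull dichotomy via Lemma~\ref{lem:torusdecom2} feeding into Lemma~\ref{lem:bigexpbigexp} for large connected sets, the torus isoperimetric inequality (Theorem~\ref{thm:torusexpansion}) for small sets, and a degree-sum bound to convert edge expansion into conductance --- is the same as the paper's, and the expansion half of your argument (the box-hull $T$, the verification $|T|\leq(1+\epsilon)|S|$, and applying $(\epsilon,c)$-expansion of $T$ with test set $S$) is correct. The gap is in your justification of (G2) and (G3). You propose to bound $L_S$, the number of long-range edges with \emph{both} endpoints in $S$, uniformly over connected sets by combining a Chernoff bound with the count $n^2\cdot 40^{|S|}$ from Lemma~\ref{lem:numcon}. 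But the union bound you need has the form $\sum_S\Pr(S\mbox{ connected})\cdot\Pr(L_S>C^*|S|\mid S\mbox{ connected})$, and the conditional probability is not controlled by an unconditional Chernoff estimate: the event that $S$ is connected is determined by the edges inside $S$, so it is positively correlated with $L_S$ being large (for $k$ vertices scattered far apart in the torus, connectivity forces at least $k-1$ edges inside $S$, almost all necessarily long-range, even though unconditionally $\E[L_S]$ may be $o(1)$). This is exactly the dependence the paper flags before Lemma~\ref{lem:avgdegtwo} and resolves, in the $r=2$ case, by decomposing the connectivity event over spanning trees and applying Harris' inequality (Lemma~\ref{lem:c1c2fwq}); your write-up supplies no analogue of that step. (The $L_\partial$ part of (G2) is fine, since the edges leaving $S$ are independent of connectivity, as in Lemma~\ref{lem:condSconn}.)

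The good news is that for $r<2$ you do not need the connected-set restriction for $L_S$ at all, and your closing remark that ``a naive union bound over $\binom{N}{k}$ sets fails'' is incorrect for this particular quantity: since $\E[L_S]=O\big(|S|(|S|/N)^{1-r/2}\big)$ is far below $|S|$ when $|S|\ll N$, the multiplicative Chernoff bound gives failure probability $\exp\big(-\Omega(C^*(1-\tfrac{r}{2})|S|\log(N/|S|))\big)$, which beats $\binom{N}{|S|}$ once $C^*$ exceeds roughly $2/(2-r)$. This is exactly the paper's Lemma~\ref{lem:avgdeg}, which proves $\sum_{v\in S}d_v\leq M|S|+|\partial S|$ for \emph{all} sets and keeps $|\partial S|$ on the right-hand side so that no bound on $L_\partial$ is ever required; the resulting $|\partial S|$ in the denominator of $\Phi(S)\geq\frac{|\partial S|/|S|}{M+|\partial S|/|S|}$ is harmless, for the same monotonicity reason you already use in your small-set case. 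Substituting that lemma for your (G2)/(G3) repairs the proof with no other changes.
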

The proof of Lemma~\ref{lem:conrltwo} is given in Section~\ref{sec:cdewds}.

\begin{corollary}\label{cor:rl2}
Let $r\in [0,2)$ and $n$ be a sufficiently large integer. With probability $1-O(1/n^2)$ over the choice of
the graph $G\sim\Gc_{n,r}$, the lazy random walk on $G$
 satisfies $\Tmix=O(\log n)$.
\end{corollary}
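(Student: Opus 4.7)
The plan is to feed the conductance estimates of Lemma~\ref{lem:conrltwo} together with the edge-count estimate of Lemma~\ref{lem:numedges} into Theorem~\ref{thm:FR}. Both underlying events hold simultaneously with probability $1 - O(1/n^2)$, so we condition on that good event and treat $G$ as a fixed graph; in particular $|E| = \Theta(N)$, every vertex has degree at least $4$ (so $G$ is connected since $G \supseteq T$), and $\pi_{\mathrm{min}} \geq 4/(2|E|) = \Omega(1/N)$. Consequently the sum in \eqref{eq:phij12} runs over $J = O(\log n)$ indices.

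Fix $p = 2^{-j}$ and let $S$ be a connected set with $p/2 \leq \pi(S) \leq p$ realising $\widetilde{\Phi}(p)$. The key observation is the dichotomy extracted from Lemma~\ref{lem:conrltwo}: either $\Phi(S) \geq \tau$, or else both $|S| \leq \rho \log n$ (otherwise the large-set clause would force $\Phi(S) \geq \tau$) and $\pi(S) \leq \chi |S|/N$. In the second case the combination $p/2 \leq \pi(S) \leq \chi |S|/N \leq \chi \rho \log n / N$ forces $p \leq 2\chi\rho\log n/N$. Set $j^\star := \lceil \log_2\!\bigl(N/(2\chi\rho\log n)\bigr)\rceil$; for every $j \leq j^\star$, the second case is impossible, so every connected $S$ in the relevant range satisfies $\Phi(S) \geq \tau$, and hence $\bigl(1/\widetilde{\Phi}(2^{-j})\bigr)^2 \leq 1/\tau^2$. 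There are $j^\star = O(\log n)$ such terms, contributing $O(\log n)$ in total.

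For $j > j^\star$ we use the degree lower bound $d_v \geq 4$ to convert the $\pi(S) \leq p$ constraint into the cardinality bound $|S| \leq p|E|/2 = O(pN) = O(2^{-j}N)$. Applying the small-set clause $\Phi(S) \geq \tau/\sqrt{|S|}$ of Lemma~\ref{lem:conrltwo} then gives $\bigl(1/\widetilde{\Phi}(2^{-j})\bigr)^2 = O(2^{-j}N)$ (and the bound remains valid in the vacuous case where no connected $S$ lies in the range, since $\widetilde{\Phi}(p) = 1$ by convention). Summing this geometric tail yields
\[
\sum_{j = j^\star + 1}^{J} O(2^{-j}N) = O(2^{-j^\star} N) = O(\log n).
\]
Combining the two contributions and applying Theorem~\ref{thm:FR} gives $\Tmix = O(\log n)$.

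The only real obstacle is bookkeeping: one has to choose the threshold $j^\star$ so that the $O(\log n)$ constant-conductance terms from the ``large-$p$'' regime and the geometric tail from the ``small-$p$'' regime match at the boundary $p \asymp \log n / N$. All of the probabilistic content has already been absorbed into Lemmas~\ref{lem:conrltwo} and~\ref{lem:numedges}, so no further random-graph estimates are needed.
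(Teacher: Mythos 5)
Your proof is correct and follows essentially the same route as the paper: both split the indices of the Fountoulakis--Reed sum according to whether the conductance falls below the constant $\tau$, use the two clauses of Lemma~\ref{lem:conrltwo} to show the bad indices only occur for $p = O(\log n/N)$, and sum the resulting geometric tail. The only (harmless) loose ends are the boundary index $j = j^\star$ and the fact that for $\rho\log n < |S| = O(\log n)$ you should invoke the large-set clause rather than the small-set one --- but since $\tau/\sqrt{|S|} \leq \tau$ for $|S|\geq 1$ the bound $\Phi(S)\geq\tau/\sqrt{|S|}$ holds in either case, exactly as the paper implicitly uses.
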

\begin{proof}
Let $G=(V,E)\sim\Gc_{n,r}$. By a union bound, we have that with probability $1-O(1/n^2)$ the graph $G$ satisfies Lemmas~\ref{lem:numedges} and~\ref{lem:conrltwo}. For all such graphs $G$, we will show that  $\Tmix=O(\log n)$.

Recall from Theorem~\ref{thm:FR} that there exists an absolute constant $C>0$ such that 
\begin{equation*}\tag{\ref{eq:phij12}}
\Tmix\leq C \sum^{\left\lceil \log_2(\pi_{\mathrm{min}}^{-1})\right\rceil}_{j=1}\bigg(\frac{1}{\widetilde{\Phi}(2^{-j})}\bigg)^2.
\end{equation*}
where $\widetilde{\Phi}(p):=\min\{\Phi(S) \mid S\mbox{ is connected},\, p/2\leq \pi(S)\leq p\}$, and $\widetilde{\Phi}(p)=1$ if the minimization is over an empty set.  By Lemma~\ref{lem:numedges}, we have that $|E|\leq 3N$ and hence $\pi_{\mathrm{min}}=\min_{v\in V} \{d_v/(2|E|)\}\geq 4/(2|E|)\geq 2/(3N)$, so that $\log_2(1/\pi_{\mathrm{min}})=O(\log n)$.

Let $\mathcal{J}$ be the set of indices $j$ in \eqref{eq:phij12} such that $\widetilde{\Phi}(2^{-j})< \tau$, where $\tau$ is the constant in Lemma~\ref{lem:conrltwo}. The contribution to the sum in \eqref{eq:phij12} from indices $j\notin \mathcal{J}$ is clearly at most $O(\log n)$, so we only need to focus on the contribution  from indices $j\in \mathcal{J}$.

For $j\in \mathcal{J}$, there exists a connected set $S$ such that $2^{-j-1}\leq \pi(S)\leq 2^{-j}$ and $\Phi(S)\leq \tau$, so by the first part of Lemma~\ref{lem:conrltwo}, we have $|S|\leq \rho \log n$.  Moreover, by the second part of Lemma~\ref{lem:conrltwo}, we have that $\pi(S)\leq \chi |S|/N\leq \rho \chi \log n /N$, so we have 
\begin{equation}\label{eq:ef332}
2^{-j-1}\leq \rho \chi \log n /N\mbox{ for all indices $j\in \mathcal{J}$.}
\end{equation}

Now, for $j\in \mathcal{J}$, consider an arbitrary connected set $S$  satisfying $2^{-j-1}\leq \pi(S)\leq 2^{-j}$. Since $d_v\geq 4$ for all $v\in V$ and $|E|\leq 3N$ (by Lemma~\ref{lem:numedges}), we have 
\[\pi(S)=\frac{1}{2|E|}\sum_{v\in S}d_v\geq\frac{2|S|}{3N}\]
We obtain that $|S|\leq N/2^{j-1}$ and therefore, by Lemma~\ref{lem:conrltwo}, we have that $\Phi(S)\geq \tau (2^{j-1}/N)^{1/2}$. Since $S$ was an arbitrary set satisfying $2^{-j-1}\leq \pi(S)\leq 2^{-j}$, it follows that $\widetilde{\Phi}(2^{-j})\geq \tau (2^{j-1}/N)^{1/2}$, i.e.,
\begin{equation}\label{eq:ef333}
\bigg(\frac{1}{\widetilde{\Phi}(2^{-j})}\bigg)^2\leq \frac{N}{\tau^2\, 2^{j-1}}\mbox{ for all indices $j\in \mathcal{J}$.}
\end{equation}
From \eqref{eq:ef333}, we obtain that the contribution to the sum in \eqref{eq:phij12} from indices $j\in \mathcal{J}$ is bounded by a geometric series whose largest term is bounded by $O(\log n)$ from  \eqref{eq:ef332}. Therefore, the total contribution is bounded by  $O(\log n)$.

This yields that $\Tmix=O(\log n)$.
\end{proof}

\subsection{Lower bounding the expansion -- Proof of Lemma~\ref{lem:arbexp}}\label{sec:lowboundexp}
In this section, we prove Lemma~\ref{lem:arbexp} which we restate here for convenience.

\begin{lemarbexp}
\statelemarbexp
\end{lemarbexp}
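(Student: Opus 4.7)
My plan is to express $(\epsilon,c)$-expansion of $S$ as a union of events, one per ``large'' subset $S'\subseteq S$, and combine three ingredients: a lower bound on the expected number of long-range edges from $S'$ into $V\setminus S$, Chernoff concentration (Lemma~\ref{lem:chernoff}) for each fixed $S'$, and a union bound. The key points are that the long-range edges between $S$ and $V\setminus S$ are mutually independent, and that when $|S|\leq 99N/100$ each $u\in S$ has enough ``potential long-range targets'' left in $V\setminus S$.

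The main ingredient is the following per-vertex mean bound: there exists a constant $c_0=c_0(r)>0$ such that, for every $S\subseteq V$ with $|S|\leq 99N/100$ and every $u\in S$,
\[
\mu_u(S)\;:=\;\sum_{v\in V\setminus S,\ \dist{u}{v}\geq 2}\frac{\dist{u}{v}^{-r}}{Z}\;\geq\;c_0.
\]
By definition of $Z$, $\sum_{v\in V,\,\dist{u}{v}\geq 2}\dist{u}{v}^{-r}/Z=1$, so this is equivalent to showing $\sum_{v\in S,\,\dist{u}{v}\geq 2}\dist{u}{v}^{-r}/Z\leq 1-c_0$. By rearrangement the left-hand side is maximised when $S$ is (essentially) the torus ball $B_R(u)$ with $|B_R(u)|=\lfloor 99N/100\rfloor$; since torus balls of radius $R$ contain $\Theta(R^2)$ vertices, this forces $R=\Theta(n)$, and a constant fraction of $V$ remains outside $B_R(u)$. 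Using the computation from the proof of Lemma~\ref{lem:factor}, the tail contribution from distances $>R$ is $\Theta(n^{2-r})/Z = \Theta(1)$ precisely because $r<2$; this is the only step where the hypothesis $r<2$ enters.

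Given $\mu_u(S)\geq c_0$, fix $S$ as above and any $S'\subseteq S$ with $s':=|S'|\geq (1-\epsilon)|S|$, and let $W_{S'}$ denote the number of long-range edges between $S'$ and $V\setminus S$. The indicators of these edges are mutually independent and $\E[W_{S'}]=\sum_{u\in S'}\mu_u(S)\geq c_0 s'$, so Lemma~\ref{lem:chernoff} yields $\Pr(W_{S'}<c_0 s'/2)\leq \exp(-c_0 s'/8)\leq \exp(-c_0(1-\epsilon)|S|/8)$. The number of choices of $S'$ is at most $(|S|+1)\binom{|S|}{\lfloor\epsilon|S|\rfloor}\leq (|S|+1)(\e/\epsilon)^{\epsilon|S|}$, and since $|\partial S\cap\partial S'|\geq W_{S'}$, a union bound gives
\[
\mbox{$\Pr_{\Gc_{n,r}}$}\big(S\text{ is not }(\epsilon,c_0/2)\text{-expanding}\big)\;\leq\;(|S|+1)(\e/\epsilon)^{\epsilon|S|}\exp\!\big(-c_0(1-\epsilon)|S|/8\big).
\]
Choosing $\epsilon>0$ so small that $\epsilon\log(\e/\epsilon)<c_0/32$ collapses this to $\exp(-c|S|)$ for some constant $c>0$ and all $|S|$ above an absolute constant; for smaller $|S|$ the conclusion is automatic since the torus alone supplies four edges at each vertex, so $|\partial S|\geq 4 \geq c|S|$ deterministically.

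The hard part is the per-vertex mean bound. The rearrangement-plus-tail argument (``worst case is a torus ball'') is intuitive but must correctly handle the wrap-around geometry of the $(2n+1)\times(2n+1)$ torus and match the tail $\sum_{d>R}d\cdot d^{-r}$ against the asymptotics of $Z$ in Lemma~\ref{lem:factor}. The threshold $r<2$ appears exactly here: for $r=2$ the tail is only a $\Theta(1/\log n)$ fraction of $Z$, not a constant fraction, which is consistent with the weaker bound announced in Lemma~\ref{lem:arbexpreqtwo}.
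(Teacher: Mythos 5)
Your proposal is correct and follows essentially the same route as the paper: lower-bound the expected number of long-range edges from each large $S'\subseteq S$ into $V\setminus S$ by a constant per vertex (the paper does this directly via $\dist{u}{w}\leq 2n$, $|V\setminus S|\geq N/100$ and $Z=O(n^{2-r})$, which is the same estimate your rearrangement-plus-tail argument reduces to), then apply Chernoff and a union bound over the at most $(\e/\epsilon)^{\epsilon|S|}\,\mathrm{poly}(|S|)$ choices of $S'$ with $\epsilon$ chosen small. The only cosmetic difference is your separate treatment of constant-size $S$ (where, once $\epsilon<1/|S|$, the only admissible $S'$ is $S$ itself and the torus isoperimetric inequality suffices); the paper avoids this case by absorbing the polynomial prefactor as $\epsilon k+1\leq \exp(\epsilon k)$.
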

\begin{proof}[Proof of Lemma~\ref{lem:arbexp}]
Let $\eta:=1/100$ and consider an arbitrary set $S\subseteq V$ with $|S|\leq (1-\eta) N$. 

Let $S'\subseteq S$ be a subset of $S$. For $u,w\in V$, let $Y_{u,w}$ be the indicator r.v. that there is an edge from $u$ to $w$ in $G$. Note that 
\begin{equation}\label{lem:partXpl}
|\partial S\cap \partial S'|\geq Y(S'), \mbox{ where } Y(S'):=\sum_{u\in S',w\in V\backslash S}Y_{u,w}.
\end{equation}
Note that $\{Y_{u,w}\}_{u\in S',w\in V\backslash S}$ is a collection of  independent random variables. Therefore, for $\mu_{S'}:=\E_{\Gc_{n,r}}[Y(S')]$, we have by Lemma~\ref{lem:chernoff} that 
\begin{equation}\label{lem:applychernoff}
\mbox{$\Pr_{\Gc_{n,r}}$}\big(Y(S')\leq \tfrac{1}{2}\mu_{S'}\big)\leq \exp\big(-\tfrac{1}{10}\mu_{S'}\big).
\end{equation}
Combining \eqref{lem:partXpl} and  \eqref{lem:applychernoff}, we obtain that 
\begin{equation}\label{eq:g4revrwr}
\mbox{$\Pr_{\Gc_{n,r}}$}\big(|\partial S \cap \partial S'|\leq \tfrac{1}{2}\mu_{S'}\big)\leq \exp\big(-\tfrac{1}{10}\mu_{S'}\big).
\end{equation}
We will  show that there exists a constant $\tau>0$ (which does not depend on $S$) such that for all $\epsilon\in(0,1)$, it holds that 
\begin{equation}\label{eq:toprove1312}
\mbox{$\mu_{S'}\geq \tau |S'|$ for all sets $S'\subseteq S$ with $|S'|\geq (1-\epsilon)|S|$. }
\end{equation}
We first conclude the proof of the lemma assuming \eqref{eq:toprove1312}. Let $\epsilon>0$ be a sufficiently small constant so that 
\begin{equation}\label{eq:sdf4ffv3v33}
\mbox{$\epsilon\leq \min\{1/2,\tau/100\}$ and $(\e/\epsilon)^{\epsilon}\leq \exp(\tau/100)$}.
\end{equation}
Note that such a constant exists since $\tau>0$ and $(\e/\epsilon)^{\epsilon}\downarrow 1$ as $\epsilon\downarrow 0$. Then, using \eqref{eq:g4revrwr} and \eqref{eq:toprove1312}, we obtain by a union bound over all possible choices of $S'\subseteq S$ with $|S'|\geq (1-\epsilon)|S|$ that
\begin{align}
\mbox{$\Pr_{\Gc_{n,r}}$}\big(\mbox{$S$ is not $(\epsilon,\tau/2)$-expanding}\big)&\leq \sum^{|S|}_{s=(1-\epsilon)|S|}\binom{|S|}{s}\exp\big(-\tau s/10\big)\notag\\
&\leq (\epsilon |S|+1)\max_{s\in \big[(1-\epsilon)|S|,|S|\big]}\binom{|S|}{s}\exp\big(-\tau s/10\big).\label{eq:ferf334v43v}
\end{align}
Let $k:=|S|$. Using \eqref{eq:sdf4ffv3v33}, we obtain that for $s\in [(1-\epsilon)k,k]$, it holds that $\tau s\geq \tau(1-\epsilon)k\geq \tau k/2$, 
\[\binom{k}{s}=\binom{k}{k-s}\leq \Big(\frac{\e k}{k-s}\Big)^{k-s}\leq \Big(\frac{\e}{\epsilon}\Big)^{\epsilon k}\leq\exp(\tau k/100)\ \mbox{ and }\  \epsilon k+1\leq \exp(\epsilon k)\leq \exp(\tau k/100).\] 
Using these, we obtain from \eqref{eq:ferf334v43v} that
\[\mbox{$\Pr_{\Gc_{n,r}}$}\big(\mbox{$S$ is not $(\epsilon,\tau/2)$-expanding}\big)\leq \exp\big(-\tau|S|/40\big),\]
which proves the statement of the lemma with $c=\tau/40$.

It thus remains to prove that there exists a constant $\tau>0$ such that \eqref{eq:toprove1312} holds. Recall the normalising factor $Z$ given in \eqref{eq:defZ} and that $\dist{u}{w}$ denotes the distance between $u,w$ in the torus. For all sufficiently large $n$, we claim that $\E_{\Gc_{n,r}}[Y_{u,w}]\geq \dist{u}{w}^{-r}/Z$: for non-adjacent vertices $u,w$ in the torus, the inequality holds at equality by the definition of the model $\Gc_{n,r}$; for vertices $u,w$ which are adjacent  in the torus, we have $\E_{\Gc_{n,r}}[Y_{u,w}]=1\geq 1/Z$, where the last inequality holds for all sufficiently large $n$  by Lemma~\ref{lem:factor}. We therefore have that 
\[\mu_{S'}=\sum_{u\in S'}\sum_{w\in V\backslash S}\E_{\Gc_{n,r}}[Y_{u,w}]\geq \frac{1}{Z}\sum_{u\in S'}\sum_{w\in V\backslash S}\dist{u}{w}^{-r}.\]
To lower bound the last sum, we have by Lemma~\ref{lem:factor} that there exists a constant $C>0$ such that $Z\leq C n^{2-r}\leq C N n^{-r}$. Moreover, we have the trivial bounds $\dist{u}{w}\leq 2n$ and $|V\backslash S|\geq \eta N$ (since $|S|\leq (1-\eta) N$). It follows that 
\begin{equation}\label{eq:mupS}
\mu_{S'}\geq\frac{1}{Z}\sum_{w\in V\backslash S}\sum_{u\in S'}\dist{u}{w}^{-r}\geq \sum_{u\in S'}\frac{\eta N}{Z}(2 n)^{-r}\geq (\eta 2^{-r}/C)|S'|.
\end{equation}
Clearly, the constant $\tau:=\eta 2^{-r}/C>0$ does not depend on the set $S$, thus proving \eqref{eq:toprove1312} and concluding the proof of the lemma.
\end{proof}

\subsection{Conductance lower bounds -- Proof of Lemma~\ref{lem:conrltwo}}\label{sec:cdewds}
In this section, we prove the conductance bounds stated in Lemma~\ref{lem:conrltwo}. We will need the following upper bound on the average degree of a set $S$, which ensures that the number of edges \emph{within} an arbitrary set $S$ is at most linear in $|S|$. Of course, this is immediate for linear-sized sets $S$, so most of the work is to ensure that this holds for sets $S$ with relatively small cardinality.

\newcommand{\statelemavgdeg}{Let $r\in [0,2)$. Then, there exists a constant $M>0$ such that the following holds for all sufficiently large $n$ with probability $1-O(1/n^2)$ over the choice of the graph $G=(V,E)\sim\Gc_{n,r}$. 

For every set $S\subseteq V$ in $G$, it holds that $\sum_{v\in S}d_v\leq M|S|+|\partial S|$.
}
\begin{lemma}\label{lem:avgdeg}
\statelemavgdeg
\end{lemma}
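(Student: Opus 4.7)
The plan is to note that $\sum_{v \in S} d_v = 2|E(G[S])| + |\partial S|$ (where $|E(G[S])|$ denotes edges of $G$ with both endpoints in $S$), so the claim is equivalent to $|E(G[S])| \leq M|S|/2$. Since the torus is 4-regular, its contribution to $|E(G[S])|$ is at most $2|S|$. So it suffices to show that for some constant $M'$ (taking $M = 2M'+4$), with probability $1-O(1/n^2)$ every $S$ has at most $M'|S|$ long-range edges with both endpoints in $S$. By averaging over the connected components of $G[S]$, we may further reduce to showing this bound for every set $C$ that is connected in $G$.

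For a set $C$ of size $k$, let $L(C)$ be the number of long-range edges within $C$. We have two mean estimates: $\E[L(C)] \leq k^2/(2Z)$ from the uniform bound $p_{uv} \leq 1/Z$ on long-range edge probabilities (with $Z=\Theta(n^{2-r})$ by Lemma~\ref{lem:factor}), and $\E[L(C)] \leq k/2$ from each vertex having expected long-range degree one. These pair with two size regimes split near $k \approx \sqrt{Z}$. For $k \leq \sqrt{Z}$, the claim is vacuous when $k \leq 2M'+1$ since $L(C) \leq \binom{k}{2} \leq M'k$. For $k \geq 2M'+2$, Chernoff (Lemma~\ref{lem:chernoff}) using $\mu \leq k^2/(2Z)$ gives $\Pr[L(C) \geq M'k] \leq (ek/(2M'Z))^{M'k}$. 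A crude union bound over all $\binom{N}{k} \leq N^k$ subsets of size $k$ yields an expected count of bad sets whose logarithm is at most $k \log n \cdot [2 - M'(2-r)/2] + O(M'k)$. Choosing $M' \geq 10/(2-r)$ as a constant makes this at most $-k \log n$ for $n$ large, so the expected count is at most $n^{-k}$, and summing over $k$ gives total probability $O(n^{-(2M'+2)})$.

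For $k > \sqrt{Z}$, we use the Galton-Watson counting of Corollary~\ref{cor:bintrees}: with offspring distribution equal to the degree of a vertex in $G$ (mean at most $5$), the expected number of size-$k$ rooted labeled trees contained in $G$ with a specified root is at most $20^{k-1}$. The key trick is a spanning tree decomposition: if a connected $C$ has $L(C) \geq M'k$, then for any spanning tree $T \subseteq G[C]$ (with $k-1$ edges), the non-tree long-range edges in $G[C]$ are independent of the event $T \subseteq G$ and have total mean at most $k/2$. Chernoff gives $\Pr[\text{non-tree long-range edges} \geq (M'-1)k] \leq (e/(2(M'-1)))^{(M'-1)k}$. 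Summing over pairs $(C,T)$ via the Galton-Watson bound, the expected count of bad pairs is at most $N \cdot 20^{k-1} \cdot (e/(2(M'-1)))^{(M'-1)k}$, which for $M'$ a sufficiently large constant is at most $N \cdot e^{-20k}$. Since $k > \sqrt{Z} = n^{\Omega(1)}$ in this regime, the total contribution is super-polynomially small.

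The main obstacle is the large-$k$ case: a naive union bound $\binom{N}{k} \cdot \Pr[L(C) \geq M'k]$ would require $M'$ to grow as $\Omega(\log n / \log\log n)$ because $\log\binom{N}{k}$ contains a factor of $\log n$. The spanning-tree decomposition replaces the $\binom{N}{k}$ union-bound factor with the Galton-Watson count $N \cdot 20^{k-1}$ (exponential in $k$ only), and the independence of the non-tree edges from the event $T \subseteq G$ is what lets us decouple the tree existence probability from the Chernoff bound on extra long-range edges.
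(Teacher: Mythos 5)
Your proof is correct, but it takes a genuinely different route from the paper's. Both arguments reduce to bounding the number of long-range edges inside $S$, but the paper handles \emph{all} subsets of \emph{all} sizes with a single union bound: it proves the refined mean estimate $\E[Y_S]\leq \tau|S|\,(|S|/N)^{1-r/2}$ by rearranging $S$ into a ball around each $u\in S$, so the Chernoff deviation ratio is $M'(N/|S|)^{1-r/2}$ and the tail bound $\exp(-20|S|\log(\e N/|S|))$ beats $\binom{N}{s}\leq(\e N/s)^s$ at every scale --- no connectivity and no tree counting are needed. You instead keep the cruder uniform estimate $p_{uv}\leq 1/Z$ and compensate by splitting at $k\approx\sqrt{Z}=n^{\Omega(1)}$: below the threshold the crude union bound over all $\binom{N}{k}$ sets still works because the deviation ratio $2M'Z/k\geq n^{\Omega(1)}$, and above it you pass to connected components and run the spanning-tree decomposition, paying only the Galton--Watson factor $N\cdot 20^{k-1}$ and exploiting the independence of the non-tree pairs from the event $T\subseteq G$. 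Your second regime is essentially the strategy the paper reserves for the harder $r=2$ analogue (Lemma~\ref{lem:avgdegtwo}), except that you sum directly over trees rather than conditioning on connectivity and invoking Harris's inequality; this is a clean alternative. The trade-off: the paper's proof is shorter and self-contained (one geometric computation, one union bound, and the conclusion holds for all sets without any appeal to Section~\ref{sec:connectedsets}), whereas yours avoids the ball-rearrangement estimate at the cost of importing the tree-counting machinery and the component reduction. All the quantitative checks in your sketch go through: the exponent $k\log n\,[2-M'(2-r)/2]+O(M'k)$ is right with $Z=\Theta(n^{2-r})$ and $k\leq\sqrt{Z}$, the choice $M'\geq 10/(2-r)$ is a constant since $r<2$ is fixed, and in the large regime $N\cdot 20^{k-1}(\e/(2(M'-1)))^{(M'-1)k}\leq N\e^{-20k}$ with $k=n^{\Omega(1)}$ gives a super-polynomially small failure probability.
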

\begin{proof}
Consider an arbitrary set $S\subseteq V$. Let $\binom{S}{2}:=\big\{\{u,w\}\mid u,w\in S, u\neq w\big\}$ denote the set of all unordered pairs of vertices in $S$. For $\{u,w\}\in \binom{S}{2}$, let $Y_{u,w}$ be the indicator r.v. that there is a long-range edge between $u,w$. Then, we have that
\begin{equation}\label{eq:avgupbound}
\sum_{v\in S}d_v\leq 4|S|+2Y_S+|\partial S|, \mbox{ where } Y_S:=\sum_{\{u,w\}\in \binom{S}{2}} Y_{u,w}.
\end{equation}
Let $\mu_S:=\E_{\Gc_{n,r}}[Y_S]$.  We will show that there exists a constant $\tau>0$ (independent of $S$) such that for all sets $S\subseteq V$ it holds that 
\begin{equation}\label{eq:c3cecc}
\mu_S\leq \tau |S|\big(|S|/N\big)^{1-\frac{r}{2}}.
\end{equation}

We finish the proof of the lemma assuming for now \eqref{eq:c3cecc}. Let $M'>100$ be a constant such that $M' \tau(1-r/2)\geq 50$. Since for any set $S\subseteq V$ the random variables $\{Y_{u,w}\}_{u\in S,w\in S}$ are  independent, we obtain by Lemma~\ref{lem:chernoff}  with $c=M'\tau|S|/\mu_S$ and the inequalities $20|S|\leq M'\tau(1-r/2)|S|$, $(1-r/2)\log(\e N/|S|)\leq \log(c/\e)$ that
\begin{equation*}
\mbox{$\Pr_{\Gc_{n,r}}$}\big(Y_S\geq M'\tau|S|\big)\leq \exp\Big(-20|S|\log\big(\e N/|S|\big)\Big).
\end{equation*}
By a union bound over all sets $S\subseteq V$, we obtain that the probability that there exists a set $S$ such that $Y_S\geq M'\tau |S|$ is at most
\begin{equation*}
\sum^{N}_{s=1}\binom{N}{s}\exp\Big(-20s\log\big(\e N/s\big)\Big)\leq \sum^{N}_{s=1} \Big(\frac{\e N}{s}\Big)^{s}\exp\Big(-20s\log\big(\e N/s\big)\Big)\leq \frac{1}{N^{10}}.
\end{equation*}
Combining this with \eqref{eq:avgupbound}, we obtain the statement of the lemma with the constant $M=4+2M'\tau$.

To finish the proof, it remains to show that there exists a constant $\tau>0$ such that \eqref{eq:c3cecc} holds. 
If $|S|> N/10$, we have the trivial bound $\mu_S\leq N$ (since every vertex has on average one long-range edge), and hence we can satisfy \eqref{eq:c3cecc} by choosing $\tau$ to be any constant bigger than $\tau_0=10^{2-r/2}$.
For $|S|\leq N/10$, we have the bound 
\begin{equation}\label{eq:ZZupper312}
\mu_S\leq \frac{1}{Z}\sum_{u\in S}\sum_{w\in S; w\neq u}\dist{u}{w}^{-r},
\end{equation}
where the normalising factor $Z$ is given in \eqref{eq:defZ} and $\dist{u}{w}$ denotes the distance between $u,w$ in the torus. By Lemma~\ref{lem:factor}, we have that there exists a constant $C>0$ such that $Z\leq C n^{2-r}\leq C N n^{-r}$. To bound the  sum in \eqref{eq:ZZupper312}, for $u\in S$, let 
\[Q_u:=\sum_{w\in S; w\neq u}\dist{u}{w}^{-r}.\]
Observe that $Q_u$ can only increase if we move vertices in $S$ as close as possible to the vertex $u$. In particular, let $\ell_0:=\left\lceil 3n\sqrt{|S|/N}\right\rceil\leq n$ and consider the set of vertices 
\[U:=\{w\in V\mid  \dist{v}{w}\leq \ell_0\}.\]
Note that for $\ell\leq \ell_0\leq  n$, the  number of vertices at distance $\ell$ from $u$ is $4\ell$, and hence
\[|U|=1+4\sum^{\ell_0}_{i=1}\ell=1+2\ell_0(\ell_0+1)\geq  1+|S|.\] 
Therefore,
\begin{equation*}
Q_u\leq \sum_{w\in U; w\neq u}\dist{u}{w}^{-r}=4\sum^{\ell_0}_{\ell =1}\frac{1}{\ell^{r-1}}\leq 4\bigg(1+\int^{\ell_0+1}_{1} \frac{1}{x^{r-1}} dx\bigg)\leq \frac{4}{2-r}(\ell_0+1)^{2-r},
\end{equation*}
Since $\ell_0+1\leq 6n\sqrt{|S|/N}$, we obtain that for every $u\in S$, it holds that $Q_u\leq \tau' n^{2-r}(|S|/N)^{1-\frac{r}{2}}$ where $\tau':=\frac{4}{2-r}6^{2-r}$. Plugging this into \eqref{eq:ZZupper312}, we obtain 
that \eqref{eq:c3cecc} holds for any constant $\tau$ greater than $\max\{\tau'/C,\tau_0\}$. 

This concludes the proof of \eqref{eq:c3cecc}, and therefore the proof of Lemma~\ref{lem:avgdeg}.
\end{proof}

We are now ready to prove Lemma~\ref{lem:conrltwo}, which we restate here for convenience.
\begin{lemconrltwo}
\statelemconrltwo
\end{lemconrltwo}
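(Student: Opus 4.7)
Since $\pi(S)\leq 1/2$ forces $\sum_{v\notin S}d_v/(2|E|)\in [1/2,1]$, the conductance satisfies $\Phi(S)\geq |\partial S|/\sum_{v\in S}d_v$. I condition on the intersection of the high-probability events in Lemmas~\ref{lem:numedges}, \ref{lem:avgdeg}, and~\ref{lem:bigexpbigexp}, which fail with probability $O(1/n^2)$ by a union bound; here I fix $\ell:=\ell_0$ (the constant from Lemma~\ref{lem:bigexpbigexp}), $\eta:=\epsilon/(8\ell^2)$ (where $\epsilon$ is the constant in Lemma~\ref{lem:bigexpbigexp}, satisfying $\epsilon\leq 1/2$ by Lemma~\ref{lem:arbexp}), and $\rho:=100\ell^2$. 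Combined with Lemma~\ref{lem:avgdeg}, this reduces the problem to lower bounding the ratio $|\partial S|/(M|S|+|\partial S|)$, which is monotone increasing in $|\partial S|$. When $|S|\leq \rho\log n$, for $n$ large we have $|S|\leq N/2$ and $2\sqrt{|S|}\leq 2n+1$, so Theorem~\ref{thm:torusexpansion} gives $|\partial S|\geq |\partial^*S|\geq 2\sqrt{|S|}$; plugging in, $\Phi(S)\geq 2/(M\sqrt{|S|}+2)\geq \tau_0/\sqrt{|S|}$ with $\tau_0:=2/(M+2)$.

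\textbf{Large connected sets, via a box-core dichotomy.} Fix a connected $S$ with $|S|\geq \rho\log n$. Lemma~\ref{lem:torusdecom2} offers two cases. If $|\boxcore(S)|<(1-\eta)|S|$, then $|\partial S|\geq |\partial^*S|\geq \eta|S|/(4\ell^2)$, producing $\Phi(S)\geq \tau_1$ for some positive constant $\tau_1$. Otherwise $|\boxcore(S)|\geq (1-\eta)|S|$, and I pass to the minimal box-like superset
\[S^+:=\bigcup\{U\in\mathcal{U}\,:\,U\cap S\neq\emptyset\}.\]
Three properties of $S^+$ drive the argument: \emph{(a)} $S^+$ is connected in $G$, because the connectivity of $S$ in $G$ forces the boxes touching $S$ to be connected in the box graph and each box is internally connected via torus edges; \emph{(b)} $|S^+|\leq |S|+4\ell^2|S\setminus\boxcore(S)|\leq (1+4\ell^2\eta)|S|=(1+\epsilon/2)|S|$, so $|S|\geq (1-\epsilon)|S^+|$; \emph{(c)} $|S^+|$ lies in the admissible window $[100\ell^2\log n,\,99N/100]$ for Lemma~\ref{lem:bigexpbigexp}, the lower bound coming from $\rho\geq 100\ell^2$ and the upper bound because $\pi(S)\leq 1/2$, $d_v\geq 4$, and $|E|\leq 3N$ (via Lemma~\ref{lem:numedges}) force $|S|\leq 3N/4$, hence $|S^+|\leq (5/4)(3N/4)<99N/100$. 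Lemma~\ref{lem:bigexpbigexp} certifies $S^+$ as $(\epsilon,c)$-expanding; viewing $S\subseteq S^+$ as a $(1-\epsilon)$-dense subset yields at least $c|S|$ edges from $S$ to $V\setminus S^+\subseteq V\setminus S$, hence $|\partial S|\geq c|S|$ and $\Phi(S)\geq c/(M+c)=:\tau_2$.

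\textbf{Low-conductance addendum and main obstacle.} Set $\tau:=\min\{\tau_0,\tau_1,\tau_2\}$, so that all three conductance bounds in the lemma hold. Then $\Phi(S)<\tau$ excludes the large-set regime and forces $|S|<\rho\log n$. Substituting $\Phi(S)<\tau$ into $\Phi(S)\geq |\partial S|/(M|S|+|\partial S|)$ yields $|\partial S|\leq \tfrac{\tau M}{1-\tau}|S|$, so $\sum_{v\in S}d_v\leq \tfrac{M}{1-\tau}|S|$, and using $|E|\geq 2N$ from Lemma~\ref{lem:numedges} gives $\pi(S)\leq \chi|S|/N$ with $\chi:=M/(4(1-\tau))$. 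The crux of the whole argument is the enlargement $S\rightsquigarrow S^+$: Lemma~\ref{lem:bigexpbigexp} is proved by a union bound over \emph{connected} box-like sets, so it applies neither to $S$ itself (rarely box-like) nor to $\boxcore(S)$ (whose components in $G$ may each be too small to be controlled individually and which moreover need not be connected); the careful balance $\eta=\epsilon/(8\ell^2)$ is exactly what keeps $S$ sufficiently dense inside $S^+$ so that the expansion of $S^+$ transfers to a lower bound on $|\partial S|$.
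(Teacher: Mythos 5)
Your proposal is correct and follows essentially the same route as the paper's proof: the same conditioning on Lemmas~\ref{lem:numedges}, \ref{lem:bigexpbigexp} and~\ref{lem:avgdeg}, the same box-core dichotomy from Lemma~\ref{lem:torusdecom2}, the same enlargement to the minimal connected box-like superset to import the $(\epsilon,c)$-expansion, torus isoperimetry for sets of size at most $\rho\log n$, and the same derivation of $\pi(S)\leq\chi|S|/N$ from $\Phi(S)<\tau$. The only differences are cosmetic (separate constants $\tau_0,\tau_1,\tau_2$ combined by a minimum rather than the paper's unified choice of $\eta$).
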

\begin{proof}
Let $\epsilon,c>0$ be the constants in Lemma~\ref{lem:bigexpbigexp} and $M>0$ be the constant in Lemma~\ref{lem:avgdeg}.

By taking a union bound over the events in Lemmas~\ref{lem:numedges},~\ref{lem:bigexpbigexp} and~\ref{lem:avgdeg}, we obtain that there exists a constant integer $\ell$ such that the following hold for all sufficiently large $n$ and an arbitrary $\ell$-partition of the torus with probability $1-O(1/n^2)$ over the choice of $G=(V,E)\sim \Gc_{n,r}$:
\begin{enumerate}
\item \label{it:edges1} $2N\leq |E|\leq 3N$.
\item \label{it:edd33} every box-like set $S$, which is connected in $G$ and satisfies $100\ell^2\log n\leq |S|\leq 99N/100$, is  also $(\epsilon,c)$-expanding.
\item \label{it:avgdeg42} every set $S\subseteq V$ in $G$ satisfies $\sum_{v\in S}d_v\leq M|S|+|\partial S|$.
\end{enumerate}
Let $G$ be an arbitrary graph satisfying Items~\ref{it:edges1},~\ref{it:edd33} and~\ref{it:avgdeg42}. We only need to show that $G$ satisfies the conclusions of the lemma (for some appropriate constants $\rho,\tau,\chi>0$). Note, we may assume w.l.o.g. that $\epsilon,c\in (0,1/100)$, since the property that a set $S$ is $(\epsilon,c)$-expanding is maintained when decreasing the values of $\epsilon,c$. Similarly, we may assume w.l.o.g. that $M>100$ since Item~\ref{it:avgdeg42} is maintained when we increase the value of $M$.

Let $\eta\in (0,1)$ be a small constant so that $4\ell^2\eta\leq \epsilon/2$ and $\eta/(4\ell^2)\leq c$. By Lemma~\ref{lem:torusdecom2}, we have that  for every set $S\subseteq V$ it holds that 
\begin{equation}\label{eq:torusdecom2}
\big|\boxcore(S)\big|\geq (1-\eta)|S|, \mbox{ or else } |\partial^* S|\geq \eta|S|/(4\ell^2).
\end{equation}
We will show that
\begin{equation}\label{eq:finedex35}
|\partial S|/|S|\geq \eta/(4\ell^2) \mbox{ for any connected set $S$ satisfying $100\ell^2\log n\leq|S|\leq 98N/100$.}
\end{equation}
Indeed, let $S'=\boxcore(S)$. If $|S'|<(1-\eta)|S|$, we have by \eqref{eq:torusdecom2} that $|\partial S|\geq |\partial^* S|\geq \eta|S|/(4\ell^2)$, as needed. Hence, we may focus on the case that $|S'|\geq (1-\eta)|S|$. Let $S''$ be the smallest box-like set such that $S\subseteq S''$. Note that the set $S''$ is obtained from $S$ by just ``filling-up" those boxes $U$ in the $\ell$-partition which only partially intersect $S$ (i.e., $U\cap S\neq \emptyset$ and $U\cap S\neq S$). Therefore, the set $S''$ is a box-like set which is connected in $G$ and further satisfies
\[|S''|\leq |S'|+4\ell^2 |S\backslash S'|\leq |S|+4\ell^2\eta |S|\leq (1+\epsilon/2)|S|\leq 99N/100.\]
Therefore, by Item~\ref{it:edd33} above, we have that $S''$ is $(\epsilon,c)$-expanding. Since $S\subseteq S''$ and $|S|\geq (1-\epsilon)|S''|$, by the definition of an $(\epsilon,c)$-expanding set, we have that
\[|\partial S|\geq |\partial S\cap \partial S''|\geq c|S|\geq \eta|S|/(4\ell^2).\]
This finishes the proof of \eqref{eq:finedex35}.

For a connected set $S$ satisfying $|S|\leq 100\ell^2\log n$, by considering just the edges of the torus, we have by Theorem~\ref{thm:torusexpansion} that $|\partial S|\geq |\partial^* S|\geq 2\sqrt{|S|}$ and hence 
\begin{equation}\label{eq:finedex36}
|\partial S|/|S|\geq 2/\sqrt{|S|} \mbox{ for any connected set $S$ satisfying $|S|\leq 100\ell^2\log n$.}
\end{equation}

We are now ready to bound the conductance $\Phi(S)$ for a connected set $S$ with $\pi(S)\leq 1/2$. The assumption $\pi(S)\leq 1/2$ gives that $\sum_{v\in S}d_v\leq |E|\leq 3N$, where the last inequality holds from Item~\ref{it:edges1}. Since $d_v\geq 4$ for any $v\in V$, we have that $|S|\leq 3N/4$. Moreover, we have that
\begin{equation}\label{eq:phisrl2}
\Phi(S)=\frac{|\partial S|}{\frac{1}{2|E|}\big(\sum_{v\in S}d_v\big)\big(\sum_{v\notin S}d_v\big)}\geq \frac{|\partial S|}{\sum_{v\in S}d_v}, 
\end{equation}
where in the inequality we used the trivial bound $\sum_{v\notin S}d_v\leq 2|E|$. Further, for any set $S\subseteq V$ we have by Item~\ref{it:avgdeg42} that
\begin{equation}\label{eq:wq2}
\sum_{v\in S}d_v\leq M|S|+|\partial S|.
\end{equation}
It follows that 
\[\Phi(S)\geq \frac{|\partial S|}{\sum_{v\in S}d_v}\geq \frac{\frac{|\partial S|}{|S|}}{M+\frac{|\partial S|}{|S|}}.\]
Combining this bound with \eqref{eq:finedex35} and \eqref{eq:finedex36}, we obtain the bounds in the lemma with $\rho=100\ell^2$ and $\tau=\min\{\frac{\eta}{4\ell^2}/(M+\frac{\eta}{4\ell^2}),1/M\}$. This finishes the proof of the first part.

For the second part, consider a connected set $S$ such that $\Phi(S)<\tau$ and $\pi(S)\leq 1/2$. Our goal is to show that there exists a constant $\chi>0$ such that 
\begin{equation}\label{eq:boundonpi}
\frac{\sum_{v\in S}d_v}{2|E|}\leq \chi |S|/N.
\end{equation}
By the first part of the lemma, we know that $|S|\leq \rho \log n$. Combining the bound in \eqref{eq:phisrl2} with the inequality $\Phi(S)<\tau$, we also obtain that $|\partial S|\leq \tau\sum_{v\in S}d_v$. Then, using \eqref{eq:wq2}, we obtain that  
\[\sum_{v\in S}d_v\leq M|S|+\tau \sum_{v\in S}d_v, \]
so, using  the fact that $|E|\geq 2N$, we get that
\[\frac{\sum_{v\in S}d_v}{2|E|}\leq \frac{M}{1-\tau} |S|/(4N),\]
which proves \eqref{eq:boundonpi} with $\chi=M/(4(1-\tau))$. 

This concludes the proof of Lemma~\ref{lem:conrltwo}.
\end{proof}

\section{Upper bound on the mixing time for $r=2$}\label{sec:req2}
In this section we establish the upper bound claimed in Theorem~\ref{thm:main} for the small-world network for $r=2$.
In particular, we will show that with probability $1-O(1/n^2)$ over the choice of
the graph $G\sim\Gc_{n,r}$, the lazy random walk on the small-world network model has
mixing time $\Tmix= O((\log{n})^4)$.

\subsection{Proof Outline}
Let $G=(V,E)\sim\Gc_{n,r}$. For a subset $S\subseteq V$, we will use $\partialv S$ to denote the set of vertices in $V\backslash S$ which are connected by an edge to a vertex in $S$. Note that $|\partial S|\geq |\partialv S|$, so a lower bound on $|\partialv S|$ also implies a lower bound on $|\partial S|$ (recall, $\partial S$ is the set of edges with exactly one endpoint in $S$). 
The following lemma will be used to obtain a lower bound on $|\partialv S|$ for sets $S\subseteq V$. Throughout this section, we will use $\alpha$ to denote $|S|/N$; although $\alpha$ is a function of $S$, we suppress this from the notation because the set $S$ will be clear from context.
\newcommand{\statelemarbexpreqtwo}{Let $r=2$. Then, there exists a constant $c>0$ such that for all sufficiently large integers $n$, the following holds.

For all sets $S\subseteq V$ with $|S|=\alpha  N$ and $\alpha\in(0,1]$, it holds that 
\[\mbox{$\Pr_{\Gc_{n,r}}$}\bigg(|\partialv S|\leq c|S|\frac{\log (1/\alpha)}{\log n}\bigg)\leq \exp\Big(-c |S|\frac{\log (1/\alpha)}{\log n}\Big).\]
}
\begin{lemma}\label{lem:arbexpreqtwo}
\statelemarbexpreqtwo
\end{lemma}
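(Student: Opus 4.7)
The plan is to write $|\partialv S| = \sum_{w \in V \setminus S} Y_w$, where $Y_w$ is the indicator that $w$ has at least one $S$-neighbour in $G$, and to apply a Chernoff bound. The crucial observation is that the variables $\{Y_w\}_{w \in V\setminus S}$ are \emph{mutually independent}: $Y_w$ depends only on the random edges $\{u,w\}$ with $u \in S$, and for distinct $w, w'$ these collections of potential edges are disjoint. Hence $|\partialv S|$ is a sum of independent indicators and Lemma~\ref{lem:chernoff} applies once we have a lower bound on $\mathbf{E}[|\partialv S|]$.

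For the first moment, set $q_w := (1/Z) \sum_{u \in S \setminus \{w\},\ \{u,w\}\notin E_n} \dist{u}{w}^{-2}$. Then $\mathbf{E}[Y_w] \geq 1 - e^{-q_w} \geq \tfrac{1}{2} \min(q_w, 1)$ (from $1 - e^{-x} \geq x/2$ for $x\in[0,1]$ and $1-e^{-x} \geq 1/2$ for $x\geq 1$). By Lemma~\ref{lem:factor}, the maximum possible value of $q_w$ is at most $(Z+4)/Z = 1 + O(1/\log n)$, so there is an absolute constant $C$ with $\min(q_w, 1) \geq q_w/C$; hence $\mu_V := \mathbf{E}[|\partialv S|] \geq (2C)^{-1} \mu$ where $\mu := \sum_{w \in V\setminus S} q_w$.

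To lower-bound $\mu$, write (up to a negligible additive $O(|S|)$ from torus edges) $\mu = (1/Z)\sum_{u \in S}(Z_0 - T_u)$, where $Z_0 := \sum_{v \neq u} \dist{u}{v}^{-2} = 4\log n + O(1)$ and $T_u := \sum_{v \in S \setminus \{u\}} \dist{u}{v}^{-2}$. By the rearrangement inequality, $T_u$ is maximised when $S$ is the ball of radius $R = \Theta(n\sqrt{\alpha})$ around $u$, giving $T_u \leq 4 \sum_{\ell=1}^R 1/\ell = 4\log n + 2\log \alpha + O(1)$; thus $Z_0 - T_u \geq 2\log(1/\alpha) - O(1)$. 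For $\alpha \leq \alpha_0$ small enough this gives $Z_0 - T_u \geq \log(1/\alpha)$, hence $\mu \geq c_1 |S|\log(1/\alpha)/\log n$. For moderate $\alpha \in [\alpha_0, 1/2]$, $\log(1/\alpha) = O(1)$, and the trivial bound $\dist{u}{w}^{-2} \geq (2n)^{-2}$ together with $|V\setminus S| \geq N/2$ gives $\mu \geq c|S|/\log n$, which suffices. For $\alpha \geq 1/2$ we apply the same rearrangement argument to the complement: for each $w \in V\setminus S$, $\sum_{u \in S}\dist{u}{w}^{-2} \geq 2\log(1/(1-\alpha)) - O(1)$, and summing over $V\setminus S$ gives $\mu \geq c(1-\alpha) N \log(1/(1-\alpha))/\log n$; the elementary inequality $(1-\alpha)\log(1/(1-\alpha)) \geq \alpha\log(1/\alpha)$ for $\alpha \in [1/2, 1]$ then upgrades this to $\mu \geq c_1 |S|\log(1/\alpha)/\log n$.

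Finally, Lemma~\ref{lem:chernoff} applied to $|\partialv S| = \sum_w Y_w$ with $t = \mu_V/2$ yields $\Pr(|\partialv S| \leq \mu_V/2) \leq \exp(-\mu_V/10)$; since $\mu_V \geq (c_1/(2C))|S|\log(1/\alpha)/\log n$, taking $c = c_1/(20C)$ yields the stated bound. The main obstacle is the regime $\alpha$ close to $1$: there $S$ is spread throughout the torus and many $u \in S$ can have $T_u \approx Z_0$, so the direct bound on $\sum_{u\in S}(Z_0 - T_u)$ collapses. The remedy is to exploit the symmetry between $S$ and $V\setminus S$: when $|V\setminus S|$ is small, rearrangement \emph{in the complement} yields the needed lower bound, which is then translated back using the inequality $h(1-\alpha) \geq h(\alpha)$ on $[1/2,1]$ for $h(x) = x\log(1/x)$ (the ratio $h(1-\alpha)/h(\alpha)$ equals $1$ at $\alpha = 1/2$, tends to $\infty$ as $\alpha \to 1$, and is monotone between).
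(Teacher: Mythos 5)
Your proposal is correct and follows essentially the same route as the paper: $|\partialv S|$ is written as a sum of independent vertex-indicators, Lemma~\ref{lem:chernoff} is applied, and the mean is lower-bounded via the rearrangement (ball) argument for small $\alpha$ and via the trivial bound $\dist{u}{w}\leq 2n$ for moderate $\alpha$. The only divergence is at $\alpha\geq 1/2$, where your complement-rearrangement combined with $h(1-\alpha)\geq h(\alpha)$ is valid but unnecessary: since $\log(1/\alpha)\leq (1-\alpha)/\alpha = O(1-\alpha)$ once $\alpha$ is bounded away from $0$, the trivial bound already yields $\mu=\Omega((1-\alpha)|S|/\log n)=\Omega(|S|\log(1/\alpha)/\log n)$ on all of $[\alpha_0,1]$ --- which is exactly how the paper treats $\alpha\in[10^{-4},1]$, and which also closes the small coverage gap between your ``moderate'' case (which uses $|V\setminus S|\geq N/2$, so needs $\alpha\leq 1/2$) and the range where $2\log(1/(1-\alpha))-O(1)$ actually becomes positive.
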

Lemma \ref{lem:arbexpreqtwo} suggests that larger sets have worse vertex expansion for $r=2$: for $|S|=N^{1-\Omega(1)}$, the vertex expansion 
 is linear in  $|S|$, while for $|S|=\Omega(N)$ the vertex expansion  is roughly $|S|/\log n$. Using Lemma~\ref{lem:arbexpreqtwo}, we  show the following in Section~\ref{sec:bigexpansiontwo}.

\newcommand{\statelembigexpansiontwo}{Let $r=2$. There exist constants $c, \ell_0>0$ such that for all sufficiently large integers $n$ and $\ell=\left\lceil \ell_0(\log n)^{1/2}\right\rceil$, the following holds for any $\ell$-partition of the torus $T=(B_n,E_n)$. 

With probability $1-O(1/n^2)$ over the choice of $G=(V,E)\sim \Gc_{n,r}$, every box-like set $S$ with $|S|=\alpha N$ and $\alpha\in(0,99/100]$ satisfies $|\partialv S|\geq c|S|\frac{\log(1/\alpha)}{\log n}$.
}
\begin{lemma}\label{lem:bigexpansiontwo}
\statelembigexpansiontwo
\end{lemma}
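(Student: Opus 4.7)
The plan is to apply Lemma~\ref{lem:arbexpreqtwo} to each box-like set individually, and then take a union bound over all box-like sets with $\alpha\in(0,99/100]$, choosing $c$ equal to the constant $c_0$ from Lemma~\ref{lem:arbexpreqtwo} and $\ell_0$ large enough that the exponential savings dominate the combinatorial count. Fix an $\ell$-partition $\mathcal{U}=\{U_1,\dots,U_Q\}$ of the torus; then $Q\leq N/(\ell_0^2\log n)$, each box has size in $[\ell^2,4\ell^2]$, and a box-like set with exactly $q$ boxes has $|S|\in[q\ell^2,4q\ell^2]$ and $\alpha=|S|/N\in[q\ell^2/N,4q\ell^2/N]$. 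The number of such sets is at most $\binom{Q}{q}$, and Lemma~\ref{lem:arbexpreqtwo} bounds the probability that any particular such $S$ violates the expansion bound by $\exp(-c_0|S|\log(1/\alpha)/\log n)$.

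The key task is a useful uniform lower bound on $|S|\log(1/\alpha)/\log n$ across box-like sets with $q$ boxes. The inequality $|S|\geq q\ell^2=q\ell_0^2\log n$ already gives $|S|\log(1/\alpha)/\log n\geq q\ell_0^2\log(1/\alpha)$, so the inconvenient $1/\log n$ factor in Lemma~\ref{lem:arbexpreqtwo} is absorbed by the $\ell^2$ factor. I would then split into two regimes. For \emph{small} $q$ (specifically $q\leq N/(4e\ell^2)$), every set in this class has $\alpha\leq 1/e$, so $\alpha\log(1/\alpha)$ is increasing on the permissible range and attains its minimum at the left endpoint $\alpha^-=q\ell^2/N$; this yields $|S|\log(1/\alpha)/\log n\geq q\ell_0^2\log(N/(q\ell_0^2\log n))$, which is a substantial logarithmic quantity throughout this regime. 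Coupled with $\binom{Q}{q}\leq(eN/(q\ell_0^2\log n))^q$, the union bound over such sets becomes $\exp(q-q(c_0\ell_0^2-1)\log(N/(q\ell_0^2\log n)))$, which for $c_0\ell_0^2$ a sufficiently large constant is at most $n^{-4q}$ when $q=1$ (where the log factor is $\Theta(\log n)$) and decays geometrically in $q$; the total contribution summed over $q$ is $O(1/n^3)$.

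For \emph{large} $q$ ($q>N/(4e\ell^2)$), one instead uses the uniform bound $\log(1/\alpha)\geq\log(100/99)$ coming from $\alpha\leq 99/100$. Combined with the crude count $\binom{Q}{q}\leq 2^Q$, the union bound becomes $\exp(Q\log 2-c_0 q\ell_0^2\log(100/99))$, which at the smallest valid $q\approx N/(4e\ell^2)$ specialises to $\exp(Q(\log 2-c_0\ell_0^2\log(100/99)/(4e)))$. The threshold $c_0\ell_0^2\log(100/99)>4e\log 2$ is a fixed-constant condition on $\ell_0$, and beyond it the exponent is $-\Omega(Q)$, making the large-$q$ contribution exponentially small. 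Summing the two range contributions yields the desired $O(1/n^2)$ overall failure probability.

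The hard part is handling the full range of $\alpha$ with $\ell_0$ a single fixed constant. For small $\alpha$, $\log(1/\alpha)$ is as large as $\Theta(\log n)$ and the bound from Lemma~\ref{lem:arbexpreqtwo} easily overpowers $\binom{Q}{q}$; the delicate case is $\alpha$ close to $99/100$, where $\log(1/\alpha)$ degrades to $\Theta(1)$ and one must lean entirely on the $\ell^2=\ell_0^2\log n$ factor from $|S|\geq q\ell^2$ to cancel the $1/\log n$ factor in the conclusion of Lemma~\ref{lem:arbexpreqtwo}. This cancellation is precisely why $\ell$ is scaled as $\sqrt{\log n}$ here, rather than kept constant as in the $r<2$ setting of Lemma~\ref{lem:bigexpbigexp}.
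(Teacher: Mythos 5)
Your proposal is correct and follows essentially the same route as the paper's proof: apply Lemma~\ref{lem:arbexpreqtwo} to each box-like set, let $\ell=\Theta(\sqrt{\log n})$ so that $|S|\geq q\ell^2$ cancels the $1/\log n$ in the exponent, and then split the union bound into a small-$\alpha$ regime counted by $\binom{Q}{q}$ (using concavity/monotonicity of $q\log(Q/q)$ at the endpoints) and a large-$\alpha$ regime counted crudely by $2^Q$ with $\log(1/\alpha)\geq\log(100/99)$. The only differences are cosmetic (the split point $1/(4e)$ versus the paper's $1/100$, and slightly different bookkeeping of the constants).
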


The following lemma is the analogue of Lemma~\ref{lem:avgdeg} ($r<2$), and allows us to control the probability mass of a set $S$ in the stationary distribution of the random walk. The difference for $r=2$ is that the expected number of edges in the interior of a set $S$ is significantly higher than in the case $r<2$. Nevertheless, we can recover the following if we restrict our attention to connected sets $S$. The proof is analogous to  the proof of \cite[Lemma 8]{AL} and requires some care to handle the dependency of the average degree of $S$ with the event that $S$ is connected.
\newcommand{\statelemavgdegtwo}{Let $r=2$. There exists a constant $M>0$ such that the following holds for all sufficiently large $n$ with probability $1-O(1/n^2)$ over the choice of the graph $G=(V,E)\sim\Gc_{n,r}$. 

For every set $S\subseteq V$ which is connected in $G$, it holds that $\sum_{v\in S}d_v\leq M\max\{|S|,\log n\}$.
}
\begin{lemma}\label{lem:avgdegtwo}
\statelemavgdegtwo
\end{lemma}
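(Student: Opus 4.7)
The plan is to decompose $\sum_{v\in S}d_v$ into torus and long-range contributions, apply Chernoff to each long-range piece, and union-bound over $G$-connected subsets using the machinery of Lemma~\ref{lem:conngalton}. The main obstacle, discussed below, is handling the dependence between the interior long-range contribution and the event that $S$ is connected.

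For $S\subseteq V$ with $|S|=k$, decompose $\sum_{v\in S}d_v=4k+2Y^{\mathrm{in}}_S+L^{\mathrm{out}}_S$, where $Y^{\mathrm{in}}_S$ counts long-range edges with both endpoints in $S$ and $L^{\mathrm{out}}_S$ counts long-range edges with exactly one endpoint in $S$. Each is a sum of independent Bernoullis, and the two underlying collections are disjoint (so $Y^{\mathrm{in}}_S$ and $L^{\mathrm{out}}_S$ are independent). For $r=2$, Lemma~\ref{lem:factor} gives $Z=\Theta(\log n)$, and the elementary bound $\sum_{w\ne u}\dist{u}{w}^{-2}=O(\log n)$ yields $\E[L^{\mathrm{out}}_S]\le k$ and $\E[Y^{\mathrm{in}}_S]\le \tfrac{1}{Z}\cdot k\cdot O(\log n)=O(k)$. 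Lemma~\ref{lem:chernoff} then gives, for any constant $M$ taken sufficiently large and with $C=C(M)\to\infty$ as $M\to\infty$,
\[
\P\!\left(L^{\mathrm{out}}_S\ge M\max\{k,\log n\}\right),\ \P\!\left(Y^{\mathrm{in}}_S\ge M\max\{k,\log n\}\right) \le \exp\!\big(-C\max\{k,\log n\}\big),
\]
where for $k<\log n$ the bound is obtained via the large-deviation form of Chernoff with $c=M\log n/\mu$.

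To count connected subsets I would apply Lemma~\ref{lem:conngalton}, Lemmas~\ref{lem:binqj} and \ref{lem:sumindqj}, and Corollary~\ref{cor:bintrees} to $G\sim\Gc_{n,r}$ itself, viewed as an instance of $\Gc_{N,\pb}$ with $p_{uv}=1$ on torus edges and $p_{uv}=\dist{u}{v}^{-r}/Z$ otherwise. The associated tree-process offspring distribution is $Y=4+X$ with $X$ the long-range degree of a vertex, so $\E[Y]=5$ by vertex-transitivity of the torus; Corollary~\ref{cor:bintrees} then bounds the expected number of $G$-connected subsets of size $k$ containing a given vertex by $(4\E[Y])^{k-1}=20^{k-1}$, and the total expected number of such sets by $N\cdot 20^{k-1}$. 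For the $L^{\mathrm{out}}_S$ term this is enough: using independence of $L^{\mathrm{out}}_S$ from the event "$S$ is connected" (which depends only on edges inside $S$, cf.\ Lemma~\ref{lem:condSconn}),
\[
\P\!\left(\exists \text{ conn.\ }S,\,|S|=k,\,L^{\mathrm{out}}_S\ge M\max\{k,\log n\}\right)\ \le\ N\cdot 20^{k-1}\cdot\exp\!\big(-C\max\{k,\log n\}\big),
\]
which, for $M$ (hence $C$) large enough, sums to $O(1/n^2)$ over $k$ since $20^{\log n}=n^{O(1)}$.

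The principal difficulty is the interior term $Y^{\mathrm{in}}_S$: both $Y^{\mathrm{in}}_S$ and the event "$S$ is connected in $G$" are determined by the long-range edges inside $S$, so they are not independent; in fact FKG shows conditioning on connectedness stochastically increases $Y^{\mathrm{in}}_S$, so the naive independence union bound fails, and a direct union bound over all $\binom{N}{k}$ sets is too wasteful. Following \cite{AL}, I would sidestep this by refining the coupling of Lemma~\ref{lem:conngalton}: explore $G$-components by BFS from a fixed root, and as each newly discovered vertex is exposed, reveal independent Bernoullis both for its edges to not-yet-explored vertices (the branching offspring of the tree process) \emph{and} for its potential back-edges to the already-explored set. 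Each $G$-connected set of size $k$ then corresponds to a spanning exploration-tree of $k-1$ edges (bounded in number by $N\cdot 20^{k-1}$ as above) together with an independent configuration of back-edges, each a Bernoulli with probability at most $1/Z$. Chernoff applied to the back-edges controls their number (which equals $Y^{\mathrm{in}}_S$ minus the number of long-range tree-edges) by $M\max\{k,\log n\}$ with the same $\exp(-C\max\{k,\log n\})$ tail. Combining with the $20^{k-1}$ tree-enumeration bound, choosing $M$ large, and summing over $k$ yields the lemma with constant $M'=4+3M$.
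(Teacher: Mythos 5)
Your proposal is correct, and it handles the crux of the lemma --- the dependence between $Y^{\mathrm{in}}_S$ and the event that $S$ is connected --- by a genuinely different route from the paper. The paper's Lemma~\ref{lem:c1c2fwq} partitions the connectedness event into disjoint pieces $\mathcal{E}_i'=\mathcal{E}_i\setminus\bigcup_{j<i}\mathcal{E}_j$ indexed by spanning trees $t_i$ of $S$ and invokes Harris' inequality (Lemma~\ref{lem:harris}) to get $\Pr(Y_{in}\mbox{ large}\mid\mathcal{E}_i')\le\Pr(Y_{in}\mbox{ large}\mid\mathcal{E}_i)$; conditioning on the single product event $\mathcal{E}_i$ then makes $Y_{in}$ a shifted sum of independent Bernoullis, and the final union bound over connected sets uses $\E[W_q]\le n^2\cdot 40^{q}$ from Lemma~\ref{lem:numcon} with $\ell=1$. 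You instead union-bound directly over witness spanning trees: for a fixed tree $t$ with vertex set $S$, the event $\{t\subseteq G\}$ is independent of the long-range indicators on $\binom{S}{2}\setminus E(t)$, so $\Pr\bigl(t\subseteq G,\ Y^{\mathrm{in}}_S\mbox{ large}\bigr)$ factors and Chernoff applies to the remaining pairs, while $\sum_{t}\Pr(t\subseteq G)$ over all labelled trees on $k$ vertices is exactly the quantity that Lemma~\ref{lem:conngalton} together with Corollary~\ref{cor:bintrees} bounds by $N\cdot 20^{k-1}$. This avoids FKG entirely, at the harmless cost of replacing $\Pr(S\mbox{ connected})$ by the possibly larger sum over its spanning trees --- harmless because the paper's own union bound over sets is controlled by the same tree count. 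Two small presentational points: the witness-tree step is cleaner stated as a union bound over \emph{all} spanning trees of $S$ rather than over ``the'' BFS exploration tree (which tree BFS produces is itself determined by the back-edges, so it is not independent of them, whereas the inequality you actually write down does not need this); and your quantitative claims ($\E[L^{\mathrm{out}}_S]\le k$ and $\E[Y^{\mathrm{in}}_S]=O(k)$ for $r=2$, the large-deviation form of Lemma~\ref{lem:chernoff} when $k<\log n$, and the final constant $M'=4+3M$) all check out.
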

The proof of Lemma~\ref{lem:avgdegtwo} is given in Section~\ref{sec:avgdegtwo}. Combining Lemmas~\ref{lem:bigexpansiontwo} and~\ref{lem:avgdegtwo}, we obtain the following bounds on the conductance of the random walk.

\newcommand{\statelemconreqtwo}{Let $r=2$. There exists a constant $\tau>0$ such that the following holds for all sufficiently large integers  $n$ with probability $1-O(1/n^2)$ over the choice of the graph $G\sim\Gc_{n,r}$. 

For every connected set $S$ in $G$ with $|S|=\alpha N$ and $\pi(S)\leq 1/2$, the conductance $\Phi(S)$ satisfies 
\[\Phi(S)\geq \tau \frac{\log (1/\alpha)}{(\log n)^2}.\]
}
\begin{lemma}\label{lem:conreqtwo}
\statelemconreqtwo
\end{lemma}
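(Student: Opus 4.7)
The plan mirrors the proof of Lemma~\ref{lem:conrltwo} but must accommodate the $\alpha$-dependent vertex expansion provided by Lemma~\ref{lem:bigexpansiontwo}. First I would fix $\ell=\lceil \ell_0(\log n)^{1/2}\rceil$, with $\ell_0$ taken from Lemma~\ref{lem:bigexpansiontwo}, and take a union bound over the high-probability events in Lemmas~\ref{lem:numedges}, \ref{lem:bigexpansiontwo} and~\ref{lem:avgdegtwo}. This gives a graph $G$ in which (a) $|E|\leq 3N$; (b) every box-like set $T$ with $|T|=\alpha_T N$, $\alpha_T\leq 99/100$, satisfies $|\partialv T|\geq c_1|T|\log(1/\alpha_T)/\log n$ (where $c_1$ denotes the constant from Lemma~\ref{lem:bigexpansiontwo}, WLOG shrunk so that $c_1\leq 1/4$); and (c) every connected set $S$ satisfies $\sum_{v\in S}d_v\leq M\max\{|S|,\log n\}$. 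The assumption $\pi(S)\leq 1/2$, combined with (a) and $d_v\geq 4$, forces $|S|\leq 3N/4$, so $\alpha<99/100$ and $\log(1/\alpha)>0$. Using $\sum_{v\notin S}d_v\leq 2|E|$, the conductance satisfies $\Phi(S)\geq |\partial S|/\sum_{v\in S}d_v$, so the problem reduces to a lower bound on $|\partial S|$.

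For small connected sets with $|S|\leq (\log n)^2$, I would simply invoke Theorem~\ref{thm:torusexpansion}: $|\partial S|\geq |\partial^* S|\geq 2\sqrt{|S|}$. Combined with (c), this gives $\Phi(S)\geq 2/(M\log n)$, which dominates the target $\tau\log(1/\alpha)/(\log n)^2\leq 2\tau/\log n$ provided $\tau\leq 1/M$.

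The bulk of the work is the case $|S|>(\log n)^2$, which I would handle by applying Lemma~\ref{lem:torusdecom2} with the parameter $\eta=c_1\log(1/\alpha)/(2\log n)\in(0,1/4]$. If the consolation branch of that lemma fires, i.e., $|\partial^* S|\geq \eta|S|/(4\ell^2)$, then using $\ell^2=\Theta(\log n)$ gives $|\partial S|\geq c_1\log(1/\alpha)|S|/(8\ell_0^2(\log n)^2)$, whence $\Phi(S)=\Omega(\log(1/\alpha)/(\log n)^2)$, matching the target. Otherwise $S'=\boxcore(S)$ is box-like with $|S'|\geq(1-\eta)|S|$ and $\alpha'\leq\alpha$, so by (b), $|\partialv S'|\geq c_1(1-\eta)|S|\log(1/\alpha)/\log n$. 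Every vertex of $\partialv S'$ lies either in $V\setminus S$ (and hence in $\partialv S$, since its neighbour in $S'\subseteq S$ witnesses membership) or in $S\setminus S'$ (a set of size at most $\eta|S|$), so $|\partial S|\geq |\partialv S|\geq c_1(1-\eta)|S|\log(1/\alpha)/\log n-\eta|S|$. The choice of $\eta$ is tuned precisely so that $\eta|S|$ absorbs at most half of the main term, yielding $|\partial S|\geq (c_1/4)|S|\log(1/\alpha)/\log n$ and $\Phi(S)=\Omega(\log(1/\alpha)/\log n)$, which is even stronger than needed. Setting $\tau$ equal to the minimum of the three constants produced by the subcases finishes the argument.

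The main obstacle is the balancing act hidden in the choice of $\eta$: making $\eta$ small keeps $|S'|$ close to $|S|$ so that the subtracted term $\eta|S|$ is dominated by the vertex expansion of $S'$, but it also shrinks the Lemma~\ref{lem:torusdecom2} consolation bound $\eta|S|/(4\ell^2)$. Taking $\eta$ proportional to $\log(1/\alpha)/\log n$ aligns the two scales, exploiting the fact that the vertex expansion in Lemma~\ref{lem:bigexpansiontwo} is itself proportional to $\log(1/\alpha)/\log n$. A related subtlety is why Lemma~\ref{lem:bigexpansiontwo} is applied to the \emph{inner} approximation $\boxcore(S)$ rather than to the smallest box-like superset of $S$: the latter would force one to control $\sum_v d_v$ on a possibly disconnected annulus, to which Lemma~\ref{lem:avgdegtwo} does not apply.
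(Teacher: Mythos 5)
Your proposal is correct and follows essentially the same route as the paper: the same union bound over Lemmas~\ref{lem:numedges}, \ref{lem:bigexpansiontwo} and~\ref{lem:avgdegtwo}, the reduction $\Phi(S)\geq |\partial S|/\sum_{v\in S}d_v$, the isoperimetric bound for small sets, and for large sets the dichotomy of Lemma~\ref{lem:torusdecom2} combined with the vertex expansion of $\boxcore(S)$, with $\eta$ tuned proportional to $\log(1/\alpha)/\log n$. The only difference is organizational --- the paper invokes Lemma~\ref{lem:torusdecom2} twice (first with $\eta=1/2$, then, after locating most of $\partialv S'$ inside $S\setminus S'$, with $\eta=\frac{c\log(1/\alpha)}{4\log n}$), whereas you choose that $\eta$ once up front and subtract $|S\setminus S'|$ directly --- which is an equivalent and if anything slightly cleaner packaging of the same argument.
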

The proof of Lemma~\ref{lem:conreqtwo} is given in Section~\ref{sec:conreqtwo}.
\begin{corollary}\label{cor:req2}
Let $r=2$ and $n$ be a sufficiently large integer. With probability $1-O(1/n^2)$ over the choice of
the graph $G\sim\Gc_{n,r}$, the lazy random walk on $G$
 satisfies $\Tmix=O((\log n)^4)$.
\end{corollary}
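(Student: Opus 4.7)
The plan is to combine Lemma~\ref{lem:conreqtwo} with the Fountoulakis--Reed bound (Theorem~\ref{thm:FR}), following the template of Corollary~\ref{cor:rl2} but extracting a finer logarithmic saving from the $\log(1/\alpha)$ factor in the conductance estimate. First I would condition on the probability-$(1 - O(1/n^2))$ event that both Lemma~\ref{lem:numedges} and Lemma~\ref{lem:conreqtwo} hold. Under Lemma~\ref{lem:numedges}, $|E|\leq 3N$, so $\pi_{\mathrm{min}}\geq 4/(2|E|)\geq 2/(3N)$, and the sum in \eqref{eq:phij12} has length $O(\log n)$.

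The central step is to translate the level set $p/2\leq \pi(S)\leq p$ at $p=2^{-j}$ into a bound on $\alpha=|S|/N$. Since $d_v\geq 4$ for every $v\in V$ and $|E|\leq 3N$, one has $\pi(S)=\frac{1}{2|E|}\sum_{v\in S}d_v\geq \frac{2|S|}{3N}$, so a connected set $S$ with $\pi(S)\leq 2^{-j}$ satisfies $\alpha\leq (3/2)\cdot 2^{-j}$. In particular,
\[
\log(1/\alpha)\geq j\log 2-\log(3/2)\geq c_0\, j \qquad \text{for every } j\geq 1,
\]
where $c_0=\log(4/3)>0$ (a direct calculation: for $j=1$ the left side equals $c_0$, and for $j\geq 2$ one easily checks $j\log 2 - \log(3/2)\geq c_0 j$). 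Feeding this into Lemma~\ref{lem:conreqtwo} gives $\widetilde{\Phi}(2^{-j})\geq \tau c_0 j/(\log n)^2$, and hence $\big(1/\widetilde{\Phi}(2^{-j})\big)^2\leq (\log n)^4/(\tau c_0 j)^2$.

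Substituting into Theorem~\ref{thm:FR} then yields
\[
\Tmix\leq C\sum_{j=1}^{\lceil \log_2(1/\pi_{\mathrm{min}})\rceil}\frac{(\log n)^4}{(\tau c_0 j)^2}\leq \frac{C(\log n)^4}{(\tau c_0)^2}\sum_{j=1}^{\infty}\frac{1}{j^2}=O\big((\log n)^4\big),
\]
since $\sum_{j\geq 1}j^{-2}$ is a convergent constant. The point worth flagging (rather than a genuine obstacle) is that the naive uniform bound $\big(1/\widetilde{\Phi}\big)^2=O((\log n)^4)$ applied across all $O(\log n)$ scales would only deliver $O((\log n)^5)$; the $1/j^2$ saving that shaves the extra $\log n$ factor is precisely the $\log(1/\alpha)$ improvement inside $\Phi(S)$ that Lemma~\ref{lem:conreqtwo} (via Lemmas~\ref{lem:bigexpansiontwo} and~\ref{lem:avgdegtwo}) was designed to extract. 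All the substantive work has already been done in those lemmas; this corollary is essentially a bookkeeping step.
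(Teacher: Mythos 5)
Your proposal is correct and follows essentially the same route as the paper's proof: bound $\alpha\leq (3/2)\cdot 2^{-j}$ via $\pi(S)\geq 2|S|/(3N)$, convert this to $\log(1/\alpha)=\Omega(j)$, feed it into Lemma~\ref{lem:conreqtwo}, and use the convergence of $\sum_j j^{-2}$ in Theorem~\ref{thm:FR}. The only (trivial) point the paper handles explicitly that you elide is the case $\widetilde{\Phi}(2^{-j})=1$ when the level set contains no connected set, for which the bound holds vacuously for large $n$.
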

\begin{proof}
We proceed as in the proof of Corollary~\ref{cor:rl2}. Let $G=(V,E)\sim\Gc_{n,r}$. By a union bound, we have that with probability $1-O(1/n^2)$ the graph $G$ satisfies Lemmas~\ref{lem:numedges} and~\ref{lem:conreqtwo}. For all such graphs $G$, we will show that  $\Tmix=O((\log n)^4)$.

Recall from Theorem~\ref{thm:FR} that there exists an absolute constant $C>0$ such that 
\begin{equation*}\tag{\ref{eq:phij12}}
\Tmix\leq C \sum^{\left\lceil \log_2(\pi_{\mathrm{min}}^{-1})\right\rceil}_{j=1}\bigg(\frac{1}{\widetilde{\Phi}(2^{-j})}\bigg)^2.
\end{equation*}
where $\widetilde{\Phi}(p):=\min\{\Phi(S) \mid S\mbox{ is connected},\, p/2\leq \pi(S)\leq p\}$, and $\widetilde{\Phi}(p)=1$ if the minimization is over an empty set.  By Lemma~\ref{lem:numedges}, we have that $|E|\leq 3N$ and hence $\pi_{\mathrm{min}}\geq 4/(2|E|)\geq 2/(3N)$, so that $\log_2(1/\pi_{\mathrm{min}})=O(\log n)$.

We will show that
\begin{equation}\label{eq:4rv4v3v3b45y}
\mbox{for all $j=1,\hdots,\left\lceil \log_2(1/\pi_{\mathrm{min}})\right\rceil$, it holds that } \widetilde{\Phi}(2^{-j})\geq (\tau/10)\frac{j}{(\log n)^2},
\end{equation}
where $\tau$ is the constant in Lemma~\ref{lem:conreqtwo}. Combining \eqref{eq:4rv4v3v3b45y} with \eqref{eq:phij12} and the fact that  $\sum_{j\geq 1}\frac{1}{j^2}=O(1)$ yields that $\Tmix=O((\log n)^4)$.

To prove \eqref{eq:4rv4v3v3b45y}, consider an arbitrary index $j$ with $1\leq j\leq \left\lceil \log_2(1/\pi_{\mathrm{min}})\right\rceil$. If $\widetilde{\Phi}(2^{-j})=1$, the inequality in \eqref{eq:4rv4v3v3b45y} is trivially true (for all sufficiently large integers $n$), so we may assume that $\widetilde{\Phi}(2^{-j})<1$. In particular, there exists a connected set $S$ such that $2^{-j-1}\leq \pi(S)\leq 2^{-j}$. Consider an arbitrary such connected set $S$ and let $\alpha=|S|/N$. Since $\pi(S)\leq 2^{-j}$, we have that 
\[\mbox{$\sum_{v\in S}d_v\leq 2^{-j+1}|E|$ and therefore $|S|\leq 3\cdot 2^{-j-1}N$ (using that $d_v\geq 4$ and $|E|\leq 3N$).}\]
It follows that  $\alpha\leq 3\cdot 2^{-j-1}$ and hence $\log(1/\alpha)\geq (j+1)\log 2-\log 3\geq j/10$. By Lemma~\ref{lem:conreqtwo}, we therefore have that $\Phi(S)\geq(\tau/10)\frac{j}{(\log n)^2}$. Since $S$ was an arbitrary connected set with $2^{-j-1}\leq \pi(S)\leq 2^{-j}$, we obtain that $\widetilde{\Phi}(2^{-j})\geq (\tau/10)\frac{j}{(\log n)^2}$, as needed.

This concludes the proof of \eqref{eq:4rv4v3v3b45y} and therefore the proof of Corollary~\ref{cor:req2}.
\end{proof}

\subsection{Lower bounding the expansion -- Proof of Lemma~\ref{lem:arbexpreqtwo}}\label{sec:arbexpreqtwo}
In this section, we prove Lemma~\ref{lem:arbexpreqtwo}, which we restate here for convenience.
\begin{lemarbexpreqtwo}
\statelemarbexpreqtwo
\end{lemarbexpreqtwo}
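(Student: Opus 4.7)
The plan is to lower bound $|\partialv S|$ by the number of vertices in $V\setminus S$ that admit at least one long-range edge into $S$, and to exploit that these indicators are mutually independent. For each $w\in V\setminus S$, let $X_w\in\{0,1\}$ be the indicator of the event ``$w$ has at least one long-range edge to $S$''. Since $X_w$ depends only on the long-range edges $\{w,u\}$ with $u\in S$ (and $u,w$ not torus-adjacent), and these edge sets are disjoint for distinct $w\in V\setminus S$, the family $\{X_w\}_{w\in V\setminus S}$ is jointly independent Bernoulli. Because $|\partialv S|\geq \sum_{w\in V\setminus S}X_w$, it suffices to apply the Chernoff bound from Lemma~\ref{lem:chernoff} to this sum.

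The heart of the argument is a lower bound $\mu:=\sum_w \E[X_w]\geq c_0 |S|\log(1/\alpha)/\log n$ for an absolute constant $c_0>0$. Let $B_w$ denote the expected number of long-range edges from $w$ into $S$. Then $B_w<1$ (since $\sum_{u\neq w}\dist{u}{w}^{-2}/Z=1$) and $\E[X_w]\geq 1-\e^{-B_w}\geq B_w/(1+B_w)\geq B_w/2$, so $\mu\geq \tfrac12\sum_w B_w$. Writing
\[
\sum_w B_w \geq \frac{1}{Z}\Big(|S|\cdot Z - \sum_{u\in S}\sum_{w\in S,w\neq u}\dist{u}{w}^{-2}\Big) - O(|S|/Z),
\]
where the $O(|S|/Z)$ absorbs the torus-neighbour corrections, it remains to upper bound the inner double sum. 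Here I would use the standard ball-packing calculation: for any $u\in T$ with $|T|\leq N/2$, the sum $\sum_{w\in T, w\neq u}\dist{u}{w}^{-2}$ is maximised when $T\setminus\{u\}$ consists of the $|T|-1$ torus-vertices closest to $u$; this is a ball of radius $\ell_0\approx\sqrt{|T|/2}$, and a direct computation gives $\sum_{w\in T,w\neq u}\dist{u}{w}^{-2}\leq 4H_{\ell_0}\leq 2\log|T|+O(1)$. Combining with the estimates $Z=4\log n+O(1)$ from Lemma~\ref{lem:factor} and $\log|S|=\log\alpha+2\log n+O(1)$ yields, for $|S|\leq N/2$,
\[
\sum_{u\in S,\,w\in V\setminus S}\dist{u}{w}^{-2}\geq |S|\big(Z-2\log|S|-O(1)\big)\geq 2|S|\log(1/\alpha)+\Omega(|S|),
\]
and hence $\mu\geq c_0 |S|\log(1/\alpha)/\log n$.

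For the complementary case $\alpha>1/2$ (where the ball bound for $S$ itself breaks because $|S|>N/2$), I would run the same argument with $S$ replaced by $V\setminus S$: by symmetry of $\dist{\cdot}{\cdot}$ the sum $\sum_{u\in S,w\in V\setminus S}\dist{u}{w}^{-2}$ is invariant, and we get $\sum\geq 2|V\setminus S|\log(1/(1-\alpha))+\Omega(|V\setminus S|)$. Since the function $x\mapsto x\log(1/x)$ on $[0,1]$ satisfies $x\log(1/x)\leq (1-x)\log(1/(1-x))$ whenever $x\geq 1/2$, this still yields $\mu\geq c_0|S|\log(1/\alpha)/\log n$. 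Finally, by independence and Lemma~\ref{lem:chernoff} with $t=\mu/2$,
\[
\mbox{$\Pr_{\Gc_{n,r}}$}\Big(\mbox{$\sum_w X_w\leq \mu/2$}\Big)\leq \exp(-3\mu/20)\leq \exp(-c'|S|\log(1/\alpha)/\log n),
\]
so taking $c=\min(c_0/4,c')$ gives the stated conclusion. The main technical subtlety is ensuring that the additive $O(|S|)$ slack in the ball bound does not swamp the leading $2|S|\log(1/\alpha)$ when $\alpha$ is close to $1$; this is precisely what the dual computation through $V\setminus S$ resolves, taking over cleanly at $\alpha=1/2$.
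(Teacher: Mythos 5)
Your overall architecture is the same as the paper's: the indicators $X_w$ for $w\in V\setminus S$, their joint independence, a Chernoff bound, the reduction $\E[X_w]\geq \frac{1}{2}B_w$ (the paper uses $1-\e^{-x}\geq x/6$ for $x\in[0,5]$), and a ball-packing bound on $\sum_{u\in S}\sum_{w\in S, w\neq u}\dist{u}{w}^{-2}$ for small $S$. The divergence, and the gap, is in how you handle $\alpha$ bounded away from $0$.

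The step that fails is the inequality chain ending in ``$\geq 2|S|\log(1/\alpha)+\Omega(|S|)$''. What the ball-packing computation actually yields is $|S|\big(Z-2\log|S|-O(1)\big)=|S|\big(2\log(1/\alpha)-O(1)\big)$, with the $O(1)$ of \emph{unfavourable} sign: it collects the additive constants in $Z=4\log n+O(1)$, in the ball sum $2\log|S|+O(1)$, and (after multiplying back by $Z$) the torus-adjacency correction $O(|S|/Z)$. For $\alpha$ bounded away from $0$ and $1$ the leading term $2\log(1/\alpha)$ is itself $O(1)$, so whether the bound is even positive --- let alone $\Omega(|S|\log(1/\alpha))$ --- depends on whether that accumulated constant is smaller than $2\log 2$, a delicate numerical fact you have not verified and that $O(1)$ bookkeeping cannot deliver. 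Your primal/dual swap at $\alpha=1/2$ does not help here: at $\alpha=1/2$ both sides of the duality suffer from exactly the same constant deficit. The paper avoids this entirely by using a different, robust argument for all $\alpha\geq 10^{-4}$: the trivial bounds $\dist{u}{w}\leq 2n$ and $|V\setminus S|\geq(1-\alpha)N$ give $m_S\geq (1-\alpha)|S|/(4C\log n)$ directly, and then $1-\alpha\geq\frac{1}{10}\log(1/\alpha)$ on $[10^{-4},1]$ converts this to the desired form (this also covers $\alpha$ near $1$, making your dual computation unnecessary). The ball-packing argument is reserved for $\alpha<10^{-4}$, where $\log(1/\alpha)\geq 4\log 10$ comfortably dominates every additive constant. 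To repair your proof, either carry out the constant computation honestly on the full range $\alpha\leq 1/2$ (and verify it closes), or insert the paper's trivial-distance-bound case for moderate $\alpha$.
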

\begin{proof}[Proof]
Consider an arbitrary set $S\subseteq V$ with $|S|=\alpha N$. 

For $w\in V$, let $X_w$ be the indicator r.v. that there is an edge from $w$ to the set $S$. Note that 
\begin{equation}\label{lem:partXXtwo}
|\partialv S|\geq X_S, \mbox{ where } X_S:=\sum_{w\in V\backslash S}X_w.
\end{equation}
For $u,w\in V$, let $Y_{u,w}$ be the indicator r.v. that there is an edge from $u$ to $w$ in $G$.  Note that the random variables $\{X_w\}_{w\in V\backslash S}$ are independent since each $X_w$ is determined by the random variables $\{Y_{u,w}\}_{u\in S}$, where $Y_{u,w}$ is the indicator r.v. that there is an edge between vertices  $u$ and $w$ in $G$. Thus, for $\mu_S:=\E_{\Gc_{n,r}}[X_S]$, we obtain by Lemma~\ref{lem:chernoff} that
\begin{equation}\label{eq:4v444c2}
\mbox{$\Pr_{\Gc_{n,r}}$}\Big(|\partialv S|\leq \tfrac{1}{2}\mu_S\Big)\leq \exp\big(-\tfrac{1}{10}\mu_S\big).
\end{equation}
Our goal will be therefore to give a lower bound on $\mu_S$.

We start by showing the bound
\begin{equation}\label{eq:f4323f}
\mu_S\geq \frac{1}{6}m_S, \mbox{ where } m_S:=\frac{1}{Z}\sum_{u\in S}\sum_{w\in V\backslash S}\dist{u}{w}^{-2},
\end{equation}
where $Z$ is the normalising factor given in \eqref{eq:defZ} and $\dist{u}{w}$ denotes the distance between the vertices $u,w$ in the torus. To see \eqref{eq:f4323f}, note first that for all $w\in V\backslash S$, we have that 
\[\mbox{$\sum_{u\in S} \E_{\Gc_{n,r}}[Y_{u,w}]\leq \sum_{u\in V} \E_{\Gc_{n,r}}[Y_{u,w}]=5$},\]  
since $w$ has four neighbours from the torus and one long-range neighbour in expectation. Using that $1-x\leq \e^{-x}\leq 1-x/6$ for $x\in [0,5]$, we therefore obtain that
\begin{align*}
\E_{\Gc_{n,r}}[X_w]&=1-\prod_{u\in S}\mbox{$\Pr_{\Gc_{n,r}}$}(Y_{u,w}=0)=1-\prod_{u\in S}\Big(1-\E_{\Gc_{n,r}}[Y_{u,w}]\Big)\\
&\geq 1-\exp\bigg(-\sum_{u\in S} \E_{\Gc_{n,r}}[Y_{u,w}]\bigg)\geq \frac{1}{6}\sum_{u\in S} \E_{\Gc_{n,r}}[Y_{u,w}].
\end{align*}
For all sufficiently large $n$, we claim that $\E_{\Gc_{n,r}}[Y_{u,w}]\geq \dist{u}{w}^{-r}/Z$: for non-adjacent vertices $u,w$ in the torus, the inequality holds at equality by the definition of the model $\Gc_{n,r}$; for vertices $u,w$ which are adjacent  in the torus, we have $\E_{\Gc_{n,r}}[Y_{u,w}]=1\geq 1/Z$, where the last inequality holds for all sufficiently large $n$  by Lemma~\ref{lem:factor}. Combining this with the equality $\mu_S=\sum_{w\in V\backslash S} \E_{\Gc_{n,r}}[X_w]$, we obtain \eqref{eq:f4323f}.

Recall by Lemma~\ref{lem:factor} that, for $r=2$, the normalising factor $Z$ is bounded by $Z\leq C\log n$ for some absolute constant $C>0$. We will show that for all sufficiently large $n$ it holds that 
\begin{equation}\label{eq:everf2ef334}
m_S\geq \frac{\log(1/\alpha)}{\log n}|S|/(50 C) \mbox{ for $\alpha\in[10^{-4},1]$}, \quad  m_S\geq \frac{\log(1/\alpha)}{\log n}|S|/(2 C)\mbox{ for $\alpha\in(0,10^{-4})$}.
\end{equation}
In light of \eqref{eq:4v444c2}, \eqref{eq:f4323f} and \eqref{eq:everf2ef334}, we therefore obtain the statement of the lemma (with $c=1/(3000C)$).  We thus focus on proving \eqref{eq:everf2ef334}. In the following argument, we use in our calculations that the value of $r$ is equal to 2.

For $\alpha\in[10^{-4},1]$, we use the trivial bounds $\dist{u}{w}\leq 2n$ and $N\geq n^2$ and the fact that $|V\backslash S|= (1-\alpha)N$ to obtain that 
\begin{align*}
m_S&=\frac{1}{Z}\sum_{u\in S}\sum_{w\in V\backslash S}\dist{u}{w}^{-2}\geq \sum_{u\in S}\frac{(1-\alpha) N}{Z}(2n)^{-2}\\
&\geq \frac{1-\alpha}{\log n}|S|/(4C)\geq \frac{\log(1/\alpha)}{\log n}|S|/(50 C),
\end{align*}
where in the last inequality we used that $1-\alpha\geq \frac{1}{10}\log(1/\alpha)$ for all $\alpha\in[10^{-4},1]$. This proves \eqref{eq:everf2ef334} in the case $\alpha\in[10^{-4},1]$.

For $\alpha\in(0,10^{-4})$, observe that for every $u\in S$ it holds that 
$Z\leq \sum_{w\in V; w\neq u}\dist{u}{w}^{-2}$, so 
\begin{equation}\label{eq:mupS3r3}
m_S=\frac{1}{Z}\sum_{u\in S}\sum_{w\in V\backslash S}\dist{u}{w}^{-r}\geq |S|-\frac{1}{Z}\sum_{u\in S}\sum_{w\in S; w\neq u}\dist{u}{w}^{-2}.
\end{equation}
Fix an arbitrary vertex $u\in S$, and let
\begin{equation*}
Q_u:=\sum_{w\in S; w\neq u}\dist{u}{w}^{-2}.
\end{equation*}
To upper bound the quantity $Q_u$, note that we can only increase its value by moving all vertices in $S\backslash\{u\}$ as close as possible to $u$. In particular, let $\ell_0:=\left\lceil 3\sqrt{\alpha} n\right\rceil< n$ and consider the set of vertices 
\[U=\{w\in V\mid  \dist{u}{w}\leq \ell_0\}.\]
Note that for $\ell\leq \ell_0\leq  n$, the  number of vertices at distance $\ell$ from $v$ is $4\ell$, and hence
\[|U|=1+4\sum^{\ell_0}_{i=1}\ell=1+2\ell_0(\ell_0+1)\geq  1+18\alpha n^2\geq 1+\alpha N.\] 
It follows that $|U\backslash \{u\}|\geq |S|$ and therefore
\begin{equation}\label{eq:QvUwe}
Q_u\leq \sum_{w\in U; w\neq u}\dist{u}{w}^{-2}\leq Z+4-\sum_{w\notin U}\dist{u}{w}^{-2}.
\end{equation}
Thus, it remains to lower bound $\sum_{w\notin U}\dist{u}{w}^{-2}$. For all distances $\ell$ satisfying $\ell_0<\ell\leq  n$, there are $4\ell$ vertices at distance $\ell$ from $v$. It follows that (cf. \eqref{eq:Hnboundrew})
\begin{equation}\label{eq:wUvwtau}
\sum_{w\notin U}\dist{u}{w}^{-2}\geq 4\sum^{n}_{\ell =\ell_0+1}\frac{1}{\ell}\geq 4\int^{n}_{\ell_0+1} \frac{1}{x} dx\geq 4\log (n/\ell_0)\geq \frac{1}{2}\log(1/\alpha)+4,
\end{equation}
where the last  inequality follows from the assumption $\alpha\in(0,10^{-4})$ and the bound $\ell_0\leq 6\sqrt{\alpha} n$ (note that $\alpha N=|S|\geq 1$ and hence $3\sqrt{\alpha} n\geq 1$). Combining \eqref{eq:QvUwe} and \eqref{eq:wUvwtau}, and using that $Z\leq C\log n$, we thus obtain that for all $u\in S$, it holds that
\[Q_u\leq Z -\frac{Z}{2C}\frac{\log(1/\alpha)}{\log n}\]
Plugging this into \eqref{eq:mupS3r3} yields that $m_S\geq \frac{\log(1/\alpha)}{\log n}|S|/(2 C)$, thus completing the proof of \eqref{eq:everf2ef334} in the case $\alpha\in(0,10^{-4})$ as well. This concludes the proof of Lemma~\ref{lem:arbexpreqtwo}.
\end{proof}

\subsection{The vertex expansion of box-like sets -- Proof of Lemma~\ref{lem:bigexpansiontwo}}\label{sec:bigexpansiontwo}
In this section, we prove Lemma~\ref{lem:bigexpansiontwo}, which we restate here for convenience.
\begin{lembigexpansiontwo}
\statelembigexpansiontwo
\end{lembigexpansiontwo}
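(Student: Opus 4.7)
The plan is to combine the per-set concentration bound from Lemma~\ref{lem:arbexpreqtwo} with a union bound over all box-like sets, calibrating the box side length $\ell = \lceil \ell_0 \sqrt{\log n}\rceil$ so that the factor $\ell^2/\log n \geq \ell_0^2$ appearing in the failure-probability exponent outweighs the combinatorial entropy from enumerating box-like sets. This mirrors the strategy of Lemma~\ref{lem:bigexpbigexp} for $r<2$, but here the statement concerns all (not necessarily connected) box-like sets, so we cannot use Lemma~\ref{lem:numcon} to reduce the counting term; instead we let $\ell$ grow with $n$, trading a factor of $\ell_0^2$ in the exponent against the $\binom{Q}{q}$ counting factor.

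Fix an arbitrary $\ell$-partition $\mathcal{U}$ of the torus and set $Q := |\mathcal{U}| \leq N/\ell^2$. A box-like set $S$ consisting of $q$ boxes of $\mathcal{U}$ satisfies $q\ell^2 \leq |S| \leq 4q\ell^2$, and the restriction $\alpha = |S|/N \leq 99/100$ forces $q \leq 99Q/100$. Lemma~\ref{lem:arbexpreqtwo} together with $|S| \geq q\ell^2 \geq q\ell_0^2\log n$ gives
\[\mbox{$\Pr_{\Gc_{n,r}}$}\Big(|\partialv S| \leq c|S|\tfrac{\log(1/\alpha)}{\log n}\Big) \leq \exp\big(-c\ell_0^2\, q\log(1/\alpha)\big).\]
There are at most $\binom{Q}{q} \leq (eQ/q)^q$ box-like sets with exactly $q$ boxes, and $\log(1/\alpha) \geq \log(N/(4q\ell^2)) \geq \log(Q/q) - 2$. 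Hence the total failure probability contributed by box-like sets with $q$ boxes is at most
\[\exp\Big(q\log(eQ/q) - c\ell_0^2\, q\big[\log(Q/q) - 2\big]\Big) \;=\; \exp\Big(-q\big[(c\ell_0^2 - 1)\log(Q/q) - (1 + 2c\ell_0^2)\big]\Big).\]

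The main obstacle is to show that, by choosing $\ell_0$ large enough (depending only on $c$), this bound sums to $O(1/n^2)$ uniformly over all $q \in [1, 99Q/100]$. The cleanest way is to split into three ranges. For $q \leq \sqrt{Q}$, we have $\log(Q/q) \geq (\log Q)/2 = \Omega(\log n)$, so a sufficiently large $\ell_0$ makes each term at most $n^{-10}$. For $\sqrt{Q} \leq q \leq Q/2$, $\log(Q/q) \geq \log 2$ is a positive constant and the bound becomes $\exp(-\Omega(q)) \leq \exp(-\Omega(\sqrt{Q})) \leq \exp(-\Omega(n/\sqrt{\log n}))$. For $Q/2 \leq q \leq 99Q/100$, one has $\log(eQ/q) \leq \log(2e)$ and $\log(Q/q) \geq \log(100/99)$, and the bound is $\exp(-\Omega(q)) \leq \exp(-\Omega(Q))$ provided $c\ell_0^2 \log(100/99)$ dominates $\log(2e)$ (which holds for large enough $\ell_0$). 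Multiplying each per-$q$ bound by the at most $Q \leq N$ values of $q$ and taking $\ell_0$ sufficiently large to absorb the resulting polynomial factor yields the required $O(1/n^2)$ total failure probability.
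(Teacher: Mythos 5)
Your overall strategy is exactly the paper's: apply Lemma~\ref{lem:arbexpreqtwo} to each box-like set, use $|S|\geq q\ell^2\geq \ell_0^2 q\log n$ to make the failure exponent scale like $c\ell_0^2 q\log(1/\alpha)$, and beat the $\binom{Q}{q}$ count by taking $\ell_0$ large. However, the middle range of your case analysis fails as written. Your master bound is
\[\exp\Big(-q\big[(c\ell_0^2-1)\log(Q/q)-(1+2c\ell_0^2)\big]\Big),\]
and the coefficient of $c\ell_0^2$ in the bracket is $\log(Q/q)-2$. Whenever $\log(Q/q)<2$, i.e.\ $q>Q/\e^2$, this coefficient is negative, so increasing $\ell_0$ makes the bound \emph{worse}, not better; for instance at $q=Q/2$ the bracket equals $(c\ell_0^2-1)\log 2-(1+2c\ell_0^2)=c\ell_0^2(\log 2-2)-\log 2 -1<0$ for every $\ell_0$, and the claimed $\exp(-\Omega(q))$ does not follow. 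The root cause is that the additive slack in $\log(1/\alpha)\geq\log(Q/q)-2$ is only harmless when $\log(Q/q)$ comfortably exceeds $2$. A secondary slip: $\alpha\leq 99/100$ does \emph{not} force $q\leq 99Q/100$, since $\alpha\geq q/(4Q)$ only yields the vacuous $q\leq 3.96\,Q$; your union bound must run over all $q\leq Q$.

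Both problems are repaired by splitting on $\alpha$ rather than on $q$, which is what the paper does. For $\alpha\leq 1/100$ one has $q/Q\leq 4\alpha\leq 1/25$, hence $\log(Q/q)\geq\log 25>2$, the bracket above is at least $c\ell_0^2(\log 25-2)-\log(Q/q)-1\geq \tfrac{1}{2}\log(Q/q)$ for $\ell_0$ large, and summing $\exp(-\Omega(q\log(Q/q)))$ over $q$ (using concavity of $x\log(Q/x)$, as in the paper) gives $O(1/n^2)$. For $\alpha\in[1/100,99/100]$ one instead uses the hypothesis directly, $\log(1/\alpha)\geq\log(100/99)$, together with $q\geq |S|/(4\ell^2)\geq N/(400\ell^2)\geq Q/400$; then the per-set failure probability $\exp(-c\ell_0^2 q\log(100/99))=\exp(-\Omega(\ell_0^2 Q))$ beats the crude count $2^Q$ of \emph{all} box-like sets once $\ell_0$ is a large enough constant. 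Your third regime is implicitly already doing this (your invocation of ``$\log(Q/q)\geq\log(100/99)$'' should really be ``$\log(1/\alpha)\geq\log(100/99)$ from the hypothesis''), so the repair is local; but as stated the argument has a hole for $q$ between roughly $Q/\e^2$ and $Q/2$.
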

\begin{proof}
Let $G=(V,E)\sim \Gc_{n,r}$. By Lemma~\ref{lem:arbexpreqtwo}, there exists a constant $c'>0$ such that all sets $S\subseteq V$ with $|S|=\alpha N$ and $\alpha\in(0,1]$ satisfy
\begin{equation}\label{eq:bigexpansion2}
\mbox{$\Pr_{\Gc_{n,r}}$}\bigg(|\partialv S|\leq c' |S|\frac{\log (1/\alpha)}{\log n}\bigg)\leq \exp\Big(-c' |S|\frac{\log (1/\alpha)}{\log n}\Big).
\end{equation}
Decreasing the value of $c'$ does not affect the validity of \eqref{eq:bigexpansion2}, so we may assume w.l.o.g. that $c'$ is a constant in the interval $(0,1)$.

Let $\ell_0=100/c'$ and $\ell=\left\lceil \ell_0(\log n)^{1/2}\right\rceil$. Consider an arbitrary $\ell$-partition $\mathcal{U}=\{U_1,\hdots, U_Q\}$ of the torus. Note that  $N/(4\ell^2)\leq Q\leq N/\ell^2$ (since every box  $U\in\mathcal{U}$ contains at least $\ell^2$ and at most $4\ell^2$ vertices). For a box-like set $S$, denote by $q_S$ the number of boxes $U\in\mathcal{U}$ such that $S\cap U\neq \emptyset$.

Consider an arbitrary box-like set $S$ such $|S|\leq 99N/100$. Let $\alpha:=|S|/ N$. Since $N\geq Q\ell^2\geq (10^4/c')Q\log n$ and $\log(100/99)\geq 1/100$, \eqref{eq:bigexpansion2} gives for $\alpha\in[1/100,99/100]$ that
\begin{equation*}
\mbox{$\Pr_{\Gc_{n,r}}$}\bigg(|\partialv S|\leq  \frac{c'|S|}{100\log n}\bigg)\leq \exp\Big(-\frac{c'\alpha N}{100\log n}\Big)\leq \exp\big(-100\alpha Q\big)\leq \exp(-Q).
\end{equation*}
Since the total number of box-like sets is at most $2^Q$, by a union bound we obtain that with probability $1-\e^{-\Omega(n)}$, every box-like set with $|S|=\alpha N$ and $\alpha\in [1/100,99/100]$ satisfies 
\begin{equation}\label{eq:bigexpansionboz2}
|\partialv S|\geq  \frac{c'|S|}{100\log n}\geq \frac{c'|S|}{1000} \frac{\log(1/\alpha)}{\log n},
\end{equation}
where in the last inequality we used that for $\alpha\in[1/100,99/100]$ it holds that $10\geq \log(1/\alpha)$.

Consider now an arbitrary box-like set $S$ such that $|S|\leq N/100$ and let $\alpha:=|S|/N$, so that $\alpha\in (0,1/100]$. Since $S$ is box-like, for every box $U$ such that $|S\cap U|\neq 0$,  we have that $|S\cap U|=|U|\geq \ell^2$ and $|S\cap U|=|U|\leq 4\ell ^2$. Thus, with $q:=q_S$, we obtain that
\begin{equation}\label{eq:4v445tb5f3}
4(q/Q)\geq \alpha\geq (q/Q)/4, \mbox{ which also yields that } q/Q\leq 1/10.
\end{equation}
Using once again that $N\geq Q\ell^2\geq (10^4/c')Q\log n$, \eqref{eq:bigexpansion2} gives  that
\begin{equation}\label{eq:bigexpansionboz3}
\mbox{$\Pr_{\Gc_{n,r}}$}\bigg(|\partialv S|\leq c' |S|\frac{\log (1/\alpha)}{\log n}\bigg)\leq \exp\Big(-10^4\alpha  Q\log (1/\alpha)\Big)\leq \exp\Big(-10q\log (Q/q)\Big).
\end{equation}

For $1\leq q\leq Q/10$, let $\mathcal{E}_q$ be the event that there exists a box-like set $S$ with $q_S=q$ such that $|\partialv S|\leq c' |S|\frac{\log (1/\alpha)}{\log n}$, where $\alpha=|S|/N$. The number $W_q$ of box-like sets $S$ with $q_S=q$ is bounded by
\begin{equation}\label{eq:wqbound2}
W_q=\binom{Q}{q}\leq \Big(\frac{\e Q}{q}\Big)^{q} \leq \exp\Big(2 q \log(Q/q)\Big).
\end{equation}
Combining \eqref{eq:bigexpansionboz3} and \eqref{eq:wqbound2}, we obtain that 
\begin{equation}\label{eq:5b56hb6b4}
\mbox{$\Pr_{\Gc_{n,r}}$}(\mathcal{E}_q)\leq \e^{-8q \log (Q/q)}\leq \e^{-8\log Q},
\end{equation}
where in the last inequality we used that the function $x\log(Q/x)$ is concave for $1\leq x\leq Q$ and therefore $\min_{q\in [1,Q/10]}q \log (Q/q)=\min\big\{\log Q,(Q\log 10)/10\big\}=\log Q$.

Let $\mathcal{E}$ be the event that there exists a box-like set with $|S|=\alpha N$ and $\alpha\in (0,1/100]$ such that $|\partialv S|\geq  c'|S|\frac{\log(1/\alpha)}{\log n}$. Then, by \eqref{eq:4v445tb5f3}, \eqref{eq:5b56hb6b4} and a union bound, we have that 
\begin{equation}\label{eq:bigexpansionboz5}
\mbox{$\Pr_{\Gc_{n,r}}$}(\mathcal{E})\leq \mbox{$\Pr_{\Gc_{n,r}}$}\bigg(\bigcup_{q\in[1,Q/10]} \mathcal{E}_q\bigg)\leq Q\e^{-8 \log Q}\leq 1/n^2.
\end{equation}

A union bound over the events in \eqref{eq:bigexpansionboz2} and \eqref{eq:bigexpansionboz5} yields the statement of the lemma with the constant $c=c'/1000>0$.
\end{proof}

\subsection{Upper bounding the average degree -- Proof of Lemma~\ref{lem:avgdegtwo}}\label{sec:avgdegtwo}
In this section, we prove Lemma~\ref{lem:avgdegtwo}. Our proof follows closely the proof of \cite[Lemma 8]{AL} and we give it for the sake of self-containedness. 

The proof uses Harris' inequality for product measures, which is a special case of the FKG inequality. Let $E=\{e_1,\hdots,e_n\}$ be  a finite set and consider the probability space induced by choosing a random subset of $E$ by including each element $e_i$ with probability $p_i$ ($0\leq p_i\leq 1$). An event $\mathcal{F}$ is called increasing if, for all $A,B\subseteq E$ with $A\subseteq B$,  $A\in \mathcal{F}$ implies that $B\in \mathcal{F}$. Similarly, 
an event $\mathcal{F}$ is called decreasing if, for all $A,B\subseteq E$ with $A\subseteq B$,  $B\in \mathcal{F}$  implies that $A\in \mathcal{F}$.
\begin{lemma}[{\cite{Harris,FKG}}]\label{lem:harris}
If $\mathcal{F}$ is an increasing event and $\mathcal{E}$ is a decreasing event, then $\Pr(\mathcal{F}\mid \mathcal{E})\leq \Pr(\mathcal{F})$.
\end{lemma}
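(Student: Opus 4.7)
The plan is to reduce the conditional inequality to the equivalent covariance form: for all functions $f,g:\{0,1\}^n\to\R$ that are monotone increasing in each coordinate under the product Bernoulli measure with parameters $p_1,\dots,p_n$, one has $\E[fg]\geq \E[f]\,\E[g]$.  Applied with $f=\mathbf{1}_{\mathcal{F}}$ (coordinate-wise increasing, since $\mathcal{F}$ is an increasing event) and $g=-\mathbf{1}_{\mathcal{E}}$ (coordinate-wise increasing, since $\mathcal{E}$ is decreasing), this yields $\Pr(\mathcal{F}\cap\mathcal{E})\leq \Pr(\mathcal{F})\,\Pr(\mathcal{E})$.  Dividing by $\Pr(\mathcal{E})$ (the case $\Pr(\mathcal{E})=0$ being vacuous) then gives $\Pr(\mathcal{F}\mid\mathcal{E})\leq \Pr(\mathcal{F})$.

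I would prove the covariance statement by induction on $n$.  The base case $n=1$ is the classical two-point identity: writing $X,X'$ for two independent copies of the Bernoulli variable,
\[\mathrm{Cov}(f,g)=\tfrac{1}{2}\,\E\!\left[(f(X)-f(X'))\,(g(X)-g(X'))\right].\]
For $n=1$ the monotonicity of $f$ and $g$ in the single coordinate forces $f(X)-f(X')$ and $g(X)-g(X')$ to have the same sign, so the product is pointwise nonnegative and the covariance is nonnegative.

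For the inductive step, condition on the first $n-1$ coordinates.  Define
\[F(x_1,\dots,x_{n-1}):=p_n\, f(x_1,\dots,x_{n-1},1)+(1-p_n)\, f(x_1,\dots,x_{n-1},0),\]
and likewise $G$; by independence of the coordinates these are exactly the conditional expectations of $f$ and $g$ given the first $n-1$ coordinates.  A direct check shows that $F$ and $G$ are monotone increasing in each of the remaining $n-1$ coordinates.  Applying the $n=1$ base case pointwise (with $(x_1,\dots,x_{n-1})$ frozen and only $x_n$ random) gives $\E[fg\mid X_1,\dots,X_{n-1}]\geq F(X_1,\dots,X_{n-1})\,G(X_1,\dots,X_{n-1})$.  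Taking expectations and then invoking the inductive hypothesis on the pair $(F,G)$ over the first $n-1$ independent coordinates yields $\E[fg]\geq \E[FG]\geq \E[F]\,\E[G]=\E[f]\,\E[g]$, closing the induction.

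The only substantive point to verify is that partial averaging over the last coordinate preserves coordinate-monotonicity, which is precisely where the independence assumption enters: averaging over $X_n$ amounts to a single convex combination with weights $(p_n,1-p_n)$ that is identical for every value of the frozen variables, and convex combinations preserve coordinate-wise monotonicity.  No measure-theoretic subtlety arises because the sample space is finite, so the induction is really the only idea needed; the translation between the covariance inequality and the stated conditional form is routine.
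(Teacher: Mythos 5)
Your proof is correct. Note, however, that the paper does not prove this lemma at all: it is quoted as a known result with citations to Harris and to Fortuin--Kasteleyn--Ginibre, so there is no internal argument to compare against. What you have written is the standard self-contained proof of Harris' inequality for product measures: reduce the conditional statement to the covariance inequality $\E[fg]\geq\E[f]\E[g]$ for coordinate-wise increasing functions (taking $f=\mathbf{1}(\mathcal{F})$ and $g=-\mathbf{1}(\mathcal{E})$, which correctly converts the decreasing event into an increasing function), prove the one-coordinate case via the symmetrised covariance identity $\mathrm{Cov}(f,g)=\tfrac12\E[(f(X)-f(X'))(g(X)-g(X'))]$, and induct by averaging out the last coordinate, checking that the partial averages $F,G$ remain coordinate-wise increasing because the weights $(p_n,1-p_n)$ do not depend on the frozen coordinates. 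All steps, including the chain $\E[fg]\geq\E[FG]\geq\E[F]\E[G]=\E[f]\E[g]$ and the vacuous treatment of $\Pr(\mathcal{E})=0$, are sound, and finiteness of the sample space disposes of any integrability issues. So your proposal supplies a valid proof of a statement the paper treats as a black box.
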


We use Lemma~\ref{lem:harris} to show the following for the small-world graph  $G=(V,E)\sim \Gc_{n,r}$.
\begin{lemma}\label{lem:c1c2fwq}
Let $r=2$ and $n$ be an integer. Then, for all sets $S\subseteq V$, it holds that 
\[\mbox{$\Pr_{\Gc_{n,r}}$}\Big(\mbox{$\sum_{v\in S}$}\,d_v\geq 150\max\{|S|,\log n\}\,\Big|\, \mbox{$S$ is connected}\Big)\leq \exp\big(-30 \max\{|S|,\log n\}\big).\]
\end{lemma}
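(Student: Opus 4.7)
\medskip
\noindent\textbf{Proof plan.} The plan is to follow the strategy of \cite[Lemma 8]{AL}. The key decomposition uses that every torus vertex has torus-degree exactly $4$: for any $S\subseteq V$,
\[\sum_{v\in S} d_v \;=\; 4|S| + 2|E_S| + |E_{\partial S}|,\]
where $E_S$ (resp.\ $E_{\partial S}$) is the set of random (long-range) edges with both (resp.\ exactly one) endpoint in $S$. Crucially, the Bernoulli indicators defining $E_S$ and $E_{\partial S}$ involve disjoint pairs of vertices, and the event ``$S$ connected in $G$'' depends only on $E_S$ together with the (deterministic) torus edges inside $S$; hence $|E_{\partial S}|$ is independent both of $|E_S|$ and of the conditioning event. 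Writing $M=\max\{|S|,\log n\}$ and using $|S|\leq M$, the event $\{\sum_v d_v\geq 150M\}$ forces $2|E_S|+|E_{\partial S}|\geq 146M$, which in turn forces $|E_S|\geq 35M$ or $|E_{\partial S}|\geq 76M$ (since $2\cdot 35+76=146$). A union bound together with the above independence reduces the problem to bounding $\Pr(|E_S|\geq 35M\mid S\text{ connected})$ and $\Pr(|E_{\partial S}|\geq 76M)$ separately.

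The outside term is immediate from Lemma~\ref{lem:chernoff}: each vertex carries in expectation one random edge, so $\E[|E_{\partial S}|]\leq|S|\leq M$ and Lemma~\ref{lem:chernoff} yields $\Pr(|E_{\partial S}|\geq 76M)\leq\exp(-\Omega(M))$, comfortably below $\tfrac12\exp(-30M)$ for the implicit constants.

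The inside term is the technical heart and is where Harris' inequality (Lemma~\ref{lem:harris}) enters. The intuition is that although conditioning on the increasing event ``$S$ connected'' does inflate $|E_S|$ (by FKG, since ``$|E_S|$ is large'' is also increasing), the inflation is only by an additive $O(|S|)$ — the size of a spanning tree bridging the torus-components of $S$. I would formalise this through a stopping-time construction: reveal the Bernoullis for $E_S$ one at a time in a fixed order, and let $\tau$ be the first time the graph (torus $\cup$ revealed $1$s) connects $S$ (set $\tau=\infty$ otherwise). On $\{\tau<\infty\}=\{S\text{ connected}\}$, the Bernoullis indexed after $\tau$ are independent of the revealed history, so a Chernoff bound applied to their sum gives concentration around the unconditional mean $\leq |S|\leq M$. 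The $1$s revealed by time $\tau$ can be partitioned into ``bridging'' edges (at most $|S|-1$, each reducing the component count) and ``redundant'' edges within an already-connected component; the redundant count can be controlled via Harris applied to a suitable decreasing event. Combining these pieces with Lemma~\ref{lem:chernoff} on the unconditional $|E_S|$ (whose mean is at most $|S|\leq M$) gives $\Pr(|E_S|\geq 35M\mid S\text{ connected})\leq\exp(-\Omega(M))$.

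The hard part will be this Harris decoupling for the inside term: since both ``$|E_S|$ is large'' and ``$S$ is connected'' are increasing in the underlying Bernoullis, FKG pushes the correlation the wrong way, so one must carefully show that the conditioning inflates $|E_S|$ by at most an additive $|S|-1$. This is where the argument tracks \cite[Lemma 8]{AL} most closely, with the extra subtlety that the torus-components of $S$ are not simple paths as in the NW cycle, but arbitrary subgraphs of the 2D torus.
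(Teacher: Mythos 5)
Your overall architecture is the same as the paper's: split $\sum_{v\in S}d_v$ into $4|S|$ plus twice the internal long-range edges plus the boundary long-range edges, dispatch the boundary term by independence from the connectivity event plus Chernoff, and decouple the internal term from the conditioning via Harris' inequality with an additive $|S|-1$ inflation. The boundary half of your argument is complete and correct.

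The gap is in the mechanism you propose for the internal term. The stopping-time events $\{\tau=t\}$ are not monotone in the underlying Bernoullis: $\{\tau=t\}$ is the intersection of the decreasing event $\{\tau\ge t\}$ with the event that the $t$-th pair is present \emph{and} bridges the last remaining split of $S$, and this latter event is increasing (in fact not of product form). Harris only compares an increasing event against a \emph{decreasing} conditioning event (or vice versa), so the step ``the redundant count can be controlled via Harris applied to a suitable decreasing event'' has no candidate event to apply it to; as stated, conditioning on $\{\tau=t\}$ could in principle inflate the count of pre-$\tau$ ones, and you have not shown otherwise. The paper's proof resolves exactly this point differently: enumerate all labelled spanning trees $t_1,\dots,t_h$ of $S$ (allowing torus edges), let $\mathcal{E}_i$ be the product-form increasing event that the edge set $E_i$ of $t_i$ is present, and partition $\{S\mbox{ connected}\}$ into $\mathcal{E}_i'=\mathcal{E}_i\setminus\bigcup_{j<i}\mathcal{E}_j$. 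Using $\Pr(C\mid A\cup B)\le\max\{\Pr(C\mid A),\Pr(C\mid B)\}$ one reduces to bounding $\Pr(Y_{in}\ge 40U\mid\mathcal{E}_i')$; since $\Pr(\cdot\mid\mathcal{E}_i)$ is again a product measure and $\bigl(\bigcup_{j<i}\mathcal{E}_j\bigr)^c$ is decreasing, Harris applies \emph{inside that conditional measure} to give $\Pr(Y_{in}\ge 40U\mid\mathcal{E}_i')\le\Pr(Y_{in}\ge 40U\mid\mathcal{E}_i)$, and under $\mathcal{E}_i$ the variable $Y_{in}$ is at most $|S|-1$ plus a sum of independent Bernoullis with the original means, so Chernoff finishes. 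To repair your write-up, replace the stopping-time reveal by this tree-enumeration partition (or supply a genuinely monotone decomposition of $\{\tau=t\}$, which I do not see).
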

\begin{proof}
Let $G=(V,E)\sim \Gc_{n,r}$ and consider an arbitrary set $S\subseteq V$. 

For vertices $u,w\in V$, let $Y_{u,w}$ be the indicator r.v. that there is a long-range edge between $u,w$ in $G$. Let $\binom{S}{2}:=\big\{\{u,w\}\mid u,w\in S, u\neq w\big\}$  denote the set of all unordered pairs of vertices in $S$ and define
\[Y_{in}:=\sum_{\{u,w\}\in \binom{S}{2}} Y_{u,w}, \quad Y_{out}:=\sum_{u\in S, w\in V\backslash S}Y_{u,w}.\] 
Observe that $\sum_{v\in S}d_v\leq 4|S|+2Y_{in}+Y_{out}$, so to prove the lemma we only need to show that 
\begin{align}
\mbox{$\Pr_{\Gc_{n,r}}$}\big(Y_{out}\geq 40U\,\big|\, \mbox{$S$ is connected}\big)&\leq \exp\big(-40 U\big),\label{eq:frr3f3}\\
\mbox{$\Pr_{\Gc_{n,r}}$}\big(Y_{in}\geq 40U\,\big|\, \mbox{$S$ is connected}\big)&\leq \exp\big(-40 U\big),\label{eq:rtet}
\end{align}
where $U:=\max\{|S|,\log n\}$. Note that the event that $S$ is connected is determined by the random variables $\{Y_{u,w}\}_{\{u,w\}\in \binom{S}{2}}$ and therefore it is independent of the event $Y_{out}\geq 40U$. Since every vertex adds in expectation one long-range neighbour, we have the bound $\E_{\Gc_{n,r}}[Y_{out}]\leq |S|\leq U$ and therefore Lemma~\ref{lem:chernoff} yields \eqref{eq:frr3f3}.

The proof of \eqref{eq:rtet} requires more work. Let $\mathcal{E}$ be the event that $S$ is connected in $G$. Following \cite{AL}, we decompose $\mathcal{E}$ as follows. Let $t_1,\hdots, t_h$ be an enumeration of all labelled (spanning) trees on the vertex set $S$. For $i\in [h]$, let $E_i$ be the edge set of the tree $t_i$ and $\mathcal{E}_i$ be the event that $E_i\subseteq E$. Further, let $\mathcal{E}_i'$ be the event $\mathcal{E}_i\backslash \bigcup_{j<i} \mathcal{E}_j$ and note that  $\{\mathcal{E}_i'\}_{i\in [h]}$ is a partition of the event $\mathcal{E}$. Observe also that for disjoint events $A,B$ and an event $C$, it holds that $\Pr(C\mid A\cup B)\leq \max\{\Pr(C\mid A), \Pr(C\mid B)\}$, so we obtain that 
\begin{equation}\label{eq:v3rv3g33d3f}
\mbox{$\Pr_{\Gc_{n,r}}$}\big(Y_{in}\geq 40U\,\big|\, \mathcal{E})\leq \max_{i\in [h]} \mbox{$\Pr_{\Gc_{n,r}}$}\big(Y_{in}\geq 40U\,\big|\, \mathcal{E}_i'\big).
\end{equation}
Now, using Harris' inequality (Lemma~\ref{lem:harris}) for the distribution $\mbox{$\Pr_{\Gc_{n,r}}$}(\cdot \mid \mathcal{E}_i)$ (note that this is also a product distribution since it only conditions the edges in $E_i$ to be in $E$), we  obtain that for all $i\in [h]$, it holds that 
\begin{equation}\label{eq:vrv34gb354htb3}
\begin{aligned}
\mbox{$\Pr_{\Gc_{n,r}}$}\big(Y_{in}\geq 40U\,\big|\, \mathcal{E}_i'\big)&=\mbox{$\Pr_{\Gc_{n,r}}$}\Big(Y_{in}\geq 40U\,\Big|\, \big(\mbox{$\bigcup$}_{j<i}\, \mathcal{E}_j\big)^c \cap \mathcal{E}_i\Big)\\
&\leq \mbox{$\Pr_{\Gc_{n,r}}$}\big(Y_{in}\geq 40U\,\big|\,\mathcal{E}_i\big).
\end{aligned}
\end{equation}
Consider an arbitrary $i\in [h]$. Conditioned on $\mathcal{E}_i$, $Y_{in}$ is distributed as 
\[X:=|S|-1+\sum_{\{u,w\}\in \binom{S}{2}\backslash E_i} X_{u,w},\] 
where $\{X_{u,w}\}_{\{u,w\}\in \binom{S}{2}\backslash E_i}$ are independent $\{0,1\}$ random variables with $\Pr(X_{u,w}=1)=\frac{1}{Z}\dist{u}{w}^{-2}$. It follows that $\E[X]\leq 2|S|\leq 2U$ and therefore  by Lemma~\ref{lem:chernoff} we obtain that   
\[\mbox{$\Pr_{\Gc_{n,r}}$}\big(Y_{in}\geq 40U\,\big|\,\mathcal{E}_i\big)\leq \exp\big(-40 U\big).\]
Combining this with \eqref{eq:v3rv3g33d3f} and \eqref{eq:vrv34gb354htb3}, we obtain \eqref{eq:rtet}, concluding the proof of Lemma~\ref{lem:c1c2fwq}.
\end{proof}

Using Lemma~\ref{lem:c1c2fwq}, we can now prove Lemma~\ref{lem:avgdegtwo} by a union bound over all connected sets.
\begin{lemavgdegtwo}
\statelemavgdegtwo
\end{lemavgdegtwo}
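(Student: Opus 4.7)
The plan is to take a union bound over all connected sets $S\subseteq V$, combining the per-set bound of Lemma~\ref{lem:c1c2fwq} with a bound on the expected number of connected sets of each size. For the counting bound, I will apply Lemma~\ref{lem:numcon} with $\ell=1$, so that the $\ell$-partition consists of singleton boxes and ``box-like sets'' are arbitrary subsets of $V$; the conclusion then yields that, writing $W_k$ for the number of connected subsets of $V$ of size $k$ in $G\sim\Gc_{n,r}$, one has $\E_{\Gc_{n,r}}[W_k]\leq n^2\cdot 40^k$.

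I would then set $M:=150$ and, for each set $S$, let $B_S$ denote the event $\sum_{v\in S}d_v> M\max\{|S|,\log n\}$. Note that $B_S$ is not independent of the event that $S$ is connected, so I will decompose the joint probability as $\mbox{$\Pr_{\Gc_{n,r}}$}(S\text{ connected})\cdot \mbox{$\Pr_{\Gc_{n,r}}$}(B_S\mid S\text{ connected})$; summing over $S$ and grouping by $k=|S|$ gives
\[
\mbox{$\Pr_{\Gc_{n,r}}$}\bigl(\exists\text{ connected }S\text{ with }B_S\bigr)\leq \sum_{k\geq 1}\E_{\Gc_{n,r}}[W_k]\cdot \max_{|S|=k}\mbox{$\Pr_{\Gc_{n,r}}$}(B_S\mid S\text{ connected}).
\]
By Lemma~\ref{lem:c1c2fwq} the conditional probability is at most $\exp(-30\max\{k,\log n\})$, so the right-hand side is bounded by $\sum_{k\geq 1} n^2\cdot 40^k\cdot \exp(-30\max\{k,\log n\})$.

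Splitting the sum at $k=\log n$ and using that $\log 40$ is far smaller than $30$, both regimes yield polynomially-small contributions: for $k\leq \log n$ each term is at most $n^2\cdot n^{\log 40}\cdot n^{-30}$, and for $k>\log n$ each term is at most $n^2\cdot (40e^{-30})^k$, a rapidly decaying geometric series whose first term is already $n^{-\Omega(1)}$. The total is easily $O(1/n^2)$ (with room to spare). I do not anticipate a substantive obstacle: the argument is a direct union bound, and the exponent $30$ from Lemma~\ref{lem:c1c2fwq} dominates $\log 40$ with wide margin, which is precisely what allows the crude counting bound of Lemma~\ref{lem:numcon} at $\ell=1$ to suffice. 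The only mild subtleties are to verify that Lemma~\ref{lem:numcon} genuinely applies to \emph{all} connected sets via the trivial $1$-partition, and that the case split at $k=\log n$ matches the $\max\{|S|,\log n\}$ appearing in the statement.
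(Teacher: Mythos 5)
Your proposal is correct and follows essentially the same route as the paper's proof: a union bound over connected sets grouped by size, combining the conditional tail bound of Lemma~\ref{lem:c1c2fwq} with the counting bound of Lemma~\ref{lem:numcon} applied with $\ell=1$ (the trivial singleton-box partition), exactly as the paper does. The exponent comparison ($30$ versus $\log 40$) and the split at $k=\log n$ match the paper's calculation, so there is nothing to add.
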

\begin{proof}
We will show the lemma with $M=150$. Let $G=(V,E)\sim \Gc_{n,r}$. By Lemma~\ref{lem:c1c2fwq}, all sets $S\subseteq V$ satisfy
\begin{equation}\label{eq:34g3423rf}
\mbox{$\Pr_{\Gc_{n,r}}$}\Big(\mbox{$\sum_{v\in S}$}\,d_v\geq 150\max\{|S|,\log n\}\,\Big|\, \mbox{$S$ is connected}\Big)\leq \e^{-30 \max\{|S|,\log n\}}.
\end{equation}

For an integer $q\geq 1$, let $\mathcal{E}_q$ be the event that there exists a set $S$ with $|S|=q$ which is connected in $G$ but $\sum_{v\in S}d_v\geq 150\max\{|S|,\log n\}$. To prove the lemma, it  suffices to show that 
\begin{equation}\label{eq:ggreg45g}
\mbox{$\Pr_{\Gc_{n,r}}$}\bigg(\bigcup_{1\leq q\leq N} \mathcal{E}_q\bigg)\leq 1/n^2
\end{equation}
By Lemma~\ref{lem:numcon} with $\ell=1$ (and the trivial partition where every vertex of the torus is a ``box"), we have that the number $W_q$ of sets $S$ with $|S|=q$ that are connected in $G$ satisfies the bound
\begin{equation}\label{eq:wqell42342}
\E_{\Gc_{n,r}}[W_q]\leq n^2 40^{q}.
\end{equation}
Let $\mathcal{F}_q$ denote the set of subsets $S\subseteq V$ with $|S|=q$. Then, using \eqref{eq:34g3423rf} and~\eqref{eq:wqell42342}, we have
\begin{align*}
\mbox{$\Pr_{\Gc_{n,r}}$}(\mathcal{E}_q)&\leq \sum_{S\in \mathcal{F}_q} \mbox{$\Pr_{\Gc_{n,r}}$}(\mbox{$S$ is connected})\, \mbox{$\Pr_{\Gc_{n,r}}$}\Big(\mbox{$\sum_{v\in S}$}\,d_v\geq 150\max\{|S|,\log n\}\,\Big|\, \mbox{$S$ is connected}\Big)\\
&\leq \e^{-30 \max\{|S|,\log n\}}\sum_{S\in \mathcal{F}_q} \mbox{$\Pr_{\Gc_{n,r}}$}(\mbox{$S$ is connected})=\e^{-30 \max\{|S|,\log n\}}\E_{\Gc_{n,r}}[W_q]\\
&\leq  n^2 \e^{-30 \max\{|S|,\log n\}}40^{q}\leq n^{-10}.
\end{align*}
By a union bound over the possible values of $q$, we therefore obtain \eqref{eq:ggreg45g}, as wanted.
\end{proof}

\subsection{Conductance bounds -- Proof of Lemma~\ref{lem:conreqtwo}}\label{sec:conreqtwo}
In this section, we prove Lemma~\ref{lem:conreqtwo}, which we restate here for convenience.
\begin{lemconreqtwo}
\statelemconreqtwo
\end{lemconreqtwo}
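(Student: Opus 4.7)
The plan is to mirror the proof of Lemma~\ref{lem:conrltwo}, adjusting for two features of the $r=2$ regime: the partition side length $\ell=\left\lceil \ell_0(\log n)^{1/2}\right\rceil$ now grows with $n$, and the box-set expansion bound in Lemma~\ref{lem:bigexpansiontwo} is on the vertex boundary $\partialv S$ and depends on $\alpha=|S|/N$. First I would take a union bound over the events in Lemmas~\ref{lem:numedges}, \ref{lem:bigexpansiontwo}, and~\ref{lem:avgdegtwo}, so that with probability $1-O(1/n^2)$ the graph $G$ simultaneously satisfies $|E|\leq 3N$; every box-like set with $\alpha\leq 99/100$ has $|\partialv S|\geq c|S|\log(1/\alpha)/\log n$; and every set $S$ connected in $G$ has $\sum_{v\in S} d_v\leq M\max\{|S|, \log n\}$. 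For a connected $S$ with $\pi(S)\leq 1/2$, $\sum_{v\in S} d_v\leq|E|$ together with $d_v\geq 4$ yield $\alpha\leq 3/4$, hence $\log(1/\alpha)\geq\log(4/3)>0$, and the standard inequality $\Phi(S)\geq|\partial S|/\sum_{v\in S}d_v\geq |\partial S|/(M\max\{|S|,\log n\})$ is available.

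For small sets $|S|\leq(\log n)^2$, the torus isoperimetric inequality (Theorem~\ref{thm:torusexpansion}) gives $|\partial S|\geq|\partial^* S|\geq 2\sqrt{|S|}$, from which $\Phi(S)=\Omega(1/\log n)$; since $\log(1/\alpha)\leq\log N\leq 3\log n$, this exceeds $\tau\log(1/\alpha)/(\log n)^2$ for $\tau$ small. For large sets $|S|>(\log n)^2$, the main idea is to let $\eta$ in Lemma~\ref{lem:torusdecom2} depend on $\alpha$: set $\eta=\min\{1/2,(c/4)\log(1/\alpha)/\log n\}$ and split on the resulting dichotomy.

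In Case A ($|\boxcore(S)|<(1-\eta)|S|$), Lemma~\ref{lem:torusdecom2} gives $|\partial S|\geq|\partial^* S|\geq \eta|S|/(4\ell^2)$; using $\ell^2=O(\log n)$, both branches of $\eta$ yield $|\partial S|/|S|\geq\Omega(\min\{1/\log n,\log(1/\alpha)/(\log n)^2\})$, which is always $\Omega(\log(1/\alpha)/(\log n)^2)$ since $\log(1/\alpha)=O(\log n)$. In Case B ($|\boxcore(S)|\geq(1-\eta)|S|$), let $S'=\boxcore(S)$; since $\alpha':=|S'|/N\leq \alpha\leq 99/100$, Lemma~\ref{lem:bigexpansiontwo} applies to $S'$ and gives $|\partialv S'|\geq c(1-\eta)|S|\log(1/\alpha)/\log n$. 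The key observation is that every vertex $w\in\partialv S'\setminus S$ has an edge into $S'\subseteq S$ and hence belongs to $\partialv S$; this gives
\[|\partial S|\geq |\partialv S|\geq |\partialv S'|-|S\setminus S'|\geq c(1-\eta)|S|\log(1/\alpha)/\log n-\eta|S|.\]
Since $\eta\leq 1/2$ and $\eta\log n\leq(c/4)\log(1/\alpha)$ by choice, a short calculation gives $|\partial S|\geq (c/4)|S|\log(1/\alpha)/\log n$. Plugging the Case A and Case B lower bounds into $\Phi(S)\geq|\partial S|/(M|S|)$ yields the stated conductance bound for some constant $\tau>0$.

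The main obstacle is the calibration of $\eta$: it has to be large enough that the torus bound $\eta|S|/(4\ell^2)$ in Case A contributes $\Omega(\log(1/\alpha)/(\log n)^2)|S|$, yet small enough in Case B that the additive error $\eta|S|$ does not dominate the vertex-expansion term $\Theta(|S|\log(1/\alpha)/\log n)$ coming from Lemma~\ref{lem:bigexpansiontwo}; the $\alpha$-dependent choice above precisely interpolates between the two regimes. Relying on the vertex expansion of the box-core $S'$ (rather than, say, an edge-expansion bound on some box-like hull containing $S$) is crucial, because replacing $S$ by its hull could blow up the set size by a factor $\Theta(\log n)$ and lose the required $\log(1/\alpha)/\log n$ factor in the final conductance estimate.
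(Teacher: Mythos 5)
Your proof is correct and follows essentially the same route as the paper's: both combine the box-core dichotomy of Lemma~\ref{lem:torusdecom2} with the vertex expansion of the box-core from Lemma~\ref{lem:bigexpansiontwo} and the degree bound of Lemma~\ref{lem:avgdegtwo}, then divide by $M\max\{|S|,\log n\}$. Your adaptive choice $\eta=\min\{1/2,(c/4)\log(1/\alpha)/\log n\}$ merges the paper's three-way case analysis (its Case~I together with the two subcases of its Case~II) into a slightly cleaner two-case argument, but the underlying ideas and estimates are identical.
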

\begin{proof}
Let $c,\ell_0>0$ be the constants in Lemma~\ref{lem:bigexpansiontwo} and $M>0$ be the constant in Lemma~\ref{lem:avgdegtwo}. Let $\ell=\left\lceil \ell_0(\log n)^{1/2}\right\rceil$ and note that for all sufficiently large $n$ we have $\ell\leq 2\ell_0(\log n)^{1/2}$.

By taking a union bound over the events in Lemmas~\ref{lem:numedges},~\ref{lem:bigexpansiontwo} and~\ref{lem:avgdegtwo}, we obtain that the following hold for all sufficiently large $n$ and an arbitrary $\ell$-partition of the torus with probability $1-O(1/n^2)$ over the choice of $G=(V,E)\sim \Gc_{n,r}$:
\begin{enumerate}
\item \label{it:edg24} $2N\leq |E|\leq 3N$.
\item \label{it:eeff} every box-like set $S$ with $|S|=\alpha N$ and $\alpha\in(0,99/100]$ satisfies $|\partialv S|\geq c|S|\frac{\log(1/\alpha)}{\log n}$.
\item \label{it:avgd435} every set $S\subseteq V$ which is connected in $G$ satisfies $\sum_{v\in S}d_v\leq M\max\{|S|,\log n\}$.
\end{enumerate}
Let $G$ be an arbitrary graph satisfying Items~\ref{it:edg24},~\ref{it:eeff} and~\ref{it:avgd435}. We only need to show that $G$ satisfies the conclusions of the lemma (for some appropriate constant $\tau>0$). Note, we may assume w.l.o.g. that $c\in (0,1/100)$ and $M>100$, since Items~\ref{it:eeff} and~\ref{it:avgd435} continue to hold when we decrease the value of $c$ and increase the value of $M$.

By Lemma~\ref{lem:torusdecom2}, we have that  for every set $S\subseteq V$ and any $\eta\in (0,1)$ it holds that 
\begin{equation}\label{eq:torusdecom4t3}
\big|\boxcore(S)\big|\geq (1-\eta)|S|, \mbox{ or else } |\partial^* S|\geq \eta|S|/(4\ell^2).
\end{equation}
We will first show that for the constant $\xi=c/(10^{4}\ell_0^2)>0$, it holds that
\begin{equation}\label{eq:finedexge44g4}
|\partial S|/|S|\geq \xi\frac{ \log(1/\alpha)}{(\log n)^2} \mbox{ for any set $S$ with $|S|=\alpha N$ and $\alpha\in (0,99/100)$.}
\end{equation}
To prove \eqref{eq:finedexge44g4}, let $S':=\boxcore(S)$. We have the following case analysis.
\begin{description}
\item[Case I.] Suppose that $|S'|< \tfrac{1}{2}|S|$. Then,  we have by \eqref{eq:torusdecom4t3} (with $\eta=1/2$) that 
\[|\partial S|\geq |\partial^* S|\geq \frac{|S|}{8\ell^2}\geq \frac{|S|}{32\ell_0^2\log n},\]
and therefore  \eqref{eq:finedexge44g4} holds (using that $3\log n\geq \log(1/\alpha)$).

\item[Case II.]Suppose that  $|S'|\geq \tfrac{1}{2}|S|$. Let $\alpha'=|S'|/N$ and note that $\alpha'\leq \alpha$. By Item~\ref{it:eeff}, we have 
\[|\partialv S'|\geq c|S'|\frac{\log(1/\alpha')}{\log n}\geq \frac{c|S|}{2}\frac{\log(1/\alpha)}{\log n}.\]
 
If $|(\partialv S')\backslash S|\geq \tfrac{1}{2}|\partialv S'|$, then we have that
\[|\partial S|\geq |(\partialv S')\backslash S|\geq \frac{1}{2}|\partialv S'|\geq   \frac{c|S|}{4}\frac{\log(1/\alpha)}{\log n},\]
and therefore  \eqref{eq:finedexge44g4} holds.

If $|(\partialv S')\backslash S|< \tfrac{1}{2}|\partialv S'|$, then we have that $|(\partialv S')\cap S|>\tfrac{1}{2}|\partialv S'|$. Note that $(\partialv S')\cap S\subseteq S\backslash S'$ and hence we conclude that there are more than $\frac{c|S|}{4}\frac{\log(1/\alpha)}{\log n}$ vertices in $S$ which do not belong to the box-core of $S$. Then,  we have by \eqref{eq:torusdecom4t3} (with $\eta=\frac{c \log(1/\alpha)}{4\log n}$) that 
\[|\partial S|\geq |\partial^* S|\geq\frac{c \log(1/\alpha)}{4\log n}\frac{|S|}{4\ell^2},\]  
and therefore  \eqref{eq:finedexge44g4} holds.
\end{description}

We are now ready to bound the conductance $\Phi(S)$ for a connected set $S$ with $\pi(S)\leq 1/2$ (this part of the proof is analogous to the corresponding part in Lemma~\ref{lem:conrltwo}). The assumption $\pi(S)\leq 1/2$ gives that $\sum_{v\in S}d_v\leq |E|\leq 3N$, where the last inequality holds from Item~\ref{it:edges1}. Since $d_v\geq 4$ for any $v\in V$, we have that $|S|\leq 3N/4$. Moreover, we have that $\Phi(S)\geq \frac{|\partial S|}{\sum_{v\in S}d_v}$, see \eqref{eq:phisrl2} for details.

For any connected set $S\subseteq V$ we have by Item~\ref{it:avgd435} that
\begin{equation*}
\sum_{v\in S}d_v\leq M\max\{|S|,\log n\}.
\end{equation*}
Combining this with \eqref{eq:finedexge44g4} yields that $\Phi(S)\geq \frac{\xi}{M}\frac{ \log(1/\alpha)}{(\log n)^2}$ for all connected sets $|S|$ with $|S|>\log n$ and $|S|=\alpha N$. For connected sets $|S|$ with $|S|\leq\log n$ and $|S|=\alpha N$, we have by Theorem~\ref{thm:torusexpansion} that $|\partial S|\geq |\partial^* S|\geq 2\sqrt{|S|}$ and hence $\Phi(S)\geq \frac{2\sqrt{|S|}}{M\log n}\geq \frac{2}{3M}\frac{ \log(1/\alpha)}{(\log n)^2}$ (using that $3\log n\geq \log(1/\alpha)$).

This concludes the proof of Lemma~\ref{lem:conreqtwo} with the constant $\tau:=\min\{\xi/M,2/(3M)\}>0$.
\end{proof}

\section{Lower bounds on the mixing time}\label{sec:lowout}
The lower bounds in Theorem~\ref{thm:main} follow from the following lower bounds on the mixing time. 
\begin{theorem}\label{thm:main2}
Let $r\geq 0$ and $n$ be sufficiently large. Then, with probability $1-O(1/n)$ over the choice of the graph $G\sim\Gc_{n,r}$, the lazy random walk on $G$ satisfies
\[\Tmix=\begin{cases} \Omega(\log n) &\mbox{for }r\leq 2\\ \Omega(n^{r-2}) &\mbox{for }2<r<3\\ \Omega(n/\log n) &\mbox{for }r=3\\ \Omega(n) &\mbox{for }r>3.\end{cases}\]
\end{theorem}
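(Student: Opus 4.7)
The plan is to split at the phase boundary $r=2$: for $r\leq 2$ only $\Omega(\log n)$ is needed and a diameter argument suffices, while for $r>2$ the stronger bounds come from exhibiting an explicit set $S$ of small conductance and invoking $\Tmix\geq C/\Phi$ from Theorem~\ref{lem:mixingtime}.

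For $r\leq 2$, fix any vertex $u_0\in V$ and bound $\E|B(u_0,t)|$ by a branching-process domination of the BFS tree from $u_0$. Revealing the random long-range edges one vertex at a time, independence of the edges implies that the number of new frontier vertices produced when exploring any $v$ is stochastically dominated by $d_v$, whose mean is $5$ in $\Gc_{n,r}$ by construction. Iterating gives $\E[|B(u_0,t)|]\leq \sum_{i=0}^{t}5^{i}\leq 6^{t}$, so taking $t=c\log n$ with $c>0$ sufficiently small and applying Markov's inequality yields $|B(u_0,t)|<N$ with probability $1-O(1/n)$. Hence $\mathrm{diam}(G)\geq t+1=\Omega(\log n)$ with the required probability. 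I then close the case with the standard diameter-to-mixing bound: if $\dist{u}{v}=D$, then for any $t<D/2$ the distributions $P^{t}(u,\cdot)$ and $P^{t}(v,\cdot)$ have disjoint supports, so their total variation distance equals $1$, and by the triangle inequality through $\pib$ at most one of them can be within $1/4$ of $\pib$. This forces $\Tmix\geq \lfloor D/2\rfloor=\Omega(\log n)$.

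For $r>2$, take $S=\{(x,y):|x|,|y|\leq \lfloor n/3\rfloor\}\subseteq B_n$, so $|S|,|V\setminus S|=\Theta(n^{2})$ and the torus contributes $|\partial^* S|=4(2\lfloor n/3\rfloor+1)=\Theta(n)$ boundary edges. For the expected long-range contribution, let $d_u$ denote the torus distance from $u\in S$ to $V\setminus S$; since $Z=\Theta(1)$ by Lemma~\ref{lem:factor} and $|\{v:\dist{u}{v}=\ell\}|\leq 4\ell$, one has
\[
\E\big[|\partial S\setminus \partial^* S|\big]=\frac{1}{Z}\sum_{u\in S,\,v\in V\setminus S}\dist{u}{v}^{-r}\leq O\bigg(\sum_{u\in S}\sum_{\ell\geq d_u}\ell^{1-r}\bigg)=O\bigg(\sum_{u\in S}d_u^{\,2-r}\bigg).
\]
Since at most $O(n)$ vertices of $S$ lie at any given distance $d$ from the complement, this is at most $O(n)\sum_{d=1}^{\lfloor n/3\rfloor}d^{2-r}$, which evaluates to $O(n)$ for $r>3$, $O(n\log n)$ for $r=3$, and $O(n^{4-r})$ for $2<r<3$. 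Chernoff's upper tail (Lemma~\ref{lem:chernoff}) applied to the independent Bernoulli long-range indicators then gives $|\partial S|\leq 2\,\E|\partial S|$ with probability $1-\exp(-\Omega(n))$. Combined with Lemma~\ref{lem:numedges} (so $|E|=\Theta(N)$, and hence both $\sum_{v\in S}d_v$ and $\sum_{v\notin S}d_v$ lie in $[4|S|,2|E|]=\Theta(N)$), this yields $\Phi(S)=\Theta(|\partial S|/N)$, and Theorem~\ref{lem:mixingtime} converts the three regimes of $|\partial S|$ into the claimed lower bounds $\Omega(n^{r-2})$, $\Omega(n/\log n)$, and $\Omega(n)$ on $\Tmix$.

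The main obstacle is bookkeeping rather than any new idea. On the $r\leq 2$ side, formalising the Galton--Watson domination of BFS requires care because the edges are revealed in a graph-dependent order, but this is handled by the standard coupling that conditions on the exploration history and uses independence of the remaining unrevealed long-range edges. On the $r>2$ side, the evaluation of $\sum_{d=1}^{\lfloor n/3\rfloor}d^{2-r}$ must be done cleanly around the phase boundary $r=3$ in order to produce the sharp thresholds, and one must verify that the denominator of $\Phi(S)$ concentrates; conceptually, however, diameter alone kills $r\leq 2$ and a single ``big square'' test set $S$ kills all three regimes of $r>2$.
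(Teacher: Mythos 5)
Your proposal is correct and follows essentially the same route as the paper: a diameter/growth argument plus the standard diameter-to-mixing bound for $r\leq 2$ (the paper's Lemmas~\ref{lem:Tmixgeneral} and~\ref{lem:diambounds}), and for $r>2$ a single explicit $\Theta(n^2)$-sized test set whose expected cut size is computed by summing $\dist{u}{v}^{-r}/Z$ and concentrated via Chernoff, then fed into $\Tmix\geq C/\Phi$ (the paper's Lemma~\ref{lem:torpidconductance} and Corollary~\ref{cor:Tmixrg2}). The only differences are cosmetic: you use a square rather than an $\ell^1$-ball and organize the boundary sum by distance to the complement rather than by swapping the order of summation, and your lower bound on $\sum_{v\notin S}d_v$ via $d_v\geq 4$ is in fact slightly cleaner than the paper's.
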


Our proof of Theorem \ref{thm:main2} upper bounds the 
conductance of the random walk
which implies the same lower bound as in Theorem~\ref{thm:main} 
on the relaxation time (i.e., inverse spectral gap).  For $2<r<4$, 
Crawford and Sly~\cite{CS1} proved, for the random walk on the infinite cluster of
the long-range percolation (LRP) process the relaxation time is $n^{r-2}(\log{n})^{O(1)}$.
Thus, these bounds on the relaxation time are tight within log factors for $2<r<3$. 
The gap in the range $3<r<4$ is an open question, see Section~\ref{sec:discussion}.
For $d=1$ analogous lower bounds were proved for LRP in Benjamini, Berger, and 
Yadin~\cite{ben2008}.

For $r>4$ we conjecture that our model is ``fully local'' as is the
case for LRP.  We expect that one can obtain an improved lower bound of
$\Omega(n^2/\log{n})$ following the approach of \cite{ben2008}:
proving a linear lower bound on the diameter 
as is done by Berger~\cite{NBerger}
for LRP and then applying the Varopoulos-Carne bound~\cite{Varopoulos,Carne} (see also \cite[Proposition 13.11]{Lyons}).

We next give the proof of Theorem~\ref{thm:main2} (which follows  by combining the upcoming Corollaries~\ref{cor:Tmixrg2} and~\ref{cor:Tmixrl2}).

\subsection{Lower bounds for $r>2$}\label{sec:lowergthan2}
In this section, we prove the lower bounds on the mixing time $\Tmix$ for $r>2$ stated in 
Theorem~\ref{thm:main2} for the small-world network.  
Our goal will be to show a set $S$ with small conductance. Let $G=(V,E)\sim \Gc_{n,r}$.
 
Set $L= cn$ for a constant $c\in(0,1]$; later, we will set $c=9/10$. Let $S$ be the set of vertices within distance $\leq L$ from the origin $\rho=(0,0)$. Since $L\leq n$, we have that $|S|=1+4\sum^L_{\ell=1}\ell=2L^2+\Theta(L)$. 
\begin{lemma}\label{lem:torpidconductance}
Let $r>2$ and $n$ be sufficiently large. For any constant $c\in(0,1]$, with probability $1-\exp(-\Omega(n))$ over the choice of the graph $G$, it holds that
\begin{enumerate}
\item $\sum_{v\in S}d_v\leq 8|S|$.
\item $|\partial S| = O(L^{4-r})$  for $2<r<3$, $|\partial S|=O(L\log{L})$ for $r= 3$, and  $|\partial S|=O(L)$ for $r>3$.
\end{enumerate}
\end{lemma}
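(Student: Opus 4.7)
The plan is to prove the two parts separately, in each case reducing to Chernoff concentration applied to a sum of independent edge-indicators.

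For Part 1, I would write $\sum_{v\in S}d_v = 4|S| + 2Y_{\mathrm{in}} + Y_{\mathrm{out}}$, where $Y_{\mathrm{in}}$ and $Y_{\mathrm{out}}$ count respectively the long-range edges with both endpoints in $S$ or with exactly one endpoint in $S$. By the definition of $\Gc_{n,r}$, every vertex has expected long-range degree exactly $1$, so $\E[2Y_{\mathrm{in}}+Y_{\mathrm{out}}]=|S|$, and in particular $\E[Y_{\mathrm{in}}+Y_{\mathrm{out}}]\leq |S|$. The variable $Y_{\mathrm{in}}+Y_{\mathrm{out}}$ is a sum of independent $\{0,1\}$ indicators (one per unordered pair of vertices with at least one endpoint in $S$), so Lemma~\ref{lem:chernoff} gives $Y_{\mathrm{in}}+Y_{\mathrm{out}}\leq 2|S|$ except with probability $\exp(-\Omega(|S|))=\exp(-\Omega(n^2))$; on this event $\sum_{v\in S}d_v \leq 4|S|+2(Y_{\mathrm{in}}+Y_{\mathrm{out}})\leq 8|S|$.

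For Part 2, I would split $|\partial S|=|\partial^*S|+Y_{\mathrm{out}}$. The torus contribution satisfies $|\partial^*S|=O(L)$, since $\partial^*S$ consists of torus edges between the sphere of radius $L$ around $\rho$ (which has $4L$ vertices) and its exterior, each such vertex having torus-degree $4$. For $Y_{\mathrm{out}}$ I first bound the expectation by grouping vertices of $S$ according to their distance to the boundary: for $u\in S$, set $\delta(u):=\min_{w\notin S}\dist{u}{w}$ and observe that the expected number of long-range edges from $u$ leaving $S$ is at most
\[
\frac{1}{Z}\sum_{\ell\geq \delta(u)}4\ell\cdot \ell^{-r}=O\bigl(\delta(u)^{2-r}\bigr),
\]
using $Z=\Theta(1)$ for $r>2$ (Lemma~\ref{lem:factor}) together with the elementary tail bound $\sum_{\ell\geq k}\ell^{1-r}=O(k^{2-r})$. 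The vertices of $S$ with $\delta(u)=k$ lie on the torus sphere of radius $L-k+1$ around $\rho$ and number at most $4(L-k+1)$, hence
\[
\E[Y_{\mathrm{out}}] \leq O\!\left(\sum_{k=1}^{L}(L-k+1)\cdot k^{2-r}\right)\leq O\!\left(L\sum_{k=1}^{L}k^{2-r}\right),
\]
and a standard integral estimate gives $\sum_{k=1}^{L}k^{2-r}=\Theta(L^{3-r})$ for $2<r<3$, $\Theta(\log L)$ for $r=3$, and $\Theta(1)$ for $r>3$, producing the three stated upper bounds on $\E[Y_{\mathrm{out}}]$.

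To finish, since $Y_{\mathrm{out}}$ is itself a sum of independent $\{0,1\}$ indicators (one per pair $(u,w)$ with $u\in S$, $w\notin S$, $u$ and $w$ non-adjacent in the torus), Lemma~\ref{lem:chernoff} gives $\Pr(Y_{\mathrm{out}}\geq 2\E[Y_{\mathrm{out}}])\leq \exp(-\Omega(\E[Y_{\mathrm{out}}]))$. In each of the three regimes $\E[Y_{\mathrm{out}}]=\Omega(L)=\Omega(n)$, so this tail is $\exp(-\Omega(n))$, and combining with $|\partial^*S|=O(L)$ yields the claimed bounds on $|\partial S|$. No single step is conceptually deep; the main care goes into (i) turning the per-vertex long-range escape probability into the closed form $O(\delta(u)^{2-r})$ using $Z=\Theta(1)$ and the $r>2$ tail estimate, and (ii) checking that $\E[Y_{\mathrm{out}}]=\Omega(n)$ uniformly across the three regimes so that Chernoff supplies the required $\exp(-\Omega(n))$ failure probability.
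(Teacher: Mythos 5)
Your proof is correct and follows essentially the same route as the paper's: the decomposition into torus edges plus $Y_{in}$ and $Y_{out}$, a crude Chernoff bound for Part 1, and for Part 2 a refined bound on $\E[Y_{out}]$ obtained by grouping vertices of $S$ according to their distance from the boundary, followed by Chernoff. The only difference is cosmetic bookkeeping — you package the per-vertex escape expectation as the closed form $O(\delta(u)^{2-r})$ where the paper writes out the double sum and switches the order of summation — and your concluding remark that $\E[Y_{out}]=\Omega(L)$ (needed for the $\exp(-\Omega(n))$ tail) is true and matches the step the paper also uses implicitly.
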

  
\begin{proof}
To simplify notation, all probability bounds in the proof are with respect to the choice of the random graph $G\sim \Gc_{n,r}$. For vertices $u,w\in V$,  let $Y_{u,w}$ be the indicator r.v. that there is a long-range edge from $u$ to $w$. Let $\binom{S}{2}:=\big\{\{u,w\}\mid u,w\in S, u\neq w\big\}$  denote the set of all unordered pairs of vertices in $S$ and define
\[Y_{in}:=\sum_{\{u,w\}\in \binom{S}{2}} Y_{u,w}, \quad Y_{out}:=\sum_{u\in S, w\in V\backslash S}Y_{u,w}.\] 
We have
\begin{equation}\label{eq:upperXXm}
\sum_{v\in S}d_v\leq 4|S|+ 2Y_{in}+Y_{out}, \qquad |\partial S|\leq 16L+Y_{out}.
\end{equation}
To see the first inequality in \eqref{eq:upperXXm}, note that in the sum $\sum_{v\in S}d_v$, the edges of the torus contribute $4|S|$ and the long-range edges contribute $2Y_{in}+Y_{out}$. The second inequality in \eqref{eq:upperXXm} follows similarly, by noting now that the edges of the torus that contribute to $|\partial S|$ must be incident to a vertex at distance $L$ from the origin; the number of such vertices is $4L$ and each of these vertices can (crudely) contribute at most four edges to $|\partial S|$.

Note that for every $u,w\in V$, it holds that $\Pr[Y_{u,w}=1]\leq \frac{1}{Z}\dist{u}{w}^{-r}$. Further, the random variables $\{Y_{u,w}\}_{u,w\in V}$ are independent. Thus, with $\mu_{in}:=\mathbf{E}[Y_{in}]$ and $\mu_{out}:=\mathbf{E}[Y_{out}]$, we obtain by Lemma~\ref{lem:chernoff} that for any $t\geq0$, it holds that
\begin{equation}\label{eq:upperboundS}
\begin{gathered}
{\Pr}(Y_{in}\geq \mu_{in}+t)\leq \exp\Big(-\frac{t^2}{2(\mu_{in}+t/3)}\Big), \quad
{\Pr}(Y_{out}\geq \mu_{out}+t)\leq \exp\Big(-\frac{t^2}{2(\mu_{out}+t/3)}\Big),
\end{gathered}
\end{equation}
We therefore focus on obtaining upper bounds for $\mu_{in},\mu_{out}$.

 To bound $\sum_{v\in S}d_v$, we will need only crude bounds on $\mu_{in}$ and $\mu_{out}$. In particular, since every vertex $u\in S$ adds in expectation one long-range neighbour, we have that $\mu_{in}\leq |S|$ and $\mu_{out}\leq |S|$. Thus, using \eqref{eq:upperboundS} with $t=|S|/3$, a union bound gives that 
\begin{equation}\label{eq:v34frv38678}
\mbox{ with probability $1-\exp(-\Omega(n^2))$, it holds that $2Y_{in}+Y_{out}\leq 4|S|$}.
\end{equation}

To obtain a more precise bound for $|\partial S|$, we will need a more refined upper bound for  $\mu_{out}$. Let $u\in S$ be at distance $i>0$ from the origin $\rho$, so that $i\leq L$. Clearly, all vertices within distance $L-i$ from $u$ belong to $S$ and hence every vertex $w\in V\backslash S$ must satisfy $\dist{u}{w}\geq L-i+1$. Note that the number of vertices at distance $\ell$ from $u$ is $4\min\{\ell,2n+1-\ell\}\leq 4\ell$, and hence we obtain the bound
\[\sum_{w\in V\backslash S}\Pr[Y_{u,w}=1]\leq \frac{4}{Z}\sum^{2n}_{\ell=L-i+1}\frac{1}{\ell^{r-1}}.\]
Since there are $4i$ vertices at distance $i>0$ from the origin, we obtain (by using the trivial bound that the origin has in expectation one long-range edge incident to it) that 
\[\mu_{out}\leq  1+\frac{16}{Z}\sum^{L}_{i=1} \sum^{2n}_{\ell=L-i+1}\frac{i}{\ell^{r-1}}=1+\frac{16}{Z} \left(\sum^{L}_{\ell=1}\sum^{L}_{i=L-\ell+1}\frac{i}{\ell^{r-1}}+\sum^{2n}_{\ell=L+1}\sum^{L}_{i=1}\frac{i}{\ell^{r-1}}\right),\] 
where the last equality follows by switching the order of summation. Since $r>2$, by Lemma~\ref{lem:factor} we have that $Z=\Theta(1)$. Note also that 
\[\sum^{L}_{\ell=1}\sum^{L}_{i=L-\ell+1}\frac{i}{\ell^{r-1}}\leq \sum^{L}_{\ell=1}\frac{L}{\ell^{r-2}}=\begin{cases}
    O(L^{4-r})  & \mbox{ for } 2<r<3 \\
    O(L\log{L})  & \mbox{ for } r= 3 \\
    O(L)  & \mbox{ for } r>3, \end{cases}\]
and, for all $r>2$,
\[\sum^{2n}_{\ell=L+1}\sum^{L}_{i=1}\frac{i}{\ell^{r-1}}\leq \sum^{2n}_{\ell=L+1}\frac{L^2}{\ell^{r-1}}\leq (2n-L)L^{3-r}=O(L^{4-r}).\]
It follows that 
\[\mu_{out}= \begin{cases}
    O(L^{4-r})  & \mbox{ for } 2<r<3 \\
    O(L\log{L})  & \mbox{ for } r= 3 \\
    O(L)  & \mbox{ for } r>3. \end{cases}\]
Using that $L=\Omega(n)$,  we obtain by \eqref{eq:upperboundS} that 
\begin{equation}\label{eq:rv34edh6}
\mbox{with probability $1-\exp(-\Omega(n))$, it holds than $Y_{out}\leq 2\mu_{out}$}.
\end{equation} 
Combining \eqref{eq:v34frv38678} and \eqref{eq:rv34edh6} by a union bound, and  plugging them in \eqref{eq:upperXXm}, we obtain the statement of the lemma. 
\end{proof}
We are now ready to obtain lower bounds on the mixing time for $r>2$.
\begin{corollary}\label{cor:Tmixrg2}
Let $r>2$ and $n$ be sufficiently large. Then, with probability $1-\exp(-\Omega(n))$ over the choice of the graph of the graph $G\sim\Gc_{n,r}$, the lazy random walk on $G$ satisfies
\[\Tmix=\begin{cases} \Omega(n^{r-2}) &\mbox{for }2<r<3\\ \Omega(n/\log n) &\mbox{for }r=3\\ \Omega(n) &\mbox{for }r>3.\end{cases}\]
\end{corollary}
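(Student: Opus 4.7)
The plan is to exhibit a set $S$ of low conductance and then invoke the lower bound $\Tmix \geq C/\Phi$ from Theorem~\ref{lem:mixingtime}, so that an upper bound on $\Phi(S)$ translates into a lower bound on $\Tmix$. The natural candidate is the one already set up before Lemma~\ref{lem:torpidconductance}: take $L = 9n/10$ and let $S$ be the set of vertices of $T$ at distance at most $L$ from the origin. Then $|S| = 2L^2 + \Theta(L) = \Theta(n^2)$ and, crucially, $|V \setminus S| = \Theta(n^2)$ as well, since $L$ is only a constant fraction of $n$.

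First I would perform a union bound to land, with probability $1 - \exp(-\Omega(n))$, in the intersection of the events of Lemmas~\ref{lem:numedges} and~\ref{lem:torpidconductance} applied to $S$. On this event, the following hold simultaneously:
\begin{itemize}
\item $|E| = 5N/2 + \Theta(N^{3/4}) = \Theta(n^2)$;
\item $\sum_{v \in S} d_v \leq 8|S| = O(n^2)$, and also $\sum_{v \in S} d_v \geq 4|S| = \Omega(n^2)$ from the torus degree alone;
\item $\sum_{v \notin S} d_v \geq 4|V \setminus S| = \Omega(n^2)$;
\item $|\partial S|$ obeys the case-dependent bound of Lemma~\ref{lem:torpidconductance}: $O(L^{4-r})$, $O(L \log L)$, or $O(L)$ in the regimes $2 < r < 3$, $r = 3$, $r > 3$ respectively.
\end{itemize}

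Next, I would plug these into the definition of $\Phi(S)$ in \eqref{eq:normcond}. The denominator is
\[
\frac{1}{2|E|}\Bigl(\sum_{v \in S} d_v\Bigr)\Bigl(\sum_{v \notin S} d_v\Bigr) = \Omega\!\left(\frac{n^2 \cdot n^2}{n^2}\right) = \Omega(n^2),
\]
and therefore $\Phi(S) \leq O(|\partial S|/n^2)$. Substituting the three cases, with $L = \Theta(n)$, gives $\Phi(S) = O(n^{2-r})$ for $2 < r < 3$, $\Phi(S) = O((\log n)/n)$ for $r = 3$, and $\Phi(S) = O(1/n)$ for $r > 3$. Since $\Phi \leq \Phi(S)$, the lower bound $\Tmix \geq C/\Phi$ of Theorem~\ref{lem:mixingtime} yields the stated asymptotics $\Omega(n^{r-2})$, $\Omega(n/\log n)$, and $\Omega(n)$ respectively.

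There is no real obstacle: the hard work is already encapsulated in Lemma~\ref{lem:torpidconductance}, which controls both the cut $|\partial S|$ and the interior edge count (through $\sum_{v \in S} d_v \leq 8|S|$). The only mild care needed is (a) to combine the relevant high-probability events into a single event of probability $1 - \exp(-\Omega(n))$ via a union bound, and (b) to observe that because $L$ is a \emph{constant} fraction of $n$, both $S$ and $V \setminus S$ have $\Theta(n^2)$ vertices, so the denominator of $\Phi(S)$ is $\Theta(n^2)$ rather than being dominated by the smaller side; this is what makes the conductance upper bound match the desired mixing-time lower bound exactly.
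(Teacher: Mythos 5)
Your proposal is correct and follows essentially the same route as the paper: the same set $S$ (the ball of radius $L=9n/10$), the same union bound over Lemmas~\ref{lem:numedges} and~\ref{lem:torpidconductance}, the same conductance computation, and the conclusion via the lower bound in Theorem~\ref{lem:mixingtime}. The only cosmetic difference is that you lower-bound $\sum_{v\notin S}d_v$ directly by $4|V\setminus S|$ from the torus degrees, whereas the paper deduces it from $2|E|-8|S|$; both work equally well.
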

\begin{proof}
Let $L=cn$, where $c=9/10$. Let $G\sim \Gc_{n,r}$ and, as before, let $S$ be the set of vertices at distance $\leq L$ from the origin. Recall that $|S|=2L^2+\Theta(L)=2c^2n^2+\Theta(n)=\frac{1}{2}c^2N+o(N)$, where $N=(2n+1)^2$ is the number of vertices in $G$.

By a union bound over the events in Lemmas~\ref{lem:numedges} and~\ref{lem:torpidconductance}, we have  with probability $1-\exp(-\Omega(n))$ that $2N\leq |E|\leq 3N$ and that the set $S$ satisfies 
\begin{enumerate}
\item \label{it:mass}$\sum_{v\in S}d_v\leq 8|S|$.
\item \label{it:cutedges}$|\partial S| = O(L^{4-r})$  for $2<r<3$, $|\partial S|=O(L\log{L})$ for $r= 3$, and  $|\partial S|=O(L)$ for $r>3$.
\end{enumerate}
By Item~\ref{it:mass} and the choice of the constant $c$, we have that 
\[\frac{\sum_{v\notin S}d_v}{2|E|}\geq\frac{2|E|-8|S|}{2|E|}\geq 1-\frac{2|S|}{|N|}\geq 1/10\] for all sufficiently large $n$. Thus, using Item~\ref{it:cutedges} and the (deterministic) bound $\sum_{v\in S}d_v\geq 4|S|$, we obtain that the conductance of the set $S$ is bounded by
\[\Phi(S)=\frac{|\partial S|}{\frac{1}{2|E|}\big(\sum_{v\in S}d_v\big)\big(\sum_{v\notin S}d_v\big)}=\begin{cases} O(n^{2-r})&\mbox{for } 2<r<3\\ O(\log n/n)&\mbox{for } r=3\\ O(1/n)&\mbox{for } r>3.\end{cases}\]
It follows that the conductance $\Phi$ of the simple random walk on $G$ satisfies $\Phi\leq \Phi(S)$, and hence the corollary follows from Theorem~\ref{lem:mixingtime}.
\end{proof}

\subsection{Simple lower bound for $r\leq 2$ via diameter}\label{sec:rlessthan2}

Our lower bounds on the mixing time for $r\in [0,2]$ are based on the following well-known lemma. For a (connected) graph $G$, we denote by $\diam(G)$ the largest distance between any two vertices in~$G$.
\begin{lemma}\label{lem:Tmixgeneral}
Let $G=(V,E)$ be a connected undirected graph. Then, the mixing time $\Tmix$ of the lazy random walk on $G$ satisfies $\Tmix\geq \diam(G)/3$.
\end{lemma}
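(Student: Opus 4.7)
The plan is a standard ``can't mix faster than light'' argument. Let $D=\diam(G)$ and pick vertices $u,v\in V$ with $\dist{u}{v}=D$. For an integer $t\geq 0$ and a vertex $w\in V$, let $B_t(w):=\{x\in V:\dist{w}{x}\leq t\}$ denote the ball of radius $t$ around $w$ in $G$. The key geometric observation is that a single step of the lazy random walk either stays put or moves to an adjacent vertex, so after $t$ steps the walker moves graph-distance at most $t$ from its starting location. In particular, for every starting vertex $w$, the measure $P^t(w,\cdot)$ is supported on $B_t(w)$, and therefore $P^t(w,B_t(w))=1$.

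Now fix any integer $t<D/2$. By the triangle inequality, the balls $B_t(u)$ and $B_t(v)$ are disjoint (otherwise we would have $\dist{u}{v}\leq 2t<D$, a contradiction). Consequently
\[
\pi(B_t(u))+\pi(B_t(v))\leq 1,
\]
so at least one of the two, say $B_t(u)$ (by symmetry one can pick whichever is smaller), has $\pi$-mass at most $1/2$. Using this set $A:=B_t(u)$ as a test set, the variational characterisation of total variation distance gives
\[
\norm{P^t(u,\cdot)-\pib}_{\mathrm{TV}}\geq P^t(u,A)-\pi(A)=1-\pi(A)\geq \tfrac{1}{2}>\tfrac{1}{4}.
\]

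Since this holds for every $t<D/2$, the definition of $\Tmix$ (worst start, $1/4$ in TV) forces $\Tmix\geq \lceil D/2\rceil\geq D/3$, which is the desired bound. There is no real obstacle here: once the ``$t$-step support is contained in $B_t(\cdot)$'' observation is made explicit, the rest is just the disjointness of the two balls and a one-line TV lower bound. The only minor subtlety is the integer rounding in passing from ``$t<D/2$'' to a lower bound on $\Tmix$, which is why the weaker factor $1/3$ (rather than $1/2$) is comfortably sufficient and avoids any edge case when $D$ is odd or small.
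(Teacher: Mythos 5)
Your proof is correct and follows essentially the same route as the paper's: both pick two vertices at distance $\diam(G)$, note that after $t<\diam(G)/2$ lazy steps the two walks have disjoint supports, and conclude that at least one of them is at total variation distance at least $1/2$ from $\pi$. The only cosmetic difference is that you lower-bound the TV distance via a test set (one of the two balls, which must have $\pi$-mass at most $1/2$), whereas the paper applies the triangle inequality to $\norm{P^{T}(u,\cdot)-P^{T}(v,\cdot)}_{\mathrm{TV}}=1$; these are equivalent one-line arguments.
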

\begin{proof}
Let $L=\diam(G)$ and let $T=\left\lfloor (L-1)/2\right\rfloor$. Consider two vertices $u,v$ whose graph distance is $L$. Then, observe that the vectors $P^{T}(u,\cdot)$ and $P^{T}(v,\cdot)$ are supported on disjoint subsets of $V$ and hence
\[\norm{P^{T}(u,\cdot)-P^{T}(v,\cdot)}_{\mathrm{TV}}=1.\]
By the triangle inequality, we have that either $\norm{P^{T}(u,\cdot)-\pib}_{\mathrm{TV}}\geq 1/2$ or $\norm{P^{T}(v,\cdot)-\pib}_{\mathrm{TV}}\geq 1/2$, thus proving that $\Tmix\geq T\geq L/3$.
\end{proof}

\begin{lemma}\label{lem:diambounds}
Let $r\in[0,2]$ and $n$ be a sufficiently large positive integer. Then, with probability $1-O(1/n)$ over the choice of the graph $G\sim \Gc_{n,r}$, it holds that $\diam(G)=\Omega(\log n)$.
\end{lemma}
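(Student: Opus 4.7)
The plan is to bound the diameter from below by a first-moment calculation on ball sizes. Fix any vertex $v$ (say, the origin $\rho$), and let $B_\ell(v)$ denote the ball of radius $\ell$ in $G$ around $v$. For each integer $k\geq 0$, let $W_k(v)$ denote the (random) number of length-$k$ walks in $G$ starting at $v$. Since every vertex at graph distance at most $\ell$ from $v$ is the endpoint of some walk from $v$ of length $\leq \ell$, we have the crude bound $|B_\ell(v)| \leq \sum_{k=0}^{\ell} W_k(v)$.

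The key observation I would exploit is that, by independence of the edge indicators $Y_{u,w}$ in the model $\Gc_{n,r}$, the expected number of walks of length $k$ from $v$ equals the $v$-th row sum of $A^k$, where $A_{u,w} := \E_{\Gc_{n,r}}[Y_{u,w}]$. Now each vertex has exactly four torus neighbours (each contributing $1$ to the row sum), while the non-torus contributions telescope against the normaliser $Z = \sum_{w:\dist{\rho}{w}\geq 2}\dist{\rho}{w}^{-r}$ to give $\sum_{w:\dist{v}{w}\geq 2}\dist{v}{w}^{-r}/Z = Z/Z = 1$. Hence every row of $A$ sums to exactly $5$, so $\E_{\Gc_{n,r}}[W_k(v)] = 5^k$ for all $k\geq 0$, irrespective of $r$. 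Consequently $\E_{\Gc_{n,r}}[|B_\ell(v)|] \leq 2\cdot 5^\ell$.

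Markov's inequality then gives $\mbox{$\Pr_{\Gc_{n,r}}$}(|B_\ell(v)|\geq N) \leq 2\cdot 5^\ell/N$. Choosing $\ell := \lfloor c\log n\rfloor$ for a sufficiently small constant $c < 2/\log 5$ makes the right-hand side $O(1/n)$. On the complementary event there is some vertex $w\notin B_\ell(v)$, so the graph distance from $v$ to $w$ in $G$ exceeds $\ell$, yielding $\diam(G) > \ell = \Omega(\log n)$. Combined with Lemma~\ref{lem:Tmixgeneral}, this produces the claimed $\Omega(\log n)$ mixing-time lower bound for $r\leq 2$ in Theorem~\ref{thm:main2}. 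There is no substantive obstacle here: the whole argument is a direct first-moment/Markov calculation, and the only step that genuinely needs checking is the row-sum identity $\sum_w A_{v,w} = 5$, which falls out immediately from the definition of $Z$. (In fact the same argument works for every $r\geq 0$; the restriction $r\leq 2$ in the statement is simply because a stronger bound is already available for $r>2$ from Corollary~\ref{cor:Tmixrg2}.)
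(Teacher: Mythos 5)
Your overall strategy --- bound the expected size of the ball of radius $\ell$ around a fixed vertex, apply Markov's inequality, and conclude that some vertex lies outside the ball --- is exactly the paper's proof. However, the step on which everything rests is wrong as stated: it is \emph{not} true that the expected number of length-$k$ walks from $v$ equals the $v$-th row sum of $A^k$ with $A_{u,w}=\E[Y_{u,w}]$. A walk may traverse the same random edge more than once, and since $Y_e^2=Y_e$ one has $\E\big[\prod_i Y_{e_i}\big]=\prod_{e\in D}\E[Y_e]$ where $D$ is the set of \emph{distinct} edges used; this is in general strictly larger than $\prod_i \E[Y_{e_i}]$, so the true expectation exceeds the row sum of $A^k$, and the inequality goes in the wrong direction for your purposes. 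Moreover the excess is not a harmless constant: the walks of length $2k$ that simply bounce back and forth between $v$ and its neighbours already number $d_v^{\,k}$, so $\E[W_{2k}(v)]\geq \E[d_v^{\,k}]$, and since the long-range degree of $v$ is a sum of independent Bernoullis with small parameters and total mean $1$ (essentially Poisson with mean $1$), its $k$-th moment is of order $\e^{k\log k(1-o(1))}$. For $k=\Theta(\log n)$ this is $n^{\Theta(\log\log n)}$, superpolynomial in $n$, so Markov's inequality applied to $\sum_{k\leq\ell}W_k(v)$ yields nothing.

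The repair is immediate, and it is what the paper's one-line expectation bound implicitly relies on: bound $|B_\ell(v)|$ by the number of \emph{self-avoiding} walks (simple paths) of length at most $\ell$ starting at $v$, rather than by the number of all walks. A simple path uses distinct edges, so the indicators in the product really are independent and your row-sum computation applies verbatim, giving expected value at most $\sum_{k\leq\ell}5^k\leq 5^{\ell+1}$; the rest of your argument then goes through (with the minor correction that you need $5^\ell=O(n)$, i.e.\ $c<1/\log 5$, to make $5^\ell/N=O(1/n)$, rather than the $c<2/\log 5$ you wrote, which only gives $O(1)$). Your observations that the row sums of $A$ equal exactly $5$ and that the argument is insensitive to the value of $r$ are both correct and match the paper.
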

\begin{proof}
Consider an arbitrary vertex $v$ in $G$. Since the expected degree of every vertex in $G$ is equal to $5$, we have that the expected number of vertices at distance $\leq k$ from $v$ is at most $5^{k+1}$. Applying Markov's inequality for $k=\frac{1}{100}\left\lfloor \log n\right\rfloor$, we obtain that with probability $1-O(1/n)$ over the choice of the graph $G$ there are at most $n^2/2<N$ vertices at distance $\leq k$ from $v$. For all such graphs $G$, we clearly have that $\diam(G)=\Omega(\log n)$, as wanted.
\end{proof}
Combining Lemmas~\ref{lem:Tmixgeneral} and~\ref{lem:diambounds}, we immediately obtain the following corollary for all $r\in [0,2]$.
\begin{corollary}\label{cor:Tmixrl2}
Let $r\in[0,2]$ and $n$ be a positive integer. With probability $1-O(1/n)$ over the choice of the graph $G\sim\Gc_{n,r}$, the lazy random walk on $G$ satisfies
$\Tmix=\Omega(\log n)$.
\end{corollary}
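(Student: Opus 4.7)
The plan is essentially a one-line combination of Lemmas~\ref{lem:Tmixgeneral} and~\ref{lem:diambounds}, together with one trivial observation to bridge them. First I would note that the random graph $G=(V,E)\sim\Gc_{n,r}$ contains, by construction, every edge of the underlying torus $T=(B_n,E_n)$: the randomness in the model only adds long-range edges on top of the torus edges, never removes them. Since the torus is itself connected, $G$ is connected with probability one, so the hypothesis of Lemma~\ref{lem:Tmixgeneral} is satisfied deterministically on every outcome of $\Gc_{n,r}$.

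Next I would invoke Lemma~\ref{lem:diambounds}, which states that for $r\in[0,2]$ and all sufficiently large $n$, with probability $1-O(1/n)$ over $G\sim\Gc_{n,r}$ the diameter satisfies $\diam(G)=\Omega(\log n)$. On this high-probability event, Lemma~\ref{lem:Tmixgeneral} yields
\[
\Tmix\;\geq\;\diam(G)/3\;=\;\Omega(\log n),
\]
which is exactly the bound claimed in the corollary. A union bound is unnecessary here since connectedness holds on \emph{every} outcome and only the diameter estimate is probabilistic.

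There is no substantive obstacle: the corollary really is an immediate consequence of the two preceding lemmas. The only point worth stating explicitly in the write-up is the deterministic containment of the torus in $G$ (so that Lemma~\ref{lem:Tmixgeneral} can be applied without any additional probabilistic caveat about connectivity). Everything else is routine chaining of the probabilistic diameter lower bound with the deterministic mixing-time-vs-diameter inequality.
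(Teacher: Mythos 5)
Your proposal is correct and is exactly the paper's argument: the paper derives the corollary by immediately combining Lemma~\ref{lem:Tmixgeneral} with Lemma~\ref{lem:diambounds}, and your added remark that connectivity holds deterministically (since $G$ always contains the torus) is a valid, if unstated, prerequisite for applying Lemma~\ref{lem:Tmixgeneral}.
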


\section{Discussion}\label{sec:discussion}

Kleinberg's routing results generalize to the $d$-dimensional integer lattice for $d\geq 1$, showing
that there is fast routing when $r=d$ and it is exponentially slower for $r\neq d$.
The techniques in this paper can likely be applied to show a phase transition on the mixing time of the lazy random walk at $r=d$. Similarly, we expect that our techniques also apply to the ``1-out'' version of Kleinberg's model, in which each vertex selects a (directed) long-range edge with appropriate probability and then the direction of the edge is ``forgotten".

There are several intriguing directions to pursue:
\begin{itemize}
\item
Can one establish matching upper and lower bounds for $r=2$? We conjecture that the mixing time is $\Theta((\log n)^2)$. Recall, our result for $r=2$ is an upper bound of $O((\log n)^4)$ and a lower bound of $\Omega(\log n)$.
Achieving a lower bound of $\omega(\log{n})$ for $r=2$ would be quite interesting as it
would establish separation between the $r<2$ and $r=2$ cases.
\item 
For $2<r<3$ our lower bound matches (within poly-log factors) the upper bound 
of Crawford and Sly~\cite{CS1} 
for long-range percolation, as discussed in Section~\ref{sec:lowout}.
However there is a gap for $3<r<4$ as we prove a lower bound of 
$\Omega(n)$ and Crawford-Sly prove $n^{r-2}(\log n)^{O(1)}$.
It is an interesting open question to determine the correct polynomial for this case 
$3\leq r<4$.

\item
It would be very interesting to analyze the {\em directed} version of Kleinberg's small-world network.
One of the challenges here is to understand the stationary distribution of the random walk since
it is no longer reversible. Recent progress on analysing the mixing time of random walks on random sparse digraphs has been made in \cite{BCS,BCS1} (see also \cite{CF}).  
A fascinating open problem for the directed model 
is to determine whether the stationary distribution
is nearly-uniform, i.e., whether it is within $\poly$-$\log$ factors of uniform, for all $r$.  
\end{itemize}

\bibliographystyle{plain}

\begin{thebibliography}{99}

\bibitem{ACKS}
I.~Abraham, S.~Chechik, D.~Kempe, and A.~Slivkins.
Low-distortion Inference of Latent Similarities from a Multiplex Social Network,
{\em SIAM J. Computing}, 44(3):617--668, 2015.


\bibitem{AL}
L.~Addario-Berry and T.~Lei.
\newblock The mixing time of the {N}ewman-{W}atts small-world model.
\newblock {\em Advances in Applied Probability}, 47(1):37--56, 2015.

\bibitem{Aizenman}
M.~Aizenman, H.~Kesten, and C.~M. Newman.
\newblock Uniqueness of the infinite cluster and continuity of connectivity
  functions for short and long range percolation.
\newblock {\em Comm. Math. Phys.}, 111(4):505--531, 1987.

\bibitem{aizenman1986rf}
M.~Aizenman and C.~M. Newman.
\newblock Discontinuity of the percolation density in one-dimensional $1/\vert
  x- y\vert^2$ percolation models.
\newblock {\em Comm. Math. Phys.}, 107(4):611--647, 1986.

\bibitem{BB}
I.~Benjamini and N.~Berger.
\newblock The diameter of long-range percolation clusters on finite cycles.
\newblock {\em Random Structures \& Algorithms}, 19(2):102--111, 2001.

\bibitem{ben2008}
I.~Benjamini, N.~Berger, and A.~Yadin.
\newblock Long-range percolation mixing time.
\newblock {\em Combinatorics, Probability and Computing}, 17(4):487--494, 2008.

\bibitem{benkest04}
I.~Benjamini, H.~Kesten, Y.~Peres, and O.~Schramm.
\newblock Geometry of the uniform spanning forest: Transitions in dimensions
  4,8,12,...
\newblock {\em Annals of Mathematics}, 160(2):465--491, 2004.

\bibitem{BKW}
I.~Benjamini, G.~Kozma, and N.~Wormald.
\newblock The mixing time of the giant component of a random graph.
\newblock {\em Random Structures \& Algorithms}, 45(3):383--407, 2014.

\bibitem{NBerger}
N.~Berger.
\newblock A lower bound for the chemical distance in sparse long-range
  percolation models.
\newblock {\em \href{https://arxiv.org/abs/math/0409021}{arXiv:math/0409021}}, 2004.

\bibitem{biskup2004}
M.~Biskup.
\newblock On the scaling of the chemical distance in long-range percolation
  models.
\newblock {\em Ann. Probab.}, 32(4):2938--2977, 2004.

\bibitem{biskup2011}
M.~Biskup.
\newblock Graph diameter in long-range percolation.
\newblock {\em Random Structures \& Algorithms}, 39(2):210--227, 2011.

\bibitem{bisjin2017}
M.~Biskup and J.~Lin.
\newblock Sharp asymptotic for the chemical distance in long-range percolation.
\newblock {\em \href{https://arxiv.org/abs/1705.10380}{arXiv:1705.10380}}, 2017.

\bibitem{BLtorus}
B.~Bollob\'{a}s and I.~Leader.
\newblock An isoperimetric inequality on the discrete torus.
\newblock {\em SIAM Journal on Discrete Mathematics}, 3(1):32--37, 1990.

\bibitem{BCS}
C.~Bordenave, P.~Caputo, and J.~Salez.
\newblock Random walk on sparse random digraphs.
{\em Probability Theory and Related Fields}, 170(3-4):933--960, 2018.

\bibitem{BCS1}
C.~Bordenave, P.~Caputo, and J.~Salez.
\newblock Cutoff at the ``entropic time" for sparse {M}arkov chains.
{\em Probability Theory and Related Fields}, 2018.

\bibitem{Gossip2}
S.~Boyd, A.~Ghosh, B.~Prabhakar, and D.~Shah. 
Gossip algorithms: Design, analysis and applications. 
In {\em Proceedings of the 24th IEEE International Conference on Computer Communications} (INFOCOM), pages 1653--1664, 2005.

\bibitem{Carne}
T. K. Carne. A transmutation formula for Markov chains. 
{\em Bull. Sci. Math.}, 109:399--405, 1985.

\bibitem{CF}
C.~Cooper and A.~Frieze.
\newblock Stationary distribution and cover time of random walks on random
  digraphs.
\newblock {\em Journal of Combinatorial Theory, Series B}, 102(2):329--362,
  2012.

\bibitem{CGS}
D.~Coppersmith, D.~Gamarnik, and M.~Sviridenko.
\newblock The diameter of a long-range percolation graph.
\newblock {\em Random Structures \& Algorithms}, 21(1):1--13, 2002.

\bibitem{CS1}
N. Crawford and A. Sly.
Simple random walk on long range percolation clusters I: heat kernel bounds.
{\em Probability Theory and Related Fields},
154(3--4):753--786, 2012.


\bibitem{DLP}
J.~Ding, E.~Lubetzky, and Y.~Peres.
\newblock Mixing time of near-critical random graphs.
\newblock {\em Ann. Probab.}, 40(3):979--1008, 2012.

\bibitem{dingsly2013}
J.~Ding and A.~Sly.
\newblock Distances in critical long range percolation.
\newblock {\em \href{https://arxiv.org/abs/1303.3995}{arXiv:1303.3995}}, 2013.

\bibitem{Durrett}
R.~Durrett.
\newblock {\em Random graph dynamics}, volume~20.
\newblock Cambridge University Press, 2007.

\bibitem{flaxman}
A.~D. Flaxman.
\newblock Expansion and lack thereof in randomly perturbed graphs.
\newblock {\em Internet Math.}, 4(2-3):131--148, 2007.

\bibitem{flaxmanfrieze}
A.~D. Flaxman and A.~M. Frieze.
\newblock The diameter of randomly perturbed digraphs and some applications.
\newblock {\em Random Structures \& Algorithms}, 30(4):484--504, 2007.

\bibitem{FKG}
C.~M. Fortuin, P.~W. Kasteleyn, and J.~Ginibre.
\newblock Correlation inequalities on some partially ordered sets.
\newblock {\em Comm. Math. Phys.}, 22(2):89--103, 1971.

\bibitem{FR}
N.~Fountoulakis and B.~A. Reed.
\newblock Faster mixing and small bottlenecks.
\newblock {\em Probability Theory and Related Fields}, 137(3):475--486, 2007.

\bibitem{FRFR}
N.~Fountoulakis and B.~A. Reed.
\newblock The evolution of the mixing rate of a simple random walk on the giant
  component of a random graph.
\newblock {\em Random Structures \& Algorithms}, 33(1):68--86, 2008.

\bibitem{FK}
A.~Frieze and M.~Karo{\'n}ski.
\newblock {\em Introduction to random graphs}.
\newblock Cambridge University Press, 2015.

\bibitem{Harris}
T.~E. Harris.
\newblock A lower bound for the critical probability in a certain percolation
  process.
\newblock {\em Mathematical Proceedings of the Cambridge Philosophical
  Society}, 56(1):13--20, 1960.

\bibitem{JKRS}
S.~Janson, R.~Kozma, M.~Ruszink\'{o}, and Y.~Sokolov.
\newblock Bootstrap percolation on a random graph coupled with a lattice.
\newblock {\em \href{https://arxiv.org/abs/1507.07997}{arXiv:1507.07997}}, 2015.

\bibitem{Gossip1}
D.~Kempe, A.~Dobra, and J.~Gehrke. 
Computing aggregate information using gossip. 
In {\em Proceedings of the 44th IEEE Symposium on 
Foundations of Computer Science} (FOCS), pages 482--491, 2003.

\bibitem{Kleinberg}
J.~Kleinberg.
\newblock The small-world phenomenon: An algorithmic perspective.
\newblock In {\em Proceedings of the 32nd Annual ACM Symposium on Theory of
  Computing} (STOC), pages 163--170, 2000.

\bibitem{kleinberg2006}
J. Kleinberg.
\newblock Complex networks and decentralized search algorithms.
\newblock In {\em Proceedings of the International Congress of Mathematicians}
  (ICM), volume~3, pages 1019--1044, 2006.

\bibitem{KRS}
M.~Krivelevich, D.~Reichman, and W.~Samotij.
\newblock Smoothed analysis on connected graphs.
\newblock {\em {SIAM} J. Discrete Math.}, 29(3):1654--1669, 2015.

\bibitem{LP}
D. A. Levin and Y. Peres.
{\em Markov Chains and Mixing Times}, 2nd edition. American Mathematical Society, 2017.

\bibitem{LKfaster}
L.~Lov\'{a}sz and R.~Kannan.
\newblock Faster mixing via average conductance.
\newblock In {\em Proceedings of the 31st Annual ACM Symposium on Theory of
  Computing} (STOC), pages 282--287, 1999.

\bibitem{Lyons}
R.~Lyons and Y.~Peres.
\newblock {\em Probability on Trees and Networks}, volume~42 of {\em Cambridge
  Series in Statistical and Probabilistic Mathematics}.
\newblock Cambridge University Press, New York, 2016.

\bibitem{NMdiamb}
C.~Martel and V.~Nguyen.
\newblock Analyzing Kleinberg's (and other) small-world models.
\newblock In {\em Proceedings of the Twenty-third Annual ACM Symposium on
  Principles of Distributed Computing} (PODC), pages 179--188, 2004.

\bibitem{Milgram1}
S.~Milgram. The Small World Problem, {\em Psychology Today}, 2:60--67, 1967.

\bibitem{Milgram2}
J.~Travers and S.~Milgram.  
An Experimental Study of the Small World Problem, 
{\em Sociometry}, 32(4):425--443, 1969.


\bibitem{NP}
A.~Nachmias and Y.~Peres.
\newblock Critical random graphs: Diameter and mixing time.
\newblock {\em Ann. Probab.}, 36(4):1267--1286, 2008.

\bibitem{newman1986}
C.~M. Newman and L.~S. Schulman.
\newblock One-dimensional $1/\vert j-i\vert^s$ percolation models: the
  existence of a transition for $s\leq 2$.
\newblock {\em Comm. Math. Phys.}, 104(4):547--571, 1986.

\bibitem{NW}
M.~E.~J. Newman and D.J. Watts.
\newblock Renormalization group analysis of the small-world network model.
\newblock {\em Physics Letters A}, 263(4):341--346, 1999.

\bibitem{NMdiam}
V.~Nguyen and C.~Martel.
\newblock Analyzing and characterizing small-world graphs.
\newblock In {\em Proceedings of the 16th Annual ACM-SIAM Symposium on Discrete
  Algorithms} (SODA), pages 311--320, 2005.

\bibitem{Schul83}
L.~S. Schulman.
\newblock Long range percolation in one dimension.
\newblock {\em Journal of Physics A: Mathematical and General}, 16(17):L639,
  1983.

\bibitem{JS}
A.~Sinclair and M.~Jerrum.
\newblock Approximate counting, uniform generation and rapidly mixing {M}arkov
  chains.
\newblock {\em Information and Computation}, 82(1):93--133, 1989.

\bibitem{Varopoulos}
N. T. Varopoulos.
Long range estimates for Markov chains.
{\em Bull. Sci. Math.}, 109:225--252, 1985.

\end{thebibliography}

\end{document}